%% file: spenner20260700icalp.tex
\documentclass[a4paper,UKenglish,cleveref, autoref, thm-restate, numberwithinsect]{lipics-v2021} % DAS: numberwithinsect added
\title{Deciding DFA-Primality is \fontComplexityClass{NP}-Hard} % DAS: Title added

\author{Daniel Alexander Spenner}{Technische Universität Dortmund, Germany}{daniel.spenner@tu-dortmund.de}{https://orcid.org/0009-0001-2784-5914}{}
\authorrunning{D. A. Spenner}
\Copyright{Daniel Alexander Spenner}
\ccsdesc[100]{Theory of computation~Regular languages}
\ccsdesc[100]{Theory of computation~Problems, reductions and completeness} % DAS: Added ACM 2012 classifications from https://dl.acm.org/ccs/ccs_flat.cfm

\keywords{Deterministic finite automaton (DFA), Regular languages, Finite languages, Decomposition, Primality, NP-Hardness} % DAS: Added comma-separated list of keywords

\category{} %optional, e.g. invited paper

\nolinenumbers %uncomment to disable line numbering

\EventEditors{Sayan Bhattacharya, Danupon Nanongkai, Michael Benedikt, and Gabriele Puppis}
\EventNoEds{4}
\EventLongTitle{53rd International Colloquium on Automata, Languages, and Programming (ICALP 2026)}
\EventShortTitle{ICALP 2026}
\EventAcronym{ICALP}
\EventYear{2026}
\EventDate{July 7--10, 2026}
\EventLocation{Royal Holloway, University of London, Egham, United Kingdom}
\EventLogo{}
\SeriesVolume{374}
\ArticleNo{173}
\input{spenner20260700icalpCommands}
\input{spenner20260700icalpPackages}
\input{spenner20260700icalpTheoremlike}

\begin{document}

\maketitle

% DAS: Abstract added
\begin{abstract}
	A DFA $\mathcal{A}$ is \highlightDef{composite} if there exist DFAs $\mathcal{A}_1,\dots,\mathcal{A}_t$ with $\opLang{\mathcal{A}} = \bigcap_{i=1}^{t} \opLang{\mathcal{A}_i}$ such that each $\mathcal{A}_i$ has strictly less states than the minimal DFA deciding $\opLang{\mathcal{A}}$. Otherwise, it is \highlightDef{prime}.
	
	\probPrimeDFA{} is the problem of deciding primality for a given DFA.
	It was defined by Kupferman and Mosheiff in 2015 and it was shown to be \fontComplexityClass{NL}-hard and in \fontComplexityClass{ExpSpace}.
	
	This paper proves the \fontComplexityClass{NP}-hardness of \probPrimeDFA{}, thereby making the first progress in closing this doubly-exponential gap.
	It proves the \fontComplexityClass{NP}-hardness by a reduction from the propositional logic satisfiability problem.
	The correctness of the reduction relies on an involved characterization of primality for a class of DFAs which contains those that can occur in the reduction.
\end{abstract}

\newpage % DAS: Added so that introduction starts at new page
\input{introduction}
\input{preliminaries}
\input{npHardnessShortV2}
\input{discussion}

%%
%% Bibliography
%%

%% Please use bibtex, 

\bibliography{spenner20260700icalp}

% !!!!!!!!!! VERSIONSTART !!!!!!!!!!
% !!!!!!!!!! FULL VERSION !!!!!!!!!!
\newpage % DAS: Added so that appendix starts at new page
\appendix
\input{npHardness}
% !!!!!!!!!! CONF VERSION !!!!!!!!!!
% NOTHING
% !!!!!!!!!! VERSIONEND  !!!!!!!!!!

\end{document}

%% file: spenner20260700icalpCommands.tex
\newcommand{\symADFA}{\text{ADFA}} %alternative: {\text{ADFA}} % ADFA symbol
\newcommand{\symADFAp}{\text{ADFA}^+} %alternative: {\text{ADFA}^{+}} % pADFA symbol
\newcommand{\symCNFDFA}{CNF-DFA}
\newcommand{\symCVar}{\mu} % Symbol for the variable used to store the number of c's at the end of a max-visiting word.
\newcommand{\symElem}{e}
\newcommand{\symHash}{\#} % hash symbol
\newcommand{\symInd}{\text{ind}} % Ind symbol
\newcommand{\symLinLenVar}{\lambda} % Symbol for the variable used to store the "length" of an ADFA^+.
\newcommand{\symNatNum}{\mathbb{N}} % Natural numbers symbol
\newcommand{\symNatNumGeq}[1]{\mathbb{N}_{\geq #1}} % Natural-numbers-greater-than operator
\newcommand{\symPPLong}{max-pumping-property} % Pumping-property symbol
\newcommand{\symPPShort}{mp-prop} % Symbol for the abbreviation of the pumping property
\newcommand{\symRun}{\chi} % Symbol used to represent runs

\newcommand{\opBigO}[1]{\mathcal{O}(#1)} % Big-O oprator
\newcommand{\opDecompSet}[1]{\alpha(#1)} % Set of DFAs usable in a decomposition operator
\newcommand{\opDiffWords}[1]{\mathcal{K}(#1)} % Difficult words operator
\newcommand{\opInd}[1]{\symInd{}(#1)} % Ind operator
\newcommand{\opLang}[1]{\mathcal{L}(#1)} % Language operator
\newcommand{\opLinLen}[1]{\text{len}(#1)} % Linear length operator (length of the longest word on which an ADFA does not enter a sink)
\newcommand{\opLitToBin}[1]{h(#1)}
\newcommand{\opPump}[2]{{#1}[#2]} % Pumping operator, #1 gets pumped according to #2, where #2 should be of the form i,j;l.
\newcommand{\opSize}[1]{|#1|} % Size operator
\newcommand{\highlightDef}[1]{\textit{#1}} % Definition highlighting
\newcommand{\fontComplexityClass}[1]{\textsc{#1}} % Complexity class font
\newcommand{\fontProblem}[1]{\textsc{#1}} % Problem font

\newcommand{\probCNFSAT}{\fontProblem{CNFSAT}}

\newcommand{\probPrimeADFApMLS}[1]{\fontProblem{#1Prime-}\text{S}\symADFAp}
\newcommand{\probPrimeDFA}[1]{\fontProblem{#1Prime-DFA}}
\newcommand{\probPrimeDFAfin}[1]{\fontProblem{#1Prime-DFA}_\text{fin}}

\newcommand{\probSPrimeDFA}[1]{\fontProblem{#1S-Prime-DFA}}

%% file: spenner20260700icalpPackages.tex
\usepackage{tikz}
\usetikzlibrary{automata, positioning, arrows}
\tikzset{
	node distance=3cm,	% Minimale Distanz zwischen Knoten
	->,	% Gerichtete Kanten
	initial text=$ $, % Setzt initialen Text
}

\crefname{paragraph}{Section}{Sections}

\makeatletter
\newcommand\IfRestateTF{%
  \ifx\label\thmt@gobble@label % or just compared to \@gobble
    \expandafter\@firstoftwo
  \else
    \expandafter\@secondoftwo
  \fi
}
\makeatother
\newcommand{\RestateRemark}{\IfRestateTF{{\normalfont\bfseries (Restated) }}{}}

%% file: spenner20260700icalpTheoremlike.tex
\theoremstyle{definition}
\newtheorem{definitionRoman}[theorem]{Definition}
\theoremstyle{claimstyle}
\newtheorem{claimWithinProof}{Claim}
\AtBeginEnvironment{proof}{\setcounter{claimWithinProof}{0}}

%% file: introduction.tex
\section{Introduction}
\label{sec:introduction}
We consider the primality of deterministic finite automata (DFA). Intuitively, the basic question is: Given a DFA $\mathcal{A}$, are there a number of DFAs strictly smaller than $\mathcal{A}$ itself that combined decide the same language as $\mathcal{A}$?
More formally, a DFA $\mathcal{A}$ is \highlightDef{composite} if there exist DFAs $\mathcal{A}_1,\dots,\mathcal{A}_t$ with $\opLang{\mathcal{A}} = \bigcap_{i=1}^{t} \opLang{\mathcal{A}_i}$ such that the size of every $\mathcal{A}_i$ is strictly smaller than the index of $\mathcal{A}$. Otherwise, $\mathcal{A}$ is \highlightDef{prime}.
Here, the size of $\mathcal{A}$ is the number of its states, while the index of $\mathcal{A}$ is the size of the minimal DFA deciding $\opLang{\mathcal{A}}$.
\probPrimeDFA{} denotes the problem of deciding primality for a given DFA.

The notion of compositionality of finite automata was introduced in \cite{kupferman2015prime},
although a similar but more restricted notion was already studied in \cite{gazi2008assisted}.
With \cite[Theorems 2.4 and 2.5]{kupferman2015prime}, \probPrimeDFA{} is \fontComplexityClass{NL}-hard and in \fontComplexityClass{ExpSpace}.
Up to now, progress in closing this surprisingly large doubly-exponential gap has proven elusive.
In this paper, we improve the lower complexity bound of \probPrimeDFA{}.

Compositionality in general is an important notion in both practical and theoretical computer science \cite{roever1998compositionality,tripakis2016compositionality}.
Regular languages and DFAs are key concepts of theoretical computer science and the question whether a DFA can be decomposed in this fashion seems natural and worth studying.
Furthermore, the general idea of
automaton decomposition
can be motivated by LTL model checking, while a question related to
automaton decomposition
arises in the field of automaton identification. Both will be briefly discussed below.

\subparagraph*{Contributions}
We prove the \fontComplexityClass{NP}-hardness of \probPrimeDFA{} by reducing the propositional logic satisfiability problem to \probPrimeDFA{}.

Given a formula $\Phi$, the reduction yields a DFA $\mathcal{A}_\Phi$ so that every assignment for $\Phi$ can be associated with a word of a specific form which is accepted by $\mathcal{A}_\Phi$ if and only if the underlying assignment satisfies $\Phi$.
The proof that this is actually a reduction to \probPrimeDFA{} relies on an involved characterization of the compositionality of a class of DFAs which contains all DFAs of the form $\mathcal{A}_\Phi$.
Namely, we characterize the compositionality of "almost-acyclic" linear safety DFAs
(essentially, DFAs whose states are linearly ordered, whose only rejecting state is a rejecting sink, who possess an accepting sink, and whose only cycles are its sinks).
This characterization then shows that $\mathcal{A}_\Phi$ accepts a word of this specific form if and only if it is prime. This proves the correctness of the reduction.

Thus, the \fontComplexityClass{NP}-hardness proof entails the construction of $\mathcal{A}_\Phi$, the characterization of the compositionality of the mentioned DFA class, and the use of this characterization to link the satisfiability of $\Phi$ and the compositionality of $\mathcal{A}_\Phi$.
Additionally, we further exploit the characterization of the compositionality of "almost-acyclic" linear safety DFAs to prove that the restriction of \probPrimeDFA{} to such DFAs is \fontComplexityClass{NP}-complete, meaning that the established lower bound is tight for this specific class of DFAs.

We attempt to give comprehensive proof sketches for these results in \cref{sec:npHardnessShortV2}.
% !!!!!!!!!! VERSIONSTART !!!!!!!!!!
% !!!!!!!!!! FULL VERSION !!!!!!!!!!
The complete formal proofs can be found in \cref{sec:npHardness}.
% !!!!!!!!!! CONF VERSION !!!!!!!!!!
%The full version of this paper contains an additional section in its appendix \cite[Appendix A]{spenner2026deciding}. There, we present the full argument with complete formal proofs.
% !!!!!!!!!! VERSIONEND  !!!!!!!!!!

\subparagraph*{Related Work}
As already mentioned, the notion of compositionality was introduced in \cite{kupferman2015prime}, where the mentioned complexity bounds for \probPrimeDFA{} were established. This initial work was followed up by \cite{jecker2020unary,jecker2021decomposing,spenner2023decomposing}. These works focused on restricted classes of DFAs, namely, unary DFAs, permutation DFAs, and acyclic DFAs and thereby finite languages, respectively. However, the initial complexity bounds for \probPrimeDFA{} remained unaltered.

The general idea of decomposing finite automata can be motivated by LTL model checking, where the validity of a specification, given as an LTL-formula, is checked for a system.
The automata-based approach entails translating the specification into a finite automaton \cite{vardi1986automata}.
Since the LTL model checking problem is \fontComplexityClass{PSpace}-complete in the size of the LTL-formula \cite[Theorem 5.48]{baier2008principles}, it is desirable to decompose the formula into a conjunction of subformulas. This can also be understood as decomposing the finite automaton corresponding to the formula.

A question related to the decomposition of automata arises in the field of automaton identification. The basic task here is, given a set of labeled words, to construct an automaton conforming to this set \cite{gold1978complexity}. An interesting approach is to construct multiple automata instead of one, which can lead to smaller and more intuitive solutions \cite{lauffer2022learning}.

However, the results obtained in this paper use decompositions with DFAs that are only slightly smaller than the given DFA (most often by just one state). For practical applications, decompositions into much smaller components might be more relevant.

%% file: preliminaries.tex
\section{Preliminaries}
\label{sec:preliminaries}

A \highlightDef{deterministic finite automaton} (DFA) is a $5$-tuple $\mathcal{A} = (Q,\Sigma,q_I,\delta,F)$ where $Q$ is a finite set of states, $\Sigma$ is a finite non-empty alphabet, $q_I \in Q$ is an initial state, $\delta: Q \times \Sigma \rightarrow Q$ is a transition function, and $F \subseteq Q$ is a set of accepting states. As usual, we extend $\delta$ to words: $\delta: Q \times \Sigma^* \rightarrow Q$ with $\delta(q,\varepsilon) = q$ and $\delta(q,\sigma_1\dots\sigma_n) = \delta(\delta(q,\sigma_1\dots\sigma_{n-1}),\sigma_n)$.
% Commented out because unnecessary. In MA, it was used in Section 5.
%For $q \in Q$, DFA $\mathcal{A}^q$ is constructed out of $\mathcal{A}$ by setting $q$ as the initial state, thus $\mathcal{A}^q = (Q,\Sigma,q,\delta,F)$.

The \highlightDef{run} of $\mathcal{A}$ on a word $w = \sigma_1\dots\sigma_n$ starting in state $q$ is the sequence $q_0,\sigma_1,q_1,\dots,\sigma_n,q_n$ with $q_0 = q$ and $q_i = \delta(q_{i-1},\sigma_i)$ for each $i \in \{1,\dots,n\}$. The \highlightDef{initial run} of $\mathcal{A}$ on $w$ is the run of $\mathcal{A}$ on $w$ starting in $q_I$. 
The run of $\mathcal{A}$ on $w$ starting in $q$ is \highlightDef{accepting} if $q_n \in F$. Otherwise, it is \highlightDef{rejecting}.
The DFA $\mathcal{A}$ \highlightDef{accepts} $w$ if the initial run of $\mathcal{A}$ on $w$ is accepting. Otherwise, it \highlightDef{rejects} $w$.
The \highlightDef{language} $\opLang{\mathcal{A}}$ of $\mathcal{A}$ is the set of words accepted by $\mathcal{A}$. We say that $\mathcal{A}$ \highlightDef{decides} $\opLang{\mathcal{A}}$. A language is \highlightDef{regular} if there exists a DFA deciding it. Since we only consider regular languages, we use the terms language and regular language interchangeably.

The \highlightDef{size} $\opSize{\mathcal{A}}$ of $\mathcal{A}$ is the number of states in $Q$. A DFA $\mathcal{A}$ is \highlightDef{minimal} if $\opLang{\mathcal{A}} \neq \opLang{\mathcal{B}}$ holds for every DFA $\mathcal{B}$ with $\opSize{\mathcal{B}} < \opSize{\mathcal{A}}$. It is well known that, for every regular language $L$, there exists a canonical minimal DFA deciding $L$. The \highlightDef{index} $\opInd{L}$ of $L$ is the size of this canonical minimal DFA. The index of $\mathcal{A}$ is the index of the language decided by $\mathcal{A}$, thus $\opInd{\mathcal{A}} = \opInd{\opLang{\mathcal{A}}}$. Note that $\mathcal{A}$ is minimal if and only if $\opSize{\mathcal{A}} = \opInd{\mathcal{A}}$.

We borrow a few terms from graph theory. Let $q_0,\sigma_1,q_1,\dots,\sigma_n,q_n$ be the run of $\mathcal{A}$ on $w = \sigma_1\dots\sigma_n$ starting in $q_0$. Then $q_0,\dots,q_n$ is a \highlightDef{path} in $\mathcal{A}$ from $q_0$ to $q_n$.
% START Version with length
%The \highlightDef{length} of this path is $n$. Thus, for two states $q,q'$, there exists a path from $q$ to $q'$ in $\mathcal{A}$ of length $n$ if and only if there exists a $w \in \Sigma^n$ with $\delta(q,w) = q'$. State $q'$ is \highlightDef{reachable from} $q$ if there exists a path from $q$ to $q'$. Otherwise, $q'$ is \highlightDef{unreachable from} $q$. Obviously, if $q'$ is reachable from $q$, then there exists a path from $q$ to $q'$ of a length strictly smaller than $\opSize{\mathcal{A}}$.
% END Version withlength
% START Version without length
Thus, for two states $q,q'$, there exists a path from $q$ to $q'$ in $\mathcal{A}$ if and only if there exists a $w \in \Sigma^*$ with $\delta(q,w) = q'$. State $q'$ is \highlightDef{reachable from} $q$ if there exists a path from $q$ to $q'$. Otherwise, $q'$ is \highlightDef{unreachable from} $q$.
% END Version without length
We say that $q'$ is \highlightDef{reachable} if it is reachable from $q_I$. Otherwise, it is \highlightDef{unreachable}. A \highlightDef{cycle} in $\mathcal{A}$ is a path $q_0,\dots,q_n$ in $\mathcal{A}$ where $q_0 = q_n$ and $n \in \symNatNumGeq{1}$.
We call a state $q$ a \highlightDef{sink} if $\delta(q,\sigma) = q$ for all $\sigma$, that is,
%if the only out-$q$-transition is a self-loop.
if all out-$q$-transitions are self-loops.
%
% NOTE: The definition of acyclicity, linearity and safety were removed from here and placed instead in npHardnessShortV2.

We introduce the notions of compositionality and primality of DFAs and languages, following the definitions in \cite{kupferman2015prime}:
\begin{restatable}{definitionRoman}{defcompositionality}
	\label{def:compositionality}
	For $k \in \symNatNumGeq{1}$, a DFA $\mathcal{A}$ is \highlightDef{$k$-decomposable} if there exist DFAs $\mathcal{A}_1,\dots,\mathcal{A}_t$ with $\opLang{\mathcal{A}} = \bigcap_{i=1}^{t} \opLang{\mathcal{A}_i}$ and $\opSize{\mathcal{A}_i} \leq k$ for each $i \in \{1,\dots,t\}$, where $t \in \symNatNumGeq{1}$. We call such DFAs $\mathcal{A}_1,\dots,\mathcal{A}_t$ a \highlightDef{$k$-decomposition} of $\mathcal{A}$.
	We call $\mathcal{A}$ \highlightDef{composite} if $\mathcal{A}$ is $k$-decomposable for a $k < \opInd{\mathcal{A}}$, that is, if it is $(\opInd{\mathcal{A}}-1)$-decomposable. Otherwise, we call $\mathcal{A}$ \highlightDef{prime}.\lipicsEnd
\end{restatable}

When analyzing the compositionality of a given DFA $\mathcal{A}$, it is sufficient to consider minimal DFAs $\mathcal{B}$ strictly smaller than the minimal DFA of $\mathcal{A}$ with $\opLang{\mathcal{A}} \subseteq \opLang{\mathcal{B}}$. Thus, we define $\opDecompSet{\mathcal{A}} = \{\mathcal{B} \mid \mathcal{B} \text{ is a minimal DFA with } \opInd{\mathcal{B}}<\opInd{\mathcal{A}} \text{ and } \opLang{\mathcal{A}} \subseteq \opLang{\mathcal{B}}\}$. Obviously, the DFA $\mathcal{A}$ is composite if and only if $\opLang{\mathcal{A}} = \bigcap_{\mathcal{B} \in \opDecompSet{\mathcal{A}}} \opLang{\mathcal{B}}$. We call a word $w \in (\bigcap_{\mathcal{B} \in \opDecompSet{\mathcal{A}}} \opLang{\mathcal{B}}) \setminus \opLang{\mathcal{A}}$ a \highlightDef{primality witness} of $\mathcal{A}$. Clearly, a DFA $\mathcal{A}$ is composite if and only if it has no primality witness.

%We extend the notions of $k$-decompositions, compositionality, primality and primality witnesses to regular languages by identifying a regular language with its minimal DFA.
We say that a regular language is prime (composite) if its minimal automaton is prime (composite).

We denote the problem of deciding primality for a given DFA by \probPrimeDFA{}.
\probPrimeDFA{} is \fontComplexityClass{NL}-hard and in \fontComplexityClass{ExpSpace} \cite[Theorems 2.4 and 2.5]{kupferman2015prime}.

%% file: npHardnessShortV2.tex
\section{\fontComplexityClass{NP}-Hardness of \probPrimeDFA{}}
\label{sec:npHardnessShortV2}
We improve the lower complexity bound for \probPrimeDFA{} by proving:

\begin{restatable}{theorem}{theprimeDFANPHard}
	\label{the:primeDFANPHard}\RestateRemark
	The problem \probPrimeDFA{} is \fontComplexityClass{NP}-hard.\lipicsEnd
\end{restatable}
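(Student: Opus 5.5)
We reduce \probCNFSAT{} to \probPrimeDFA{} (in fact to its complement, or directly, depending on the orientation of the characterization). Given a CNF formula $\Phi$ over variables $x_1,\dots,x_n$ with clauses $C_1,\dots,C_m$, we build in polynomial time a DFA $\mathcal{A}_\Phi$ that is an almost-acyclic linear safety DFA. Its states are linearly ordered $q_0,\dots,q_N$ plus a rejecting sink and an accepting sink. Every transition either moves forward, jumps to the rejecting sink, or jumps to the accepting sink. An assignment $\beta$ is encoded as a word $w_\beta$ of a fixed shape, for example a block of letters $\opLitToBin{\ell}$ per variable over a small alphabet, followed by a suffix of padding letters $c$. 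The transitions are arranged so that $\mathcal{A}_\Phi$ accepts $w_\beta$ exactly if $\beta$ satisfies every clause. Concretely, clause gadgets send a word to the rejecting sink precisely when none of the clause's literals was chosen. Because the automaton is linear and acyclic apart from its sinks, the run on a word is determined by how far it progresses before falling into a sink. The size is polynomial in $|\Phi|$, and minimality, so that $\opInd{\mathcal{A}_\Phi} = \opSize{\mathcal{A}_\Phi}$, can be forced by giving each position a distinguishing suffix.

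The core of the proof is a characterization of compositionality for the class of almost-acyclic linear safety DFAs. For such an $\mathcal{A}$ with index $k$, we analyse what a DFA $\mathcal{B} \in \opDecompSet{\mathcal{A}}$ with at most $k-1$ states can do. Since $\mathcal{B}$ has fewer states than the linear chain, on long words it must merge two positions $i<j$. Then $\mathcal{B}$ accepts every word obtained by pumping the segment between them, in the sense of the operator $\opPump{w}{i,j;l}$. I would therefore define a \symPPLong{} for words that reach the last linear state (``max-visiting'' words, with $\symCVar$ trailing $c$'s). Then I would prove two directions.

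\emph{(a)} If some rejected word $w$ has the property that every way of merging two states along its run yields a pumped word still accepted by $\mathcal{A}$, then a single $\mathcal{B}$ cannot exclude $w$ without being large. Hence $w$ is a primality witness and $\mathcal{A}$ is prime.

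\emph{(b)} Conversely, if no such word exists, then for each rejected word we explicitly construct a $(k-1)$-state DFA in $\opDecompSet{\mathcal{A}}$ rejecting it. This is done by merging a suitable pair of states, or by cutting the chain one step earlier, and checking that the inclusion $\opLang{\mathcal{A}} \subseteq \opLang{\mathcal{B}}$ survives. Only finitely many relevant word shapes occur up to pumping, so finitely many such $\mathcal{B}$ suffice. Together, $\mathcal{A}$ is prime if and only if it has a rejected word with the \symPPLong{}.

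Finally, we link the two parts. The gadget design should make the candidate witness words (those with the pumping property) exactly words of the form $w_\beta$ with a specific suffix. It should also make every other rejected word excludable by a merging DFA as in (b). Then $\mathcal{A}_\Phi$ is prime if and only if some encoded assignment word behaves correctly, i.e.\ if and only if $\Phi$ is satisfiable, depending on whether satisfaction is encoded as ``accepted after pumping'' or ``rejected''. This gives NP-hardness. The main obstacle is direction (b): showing that every non-witness rejected word is excluded by some strictly smaller superset DFA requires a careful case analysis on where the run hits the rejecting sink and which state pair to merge. Simultaneously, the construction of $\mathcal{A}_\Phi$ must not accidentally create witnesses unrelated to assignments. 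I would first try merging the two states adjacent to where the run enters the sink, and fall back to truncating the chain with redirected edges when that fails.
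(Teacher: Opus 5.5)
Your outline of the characterization is close to \cref{the:pmADFAMinLinSafeCompositionality}. The part that ties $\Phi$ to it, however, has a genuine gap.

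\textbf{The gadget.} You want $\mathcal{A}_\Phi$ to accept $w_\beta$ exactly when $\beta$ satisfies every clause. Such a gadget cannot have polynomial size unless $\fontComplexityClass{P}=\fontComplexityClass{NP}$. The encodings $w_\beta$ form a language with a small DFA, so satisfiability would reduce to emptiness of the product of two polynomial-size DFAs. Intuitively, a linear DFA passes each state once and cannot remember $\beta(x_1)$ for every clause mentioning $x_1$. The design also clashes with your own characterization: there a primality witness is a \emph{rejected} max-visiting word, and what counts is whether every window $i<j$ admits an accepted pumping $\opPump{w}{i,j;l}$.

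\textbf{The missing idea.} The pumping itself must produce the copies of the assignment, because only the pumping operator can force the copies to be identical. In the paper this works as follows.
\begin{itemize}
\item The max-visiting words are exactly $udc^\symCVar$ with $u\in\{0,1\}^r$.
\item For every window other than $i=1,j=r+1$, the pumping $\opPump{w}{i,j;0}$ is already accepted. Either only $c$'s are deleted, or a $c$ or $d$ is read out of place in the assignment device and the run enters the accepting sink $p_+$ (\cref{lem:aPhi}).
\item The remaining window gives $\opPump{w}{1,r+1;l}=u^ldc^\symCVar$, which feeds the same $u$ through clause rows $1,\dots,l-1$. Each row ends in $\hat p_r^k$ if the clause is satisfied (a subsequent $d$ rejects), or in $p_r^k$ if it is violated (any subsequent $0$, $1$ or $d$ rejects). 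Once all rows are passed, $\hat p_r^s$ sends $0,1$ into $p_+$.
\end{itemize}
Thus some pumping is accepted iff $\gamma_u$ satisfies $\Phi$ (\cref{lem:aPhiOnerpOnePumpings}), so $\Phi$ is satisfiable iff $\mathcal{A}_\Phi$ is prime. The orientation is forced, not optional. Primality is the existential side (some word defeats every window), matching satisfiability. A reduction to the complement of \probPrimeDFA{} would only give \fontComplexityClass{coNP}-hardness.

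\textbf{Step (a) of your characterization.} ``Fewer states than the chain'' does not by itself force a repetition. A DFA $\mathcal{B}\in\opDecompSet{\mathcal{A}}$ may have $\symLinLenVar+2$ states, and the run on $w$ has $\symLinLenVar+2$ positions. You need that $\mathcal{B}$ can be taken minimal, safety, and with an accepting sink (\cref{lem:safetyDecomp,lem:acceptingSink}). Then no sink occurs on its run over $\sigma_1\cdots\sigma_{\symLinLenVar}$: one sink would make $\mathcal{B}$ accept $w$, the other would make it reject a word of $\opLang{\mathcal{A}}$. So these $\symLinLenVar+1$ positions share at most $\symLinLenVar$ states. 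This yields $i<j\le\symLinLenVar+1$ for which $\mathcal{B}$ \emph{rejects} (not accepts) every $\opPump{w}{i,j;l}$.

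\textbf{Step (b).} Merging two states of $\mathcal{A}$ itself can break $\opLang{\mathcal{A}}\subseteq\opLang{\mathcal{B}}$, because other runs pass through those states. Also, the pair to merge must be a window on which all pumpings are rejected, not the states next to where the run enters the sink. The paper instead builds $\mathcal{A}_{w,i,j}$ from $w$ alone, sending every deviation to an accepting sink. It secures determinism via $\sigma_i\neq\sigma_j$ by choosing $i$ maximal (\cref{def:awij,lem:pPAndDifferingSigmas}). Non-max-visiting rejected words are handled separately by deleting a skipped state (\cref{def:aip}).
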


As mentioned, this is the first improvement of a complexity bound for \probPrimeDFA{} since \cite{kupferman2015prime} introduced \probPrimeDFA{} and proved that it is \fontComplexityClass{NL}-hard and in \fontComplexityClass{ExpSpace}.

\paragraph*{Proof Idea}
We prove the \fontComplexityClass{NP}-hardness of \probPrimeDFA{} (\cref{the:primeDFANPHard}) by a reduction from \probCNFSAT.
Given a CNF-formula $\Phi$, the reduction yields a DFA $\mathcal{A}_\Phi$
% !!!!!!!!!! VERSIONSTART !!!!!!!!!!
% !!!!!!!!!! FULL VERSION !!!!!!!!!!
(\cref{def:aPhi,fig:aPhiaPhiXTransitions})
% !!!!!!!!!! CONF VERSION !!!!!!!!!!
%(Figure 1, definition in full version \cite[Definition A.1]{spenner2026deciding})
% !!!!!!!!!! VERSIONEND  !!!!!!!!!!
with the goal that $\Phi$ is satisfiable if and only if $\mathcal{A}_\Phi$ is prime (\cref{lem:aPhiCharacterization}).

In the following, we call DFAs of the form $\mathcal{A}_\Phi$ \symCNFDFA{}s.
The correctness of the reduction relies on a (somewhat technical) characterization of the compositionality of \symCNFDFA{}s. Simplified, for a given \symCNFDFA{} $\mathcal{A}_\Phi$, only the words on which $\mathcal{A}_\Phi$ visits each of its non-sinks before entering its rejecting sink are relevant for its compositionality. We call these the max-visiting words of $\mathcal{A}_\Phi$. We say that $\mathcal{A}_\Phi$ has the \symPPLong{} (\symPPShort) if every max-visiting word of $\mathcal{A}_\Phi$ has a factorization $xyz$ so that $\mathcal{A}_\Phi$ rejects $xy^lz$ for every $l \in \symNatNum$ (\cref{def:pP}). It turns out that a \symCNFDFA{} is composite if and only if it has the \symPPShort{} (\cref{lem:aPhiIsMLSADFAp,the:pmADFAMinLinSafeCompositionality}).

The CNF-DFAs are designed so that the structure of the max-visiting words and their possible factorizations are strongly restricted (\cref{lem:aPhi}).
This has two major benefits. Firstly, we will have a lot of control over the behavior of $\mathcal{A}_\Phi$ on each individual max-visiting word. And secondly, we will have a whole set of max-visiting words to reason about, which could potentially break the \symPPShort{}.

More precisely, the max-visiting words have the form $udc^\symCVar$ with $u \in \{0,1\}^r$, where $r$ is the number of $\Phi$'s variables, $c$ and $d$ are dummy letters, and $\symCVar$ is a value dependent on $|\Phi|$ (\cref{lem:aPhi}~(i)). The prefix $u$ is interpreted as a variable assignment for $\Phi$. Due to the structure of $\mathcal{A}_\Phi$, the only factorization that might witness the \symPPShort{} for such a max-visiting word $udc^\symCVar$ is $x=\varepsilon,y=u,z=dc^\symCVar$ (\cref{lem:aPhi}~(ii)). Thus, exactly the pumpings of the form $u^l d c^\symCVar$ are decisive for the \symPPShort{} and thereby for the compositionality of $\mathcal{A}_\Phi$.

Intuitively, $\mathcal{A}_\Phi$ uses the repetitions of the assignment $u$ in $u^l$ to check whether $u$ satisfies each clause (\cref{lem:aPhiOnerpOnePumpings}). Indeed, let $s$ be the number of clauses of $\Phi$, then the crucial pumping is $u^{s+2} d c^\symCVar$: To check the $s$ clauses, $s$ repetitions of $u$ are needed; two further repetitions of $u$ are needed for technical reasons.
If $u$ does not satisfy $\Phi$, then $\mathcal{A}_\Phi$ detects an unsatisfied clause while reading $u^{s+2}$ and immediately enters its rejecting sink.
Otherwise, that is, if $u$ satisfies $\Phi$, then $\mathcal{A}_\Phi$ detects no unsatisfied clause while reading $u^{s+2}$ and enters the accepting sink when beginning the final $u$-repetition.

To summarize, if $\Phi$ is unsatisfiable, then $\mathcal{A}_\Phi$ has the \symPPShort{} and is therefore composite.
Otherwise, that is, if $\Phi$ is satisfiable, then $\mathcal{A}_\Phi$ does not have the \symPPShort{} and is therefore prime, which is witnessed by the pumping $u^{s+2} d c^\symCVar$ of the max-visiting word $u d c^\symCVar$, where $u \in \{0,1\}^r$ encodes a satisfying assignment for $\Phi$.

We outline the reduction in \cref{subsec:constructionAPhi} and prove its correctness in \cref{subsec:correctnessAPhi}. 
To do so, we characterize the compositionality of the relevant DFAs in \cref{subsubsec:compositionalityADFApMLS} and exploit this characterization to prove the correctness in \cref{subsubsec:characterizationToNPHardness}.
To conclude, we prove, in \cref{subsec:aDFApMLSNPComplete}, that the restriction of \probPrimeDFA{} to the DFAs relevant for the reduction is \fontComplexityClass{NP}-complete, meaning that the established lower bound is tight for this specific class of DFAs.
% !!!!!!!!!! VERSIONSTART !!!!!!!!!!
% !!!!!!!!!! FULL VERSION !!!!!!!!!!
The formal proofs of the results of \cref{sec:npHardnessShortV2} can be found in \cref{sec:npHardness}.
% !!!!!!!!!! CONF VERSION !!!!!!!!!!
%The formal proofs of the results of \cref{sec:npHardnessShortV2} can be found in the full version \cite[Appendix A]{spenner2026deciding}.
% !!!!!!!!!! VERSIONEND  !!!!!!!!!!

\subsection{Construction of $\mathcal{A}_\Phi$}
\label{subsec:constructionAPhi}
We denote the propositional logic satisfiability problem for formulas in conjunctive normal form (CNF) by \probCNFSAT. 
It is well known that \probCNFSAT{} is \fontComplexityClass{NP}-complete \cite{cook1971complexity}.
From here on, we call propositional logic formulas in CNF simply CNF-formulas.

As mentioned, we prove \cref{the:primeDFANPHard} by a reduction from \probCNFSAT. For a given CNF-formula $\Phi$, the reduction yields a \symCNFDFA{} $\mathcal{A}_\Phi$ so that $\Phi$ is satisfiable if and only if $\mathcal{A}_\Phi$ is prime.

Before we can describe the reduction, we have to fix some notation.
For the remainder of \cref{sec:npHardnessShortV2}, let $X = \{x_1, \dots, x_r\}$ be a set of $r \in \symNatNumGeq{1}$ numbered variables, and let $\Phi$ be a CNF-formula over $X$. 
W.l.o.g.\ we assume that $\Phi$ has at least one clause and that every variable appears at most once in every clause. 

To simplify the construction of $\mathcal{A}_\Phi$, we fix a special notation for $\Phi$.
Namely, we write $\Phi = (\symElem_1^1 \vee \dots \vee \symElem_r^1) \wedge \dots \wedge (\symElem_1^s \vee \dots \vee \symElem_r^s)$,
where $s \in \symNatNumGeq{1}$ is the number of clauses of $\Phi$, and $\symElem_i^k \in \{\neg x_i, x_i, \bot\}$ is the $i$-th \highlightDef{element} in the $k$-th clause where $k \in \{1, \dots, s\}, i \in \{1, \dots, r\}$.
In other words, in our notation, every clause has exactly $r$ elements and the $i$-th element of every clause is either a literal of the $i$-th variable $x_i$ or $\bot$.
Here, $\bot$ is a special symbol that always evaluates to \emph{False}.
It is easy to see that $\Phi$ can be written in this way.

Towards the definition of $\mathcal{A}_\Phi$, we introduce some additional notation.
Let $\Sigma = \{0,1,c,d\}$.
We call a string $u \in \{0,1\}^r$ an \highlightDef{assignment string}.
As usual, an assignment over $X$ is a function $\gamma: X \rightarrow \{0,1\}$.
An assignment $\gamma$ over $X$ induces an assignment string $u_\gamma \in \{0,1\}^r$ in the following way: $u_\gamma = \gamma(x_1) \dots \gamma(x_r)$. Conversely, an assignment string $u = \sigma_1 \dots \sigma_r \in \{0,1\}^r$ induces an assignment $\gamma_u$ over $X$ in the following way: $\gamma_u(x_i) = \sigma_i$ for all $i \in \{1, \dots, r\}$.

Now we can consider the reduction, that is, the construction of the \symCNFDFA{} $\mathcal{A}_\Phi = (Q_\Phi, \Sigma, p_0, \delta_\Phi, F_\Phi)$ out of the given CNF-formula $\Phi$. The \symCNFDFA{} $\mathcal{A}_\Phi$ is outlined in \cref{fig:aPhiaPhiXTransitions}.
% !!!!!!!!!! VERSIONSTART !!!!!!!!!!
% !!!!!!!!!! FULL VERSION !!!!!!!!!!
It is formally defined in \cref{def:aPhi}, found in \cref{subsec:constructionAPhiProofs}.
% !!!!!!!!!! CONF VERSION !!!!!!!!!!
%It is formally defined in the full version \cite[Definition A.1]{spenner2026deciding}.
% !!!!!!!!!! VERSIONEND  !!!!!!!!!!
The key property of $\mathcal{A}_\Phi$ that we want to prove is formalized as:

\begin{restatable}{lemma}{lemaPhiCharacterization}
	\label{lem:aPhiCharacterization}\RestateRemark
	The CNF-formula $\Phi$ is satisfiable if and only if the \symCNFDFA{} $\mathcal{A}_\Phi$ is prime. \lipicsEnd
\end{restatable}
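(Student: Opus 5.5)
The plan is to route the equivalence through the characterization sketched in the proof idea. First, I will show that every \symCNFDFA{} $\mathcal{A}_\Phi$ lies in the class of ``almost-acyclic'' linear safety DFAs (\cref{lem:aPhiIsMLSADFAp}); that class is characterized by \cref{the:pmADFAMinLinSafeCompositionality}. I will also check that $\mathcal{A}_\Phi$ is minimal, so that its size equals its index. This step is a routine inspection of the construction: the states are linearly ordered, there is one rejecting sink and one accepting sink, and all other transitions move strictly forward. By the characterization, $\mathcal{A}_\Phi$ is composite if and only if it has the \symPPShort{}. The lemma is therefore equivalent to: $\Phi$ is unsatisfiable if and only if every max-visiting word of $\mathcal{A}_\Phi$ has a factorization $xyz$ with $xy^lz$ rejected for all $l\in\symNatNum$.

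Next I invoke \cref{lem:aPhi}. By part (i), the max-visiting words are exactly the words $udc^\symCVar$ with $u\in\{0,1\}^r$. By part (ii), the only factorization that can witness the \symPPShort{} for such a word is $x=\varepsilon$, $y=u$, $z=dc^\symCVar$. Consequently, $\mathcal{A}_\Phi$ has the \symPPShort{} if and only if, for every assignment string $u$, all pumpings $u^ldc^\symCVar$ are rejected. It therefore suffices to prove, for each fixed $u$: all words $u^ldc^\symCVar$ with $l\in\symNatNum$ are rejected if and only if $\gamma_u$ does not satisfy $\Phi$. This is the content of \cref{lem:aPhiOnerpOnePumpings}.

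For that last equivalence I will trace the run on $u^l$ through the clause-checking gadget. The $k$-th repetition of $u$ is compared against the elements $\symElem_1^k,\dots,\symElem_r^k$ of clause $k$. A repetition in which no element evaluates to true under $\gamma_u$ sends the automaton to the rejecting sink. So if some clause is unsatisfied, every $u^l$ with $l$ large enough to reach that clause is rejected. For small $l$, in particular $l=0$ (the word $dc^\symCVar$) and $l<s+2$, the word ends too early: reading $d$ or the $c$-block then leads to the rejecting sink, because acceptance requires passing through the whole structure. If instead $\gamma_u$ satisfies every clause, the run survives $u^{s+1}$ and enters the accepting sink at the start of the $(s+2)$-th repetition, so $u^{s+2}dc^\symCVar$ is accepted. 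This accepted word is the primality witness.

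The main obstacle is the careful run analysis for \emph{all} $l$. This includes the two ``technical'' extra repetitions and showing that partial progress within a repetition cannot be misread across clause boundaries. It also depends on the exact transitions of \cref{def:aPhi}. I would proceed by induction on the number of completed repetitions, using an invariant that fixes the state reached after $u^j$. Once these pieces are in place, combining the equivalences gives: $\Phi$ satisfiable $\iff$ some $u$ has an accepted pumping $\iff$ no \symPPShort{} $\iff$ $\mathcal{A}_\Phi$ prime.
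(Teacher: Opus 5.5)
Your proposal is correct and follows the paper's proof: \cref{lem:aPhiIsMLSADFAp} and \cref{the:pmADFAMinLinSafeCompositionality} reduce primality to the absence of the \symPPShort{}, \cref{lem:aPhi} restricts attention to the pumpings $u^l d c^\symCVar$, and \cref{lem:aPhiOnerpOnePumpings} ties these to whether $\gamma_u$ satisfies $\Phi$. One correction to your informal run analysis: an unsatisfied $k$-th clause leads to $p_r^k$, and only the next letter leads to $p_-$; moreover, for $1<l\le k$ the word $u^l d c^\symCVar$ is rejected because $d$ is read in $\hat{p}_r^{l-1}$, not because the word ``ends too early'' (in a safety DFA every word that avoids $p_-$ is accepted).
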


\begin{figure} % fig:aPhiaPhiXTransitions
	\newcommand{\gS}{4.75} % Gagdet step
	\newcommand{\lHS}{2.125} % Large horizontal step
	\newcommand{\hS}{2.125} % Horizontal step
	\newcommand{\lVS}{4.5} % Large vertical step
	\newcommand{\vS}{1.625} % Vertical step
	\newcommand{\aS}{1.0} % Auxiliary step
	\newcommand{\minSizeFS}{0.875cm}
	\newcommand{\minSizeSS}{0.875cm}
	\begin{subfigure}{1\textwidth}
		\centering
		\begin{tikzpicture}
			\scriptsize
			
			% First row
			
			\node[state, minimum size=\minSizeSS, initial, accepting]		at (0 *	\hS	+ 0 * \hS,0)			(p0)				{$p_0$};
			\node[state, minimum size=\minSizeSS, accepting]				at (1 * \hS	+ 0 * \hS,0)			(p1)				{$p_1$};
			\node[state, minimum size=\minSizeSS, accepting]				at (2 * \hS	+ 1 * \hS,0)			(pr-1)				{$p_{r-1}$};
			\node[state, minimum size=\minSizeSS, accepting, align=center]	at (3 * \hS	+ 1 * \hS,0)			(pr)				{\tiny{$p_r$}\\\tiny{/$\hat{p}_r^0$}};
			\node[state, minimum size=\minSizeSS, accepting]				at (4 * \hS	+ 1 * \hS,0)			(pc0)				{$p_c^0$};
			
			\node															at (0 * \hS + 0 * \hS, \aS)			(e00)				{$p_+$};
			\node															at (0 * \hS + 0 * \hS, -\aS)		(e01)				{$p_-$};
			\node															at (1 * \hS + 0 * \hS, \aS)			(e1)				{$p_+$};
			\node															at (2 * \hS + 1 * \hS, \aS)			(er-1)				{$p_+$};
			\node															at (3 * \hS + 1 * \hS, \aS)			(er)				{$p_+$};
			
			\node															at (5 * \hS + 1 * \hS, 0)			(ed1)				{$p_-$};
			\node															at (5 * \hS	+ 1 * \hS,\aS)			(erp10)				{$p_1^1$};
			\node															at (5 * \hS	+ 1 * \hS,-\aS)			(erp11)				{$\hat{p}_1^1$};
			
			\draw	(p0)						edge[below]			node{$0,1$}		(p1);
			\draw	(p0)						edge[left]			node{$c$}		(e00);
			\draw	(p0)						edge[left]			node{$d$}		(e01);
			
			\draw	(p1)						edge[dashed,below]	node{$0,1$}		(pr-1);
			\draw	(p1)						edge[left]			node{$c,d$}		(e1);
			
			\draw	(pr-1)						edge[below]			node{$0,1$}		(pr);
			\draw	(pr-1)						edge[left]			node{$c,d$}		(er-1);
			
			\draw	(pr)						edge[below]			node{$d$}		(pc0);
			\draw	(pr)						edge[left]			node{$c$}		(er);
			\draw	(pr)						edge[above,bend left=15]
			node[align=left,pos=0.65]{
				if $\symElem_1^1 = \bot$: $0,1$ \\
				if $\symElem_1^1 = x_1$: $0$ \\
				if $\symElem_1^1 = \neg x_1$: $1$
			}
			(erp10);
			\draw	(pr)						edge[below,bend right=15]
			node[align=left,pos=0.65]{
				if $\symElem_1^1 = x_1$: $1$ \\
				if $\symElem_1^1 = \neg x_1$: $0$
			}
			(erp11);
			
			\draw	(pc0)						edge[below]			node{$c$}		(erp10);
			\draw	(pc0)						edge[below]			node{$0,1,d$}	(ed1);

			% Dots after first row
			
			\node[]at(1.5 *	\hS	+ 0.5 * \hS , -0.5 * \lVS)(d01){$\dots$ clause rows $1$ to $k-1$ $\dots$};

			% Second row
			
			\node[state, minimum size=\minSizeSS, accepting]				at (0 *	\hS	+ 0 * \hS , -1 * \lVS)			(p1k)		{$p_1^k$};
			\node[state, minimum size=\minSizeSS, accepting]				at (0 *	\hS	+ 0 * \hS , -1 * \lVS - 1 * \vS)(p1kHat)	{$\hat{p}_1^k$};
			\node[state, minimum size=\minSizeSS, accepting]				at (1 *	\hS	+ 0 * \hS , -1 * \lVS)			(p2k)		{$p_2^k$};
			\node[state, minimum size=\minSizeSS, accepting]				at (1 *	\hS	+ 0 * \hS , -1 * \lVS - 1 * \vS)(p2kHat)	{$\hat{p}_2^k$};
			\node[state, minimum size=\minSizeSS, accepting]				at (2 *	\hS	+ 1 * \hS , -1 * \lVS)			(prm1k)		{$p_{r-1}^k$};
			\node[state, minimum size=\minSizeSS, accepting]				at (2 *	\hS	+ 1 * \hS , -1 * \lVS - 1 * \vS)(prm1kHat){$\hat{p}_{r-1}^k$};
			\node[state, minimum size=\minSizeSS, accepting]				at (3 *	\hS	+ 1 * \hS , -1 * \lVS)			(prk)		{$p_r^k$};
			\node[state, minimum size=\minSizeSS, accepting]				at (3 *	\hS	+ 1 * \hS , -1 * \lVS - 1 * \vS)(prkHat)	{$\hat{p}_r^k$};
			\node[state, minimum size=\minSizeSS, accepting]				at (4 *	\hS	+ 1 * \hS , -1 * \lVS - 1 * \vS)(pck)		{$p_c^k$};
			
			\node[]															at (3 *	\hS	+ 1 * \hS , -1 * \lVS + \aS)	(erk0)		{$p_-$};
			\node[]															at (3 *	\hS	+ 1 * \hS , -1 * \lVS - 1 * \vS - \aS)(erk1){$p_-$};
			
			\node[]															at (0 *	\hS	+ 0 * \hS , -1 * \lVS + \aS)	(e1k0)		{$p_-$};
			\node[]															at (1 *	\hS	+ 0 * \hS , -1 * \lVS + \aS)	(e2k0)		{$p_-$};
			\node[]															at (2 *	\hS	+ 1 * \hS , -1 * \lVS + \aS)	(erm1k0)		{$p_-$};
			\node[]															at (0 *	\hS	+ 0 * \hS , -1 * \lVS - 1 * \vS - \aS)(e1k1)	{$p_-$};
			\node[]															at (1 *	\hS	+ 0 * \hS , -1 * \lVS - 1 * \vS - \aS)(e2k1)	{$p_-$};
			\node[]															at (2 *	\hS	+ 1 * \hS , -1 * \lVS - 1 * \vS - \aS)(erm1k1)	{$p_-$};
			
			\node															at (5 * \hS + 1 * \hS, -1 * \lVS - 1 * \vS + 0 * \aS)(edk)	{$p_-$};
			\node[]															at (5 *	\hS	+ 1 * \hS, -1 * \lVS - 1 * \vS + 1 * \aS)(erp1k0) {$p_1^{k+1}$};
			\node[]															at (5 *	\hS	+ 1 * \hS, -1 * \lVS - 1 * \vS - 1 * \aS)(erp1k1)	{$\hat{p}_1^{k+1}$};
			
			\draw	(p1k)						edge[dashed]		node{}			(p2k);
			\draw	(p1k)						edge[dashed]		node{}			(p2kHat);
			\draw	(p1k)						edge[left]			node{$c$}		(p1kHat);
			\draw	(p1k)						edge[left]			node{$d$}		(e1k0);
			
			\draw	(p1kHat)					edge[below]			node{$0,1$}		(p2kHat);
			\draw	(p1kHat)					edge[below]			node[pos=0.25]{$c$}(p2k);
			\draw	(p1kHat)					edge[left]			node{$d$}		(e1k1);
			
			\draw	(p2k)						edge[dashed]		node{}			(prm1k);
			\draw	(p2k)						edge[left]			node{$c$}		(p2kHat);
			\draw	(p2k)						edge[left]			node{$d$}		(e2k0);
			
			\draw	(p2kHat)					edge[dashed]		node{}			(prm1kHat);
			\draw	(p2kHat)					edge[left]			node{$d$}		(e2k1);
			
			\draw	(prm1k)						edge[dashed]		node{}			(prk);
			\draw	(prm1k)						edge[dashed]		node{}			(prkHat);
			\draw	(prm1k)						edge[left]			node{$c$}		(prm1kHat);
			\draw	(prm1k)						edge[left]			node{$d$}		(erm1k0);
			
			\draw	(prm1kHat)					edge[below]			node{$0,1$}		(prkHat);
			\draw	(prm1kHat)					edge[below]			node[pos=0.25]{$c$}(prk);
			\draw	(prm1kHat)					edge[left]			node{$d$}		(erm1k1);
			
			\draw	(prk)						edge[left]			node{$0,1,d$}	(erk0);
			\draw	(prk)						edge[left]			node{$c$}		(prkHat);
			
			\draw	(prkHat)					edge[below]			node{$c$}		(pck);
			\draw	(prkHat)					edge[left]			node{$d$}		(erk1);
			\draw	(prkHat)					edge[above,bend left=15]
			node[align=left,pos=0.65]{
				if $\symElem_1^{k+1} = \bot$: $0,1$ \\
				if $\symElem_1^{k+1} = x_1$: $0$ \\
				if $\symElem_1^{k+1} = \neg x_1$: $1$
			}
			(erp1k0);
			\draw	(prkHat)					edge[below,bend right=15]
			node[align=left,pos=0.65]{
				if $\symElem_1^{k+1} = x_1$: $1$ \\
				if $\symElem_1^{k+1} = \neg x_1$: $0$
			}
			(erp1k1);
			
			\draw	(pck)						edge[below]			node{$0,1,d$}	(edk);
			\draw	(pck)						edge[below]			node{$c$}		(erp1k0);

			% Dots after second row
			\node[]at(1.5 * \hS	+ 0.5 * \hS , -1.5 * \lVS - 1 * \vS)(d11){$\dots$ clause rows $k+1$ to $s-1$ $\dots$};

			% Third row
			
			\node[state, minimum size=\minSizeSS, accepting]				at (0 *	\hS	+ 0 * \hS , -2 * \lVS - 1 * \vS)(p1s)		{$p_1^s$};
			\node[state, minimum size=\minSizeSS, accepting]				at (0 *	\hS	+ 0 * \hS , -2 * \lVS - 2 * \vS)(p1sHat)	{$\hat{p}_1^s$};
			\node[state, minimum size=\minSizeSS, accepting]				at (1 *	\hS	+ 0 * \hS , -2 * \lVS - 1 * \vS)(p2s)		{$p_2^s$};
			\node[state, minimum size=\minSizeSS, accepting]				at (1 *	\hS	+ 0 * \hS , -2 * \lVS - 2 * \vS)(p2sHat)	{$\hat{p}_2^s$};
			\node[state, minimum size=\minSizeSS, accepting]				at (2 *	\hS	+ 1 * \hS , -2 * \lVS - 1 * \vS)(prm1s)		{$p_{r-1}^s$};
			\node[state, minimum size=\minSizeSS, accepting]				at (2 *	\hS	+ 1 * \hS , -2 * \lVS - 2 * \vS)(prm1sHat){$\hat{p}_{r-1}^s$};
			\node[state, minimum size=\minSizeSS, accepting]				at (3 *	\hS	+ 1 * \hS , -2 * \lVS - 1 * \vS)(prs)		{$p_r^s$};
			\node[state, minimum size=\minSizeSS, accepting]				at (3 *	\hS	+ 1 * \hS , -2 * \lVS - 2 * \vS)(prsHat)	{$\hat{p}_r^s$};
			\node[state, minimum size=\minSizeSS]							at (4 *	\hS	+ 1 * \hS , -2 * \lVS - 1 * \vS)(p-)		{$p_-$};
			\node[state, minimum size=\minSizeSS, accepting]				at (4 *	\hS	+ 1 * \hS , -2 * \lVS - 2 * \vS)(pcs)		{$p_c^s$};
			\node[state, minimum size=\minSizeSS, accepting]				at (3 *	\hS	+ 1 * \hS , -2 * \lVS - 3 * \vS)(p+)		{$p_+$};
			
			\node[]															at (0 *	\hS	+ 0 * \hS , -2 * \lVS - 1 * \vS + \aS)(e1s0)	{$p_-$};
			\node[]															at (1 *	\hS	+ 0 * \hS , -2 * \lVS - 1 * \vS + \aS)(e2s0)	{$p_-$};
			\node[]															at (2 *	\hS	+ 1 * \hS , -2 * \lVS - 1 * \vS + \aS)(erm1s0)	{$p_-$};
			\node[]															at (0 *	\hS	+ 0 * \hS , -2 * \lVS - 2 * \vS - \aS)(e1s1)	{$p_-$};
			\node[]															at (1 *	\hS	+ 0 * \hS , -2 * \lVS - 2 * \vS - \aS)(e2s1)	{$p_-$};
			\node[]															at (2 *	\hS	+ 1 * \hS , -2 * \lVS - 2 * \vS - \aS)(erm1s1)	{$p_-$};
			
			\draw	(p1s)						edge[dashed]		node{}			(p2s);
			\draw	(p1s)						edge[dashed]		node{}			(p2sHat);
			\draw	(p1s)						edge[left]			node{$c$}		(p1sHat);
			\draw	(p1s)						edge[left]			node{$d$}		(e1s0);
			
			\draw	(p1sHat)					edge[below]			node{$0,1$}		(p2sHat);
			\draw	(p1sHat)					edge[below]			node[pos=0.25]{$c$}(p2s);
			\draw	(p1sHat)					edge[left]			node{$d$}		(e1s1);

			\draw	(p2s)						edge[dashed]		node{}			(prm1s);
			\draw	(p2s)						edge[left]			node{$c$}		(p2sHat);
			\draw	(p2s)						edge[left]			node{$d$}		(e2s0);
			
			\draw	(p2sHat)					edge[dashed]		node{}			(prm1sHat);
			\draw	(p2sHat)					edge[left]			node{$d$}		(e2s1);

			\draw	(prm1s)						edge[dashed]		node{}			(prs);
			\draw	(prm1s)						edge[dashed]		node{}			(prsHat);
			\draw	(prm1s)						edge[left]			node{$c$}		(prm1sHat);
			\draw	(prm1s)						edge[left]			node{$d$}		(erm1s0);
			
			\draw	(prm1sHat)					edge[below]			node{$0,1$}		(prsHat);
			\draw	(prm1sHat)					edge[below]			node[pos=0.25]{$c$}(prs);
			\draw	(prm1sHat)					edge[left]			node{$d$}		(erm1s1);

			\draw	(prs)						edge[above]			node{$0,1,d$}	(p-);
			\draw	(prs)						edge[left]			node{$c$}		(prsHat);
			
			\draw	(prsHat)					edge[left]			node{$0,1$}		(p+);
			\draw	(prsHat)					edge[above]			node{$c$}		(pcs);
			\draw	(prsHat)					edge[left]			node{$d$}		(p-);
			
			\draw	(pcs)						edge[right]			node{$0,1,d$}	(p+);
			\draw	(pcs)						edge[right]			node{$c$}		(p-);
			
			\draw	(p-)						edge[loop right]	node{$0,1,c,d$}	(p-);
			
			\draw	(p+)						edge[loop right]	node{$0,1,c,d$}	(p+);
		\end{tikzpicture}
		\caption{
			\symCNFDFA{} $\mathcal{A}_\Phi$ for CNF-formula $\Phi = (\symElem_1^1 \vee \dots \vee \symElem_r^1) \wedge \dots \wedge (\symElem_1^s \vee \dots \vee \symElem_r^s)$ over set $X = \{x_1, \dots, x_r\}$ of $r \in \symNatNumGeq{1}$ numbered variables.
			Depicted are the assignment device, the $k$-th clause row with $k \in \{1, \dots, s - 1\}$, and the $s$-th and last clause row.
			In the $k$-th clause row, the out-transitions of the states $p_1^k,\dots,p_{r-1}^k$ for the letters $0,1$ are omitted. For such a state $p_i^k$, these $0/1$-out-transitions depend on the element $e_{i+1}^k$ and can lead to $p_{i+1}^k$ or $\hat{p}_{i+1}^k$, as suggested by the dashed lines. The three possible configurations (for $e_{i+1}^k \in \{\bot,\neg x_i,x_i\}$) are depicted in \cref{subfig:aPhiXTransitions}.
			The same omissions are made in the $s$-th clause row.
		}
		\label{subfig:aPhi}
	\end{subfigure}
	\begin{subfigure}{1\textwidth}
		\centering
		\begin{tikzpicture}
			\scriptsize
			
			% bot gadget
			
			\node[state, minimum size=\minSizeFS, initial, accepting]		at (0,0)			(pik)				{$p_i^k$};
			\node[state, minimum size=\minSizeFS, initial, accepting]		at (0,-\vS)			(pikHat)			{$\hat{p}_i^k$};
			\node[state, minimum size=\minSizeFS, accepting]				at (\lHS,0)			(pi+1k)				{$p_{i+1}^k$};
			\node[state, minimum size=\minSizeFS, accepting]				at (\lHS,-\vS)		(pi+1kHat)			{$\hat{p}_{i+1}^k$};
			\node															at (0,\aS)			(e0)				{$p_-$};
			\node															at (0,-\vS-\aS)		(e1)				{$p_-$};
			
			\draw	(pik)						edge[above]			node{$0,1$}			(pi+1k);
			\draw	(pik)						edge[left]			node{$c$}			(pikHat);
			\draw	(pik)						edge[left]			node{$d$}			(e0);
			
			\draw	(pikHat)					edge[below]			node{$0,1$}			(pi+1kHat);
			\draw	(pikHat)					edge[below]			node[pos=0.25]{$c$}	(pi+1k);
			\draw	(pikHat)					edge[left]			node{$d$}			(e1);
			
			\node[]		at (0.5*\lHS,- \vS - 1.2*\aS)			(i)			{(i)};
			
			% x_i gadget
			
			\node[state, minimum size=\minSizeFS, initial, accepting]		at (\gS,0)			(pik)				{$p_i^k$};
			\node[state, minimum size=\minSizeFS, initial, accepting]		at (\gS,-\vS)		(pikHat)			{$\hat{p}_i^k$};
			\node[state, minimum size=\minSizeFS, accepting]				at (\gS + \lHS,0)	(pi+1k)				{$p_{i+1}^k$};
			\node[state, minimum size=\minSizeFS, accepting]				at (\gS + \lHS,-\vS)(pi+1kHat)			{$\hat{p}_{i+1}^k$};
			\node															at (\gS,\aS)		(e0)				{$p_-$};
			\node															at (\gS,-\vS-\aS)	(e1)				{$p_-$};
			
			\draw	(pik)						edge[above]			node{$0$}			(pi+1k);
			\draw	(pik)						edge[right]			node[pos=0.13]{$1$}(pi+1kHat);
			\draw	(pik)						edge[left]			node{$c$}			(pikHat);
			\draw	(pik)						edge[left]			node{$d$}			(e0);
			
			\draw	(pikHat)					edge[below]			node{$0,1$}			(pi+1kHat);
			\draw	(pikHat)					edge[below]			node[pos=0.25]{$c$}	(pi+1k);
			\draw	(pikHat)					edge[left]			node{$d$}			(e1);
			
			\node[]		at (\gS + 0.5*\lHS,- \vS - 1.2*\aS)			(ii)			{(ii)};
			
			% \neg x_i gadget
			
			\node[state, minimum size=\minSizeFS, initial, accepting]		at (2*\gS,0)			(pik)				{$p_i^k$};
			\node[state, minimum size=\minSizeFS, initial, accepting]		at (2*\gS,-\vS)		(pikHat)			{$\hat{p}_i^k$};
			\node[state, minimum size=\minSizeFS, accepting]				at (2*\gS + \lHS,0)	(pi+1k)				{$p_{i+1}^k$};
			\node[state, minimum size=\minSizeFS, accepting]				at (2*\gS + \lHS,-\vS)(pi+1kHat)			{$\hat{p}_{i+1}^k$};
			\node															at (2*\gS,\aS)		(e0)				{$p_-$};
			\node															at (2*\gS,-\vS-\aS)	(e1)				{$p_-$};
			
			\draw	(pik)						edge[above]			node{$1$}			(pi+1k);
			\draw	(pik)						edge[right]			node[pos=0.13]{$0$}(pi+1kHat);
			\draw	(pik)						edge[left]			node{$c$}			(pikHat);
			\draw	(pik)						edge[left]			node{$d$}			(e0);
			
			\draw	(pikHat)					edge[below]			node{$0,1$}			(pi+1kHat);
			\draw	(pikHat)					edge[below]			node[pos=0.25]{$c$}	(pi+1k);
			\draw	(pikHat)					edge[left]			node{$d$}			(e1);
			
			\node[]		at (2*\gS + 0.5*\lHS,- \vS - 1.2*\aS)			(iii)			{(iii)};
		\end{tikzpicture}
		\caption{
			Out-transitions of the states $p_i^k, \hat{p}_i^k$ of the \symCNFDFA{} $\mathcal{A}_\Phi$ depicted in \cref{subfig:aPhi}, where $k \in \{1, \dots, s\},i \in \{1, \dots, r - 1\}$.
			The out-transitions of $p_i^k$ depend on $\symElem_{i+1}^k$.
			(i): Out-transitions if $\symElem_{i+1}^k = \bot$. (ii): Out-transitions if $\symElem_{i+1}^k = x_i$. (iii): Out-transitions if $\symElem_{i+1}^k = \neg x_i$.
		}
		\label{subfig:aPhiXTransitions}
	\end{subfigure}
	\caption{\symCNFDFA{} $\mathcal{A}_\Phi$.}
	\label{fig:aPhiaPhiXTransitions}
\end{figure}

As mentioned above, the compositionality of $\mathcal{A}_\Phi$ depends solely on the max-visiting words of $\mathcal{A}_\Phi$. We will discuss this in \cref{subsubsec:compositionalityADFApMLS}. For now, we will focus on the design of $\mathcal{A}_\Phi$ and give an intuition for it.

The \symCNFDFA{} $\mathcal{A}_\Phi$ consists of two devices, the \highlightDef{assignment device} and the \highlightDef{formula device}.
The assignment device consists of the states $p_0,\dots,p_r$ and $p_c^0$.
To fully traverse the assignment device, that is, to start in $p_0$ and reach $p_c^0$, a string $ud$ with $u \in \{0,1\}^r$ has to be read.
Therefore, the assignment device enforces that every max-visiting word is prefixed by an assignment string $u \in \{0,1\}^r$ (followed by a single $d$).
Subsequently, the assignment string $u$ is interpreted as the assignment $\gamma_u$ for $\Phi$ which it induces.

The formula device consists of $s$ \highlightDef{clause rows}.
For $k \in \{1,\dots,s\}$, the $k$-th clause row encodes the $k$-th clause of $\Phi$, that is, $(\symElem_1^k \vee \dots \vee \symElem_r^k)$.
Thus, the formula device in its entirety encodes all clauses and thereby $\Phi$.
The $k$-th clause row  consists of the states $p_1^k,\hat{p}_1^k,\dots,p_r^k,\hat{p}_r^k$ and $p_c^k$.
Among these, $\hat{p}_r^k$ is the positive target state of the $k$-th clause row, which is also the designated entry point for the subsequent clause row.
And further, $p_r^k$ is the negative target state of the $k$-th clause row, which prevents $\mathcal{A}_\Phi$ from entering the subsequent clause row.

We point out that $p_r$ is part of the assignment device enforcing the prefix $ud$, but also functions similarly to a positive target state for letters $0$ and $1$. In particular, it is the designated entry point for the first clause row. Therefore, we denote the state $p_r$ also by $\hat{p}_r^0$.

The formula device is designed to enforce two assertions.

Firstly, every max-visiting word of $\mathcal{A}_\Phi$ is of the form $udc^\symCVar$, where $u \in \{0,1\}^r$ is an assignment string and $\symCVar$ depends on $|\Phi|$. To be precise, this first assertion is enforced by the two devices together: The assignment device enforces the prefix $ud$ and the formula device the suffix $c^\symCVar$. The first assertion is formalized further below in \cref{lem:aPhi}~(i).

Secondly, if $\gamma_u$ satisfies the $k$-th clause, then $u$ induces a run from the positive target state $\hat{p}_r^{k-1}$ of the $(k-1)$-th clause to the positive target state $\hat{p}_r^k$ of the $k$-th clause. Else, $u$ induces a run from the positive target state $\hat{p}_r^{k-1}$ of the $(k-1)$-th clause to the negative target state $p_r^k$ of the $k$-th clause.
The second assertion is formalized in:

\begin{restatable}{lemma}{lemaPhiClauseRows}
	\label{lem:aPhiClauseRows}\RestateRemark
	Consider an assignment string $u \in \{0,1\}^r$. Let $k \in \{1, \dots, s\}$. If $\gamma_u$ satisfies the $k$-th clause of $\Phi$, then $\delta_\Phi(\hat{p}_r^{k-1},u) = \hat{p}_r^k$ holds. Else, $\delta_\Phi(\hat{p}_r^{k-1},u) = p_r^k$ holds. \lipicsEnd
\end{restatable}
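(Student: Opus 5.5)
The plan is to track the run of $\mathcal{A}_\Phi$ on $u = \sigma_1\dots\sigma_r$ starting in $\hat{p}_r^{k-1}$ letter by letter and prove an invariant by induction on the position $i \in \{1,\dots,r\}$. For the $i$-th prefix $\sigma_1\dots\sigma_i$ of $u$, the invariant says:
\[
\delta_\Phi(\hat{p}_r^{k-1},\sigma_1\dots\sigma_i) =
\begin{cases}
\hat{p}_i^k & \text{if } \gamma_u \text{ satisfies one of } \symElem_1^k,\dots,\symElem_i^k,\\
p_i^k & \text{otherwise.}
\end{cases}
\]
For $i = r$ this is exactly the statement, since the $k$-th clause is satisfied iff some element $\symElem_j^k$ evaluates to \emph{True} under $\gamma_u$. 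Recall that $\bot$ always evaluates to \emph{False}.

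The base case $i=1$ reads off the $0/1$-out-transitions of $\hat{p}_r^{k-1}$ from \cref{subfig:aPhi}. For $k=1$ this is $p_r$, and for $k \geq 2$ it is $\hat{p}_r^{k-1}$ in the $(k-1)$-th row; both have the same case distinction on $\symElem_1^k$.
\begin{itemize}
\item If $\symElem_1^k = \bot$, both letters lead to $p_1^k$, and $\bot$ is never satisfied.
\item If $\symElem_1^k = x_1$, letter $0$ leads to $p_1^k$ and letter $1$ leads to $\hat{p}_1^k$, matching whether $\gamma_u(x_1) = \sigma_1$ satisfies $x_1$.
\item The case $\symElem_1^k = \neg x_1$ is symmetric.
\end{itemize}
A minor point is to check, against the formal \cref{def:aPhi}, that the figure's transitions for $p_r$ and for $\hat{p}_r^{k-1}$ with $k\geq 2$ agree on $0,1$.

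The inductive step from $i$ to $i+1$ uses \cref{subfig:aPhiXTransitions}, split by whether the current state is $p_i^k$ or $\hat{p}_i^k$.
\begin{itemize}
\item From $\hat{p}_i^k$, both letters $0$ and $1$ lead to $\hat{p}_{i+1}^k$, so once some earlier element is satisfied the run stays in the hatted track. This matches the invariant, because the disjunction stays true.
\item From $p_i^k$, the transition depends on $\symElem_{i+1}^k$. For $\bot$, both letters go to $p_{i+1}^k$. For $x_{i+1}$, letter $0$ goes to $p_{i+1}^k$ and letter $1$ goes to $\hat{p}_{i+1}^k$. For $\neg x_{i+1}$ it is the reverse.
\end{itemize}
In every case the successor is hatted iff $\symElem_{i+1}^k$ evaluates to \emph{True} under $\gamma_u(x_{i+1}) = \sigma_{i+1}$. (The figure labels these cases with $x_i$, but by context they refer to the variable $x_{i+1}$.) This gives the invariant at $i+1$.

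There is no real obstacle; the argument is a straightforward induction. The only care needed is bookkeeping: the transitions out of $p_i^k$ are governed by the element with index $i+1$, the entry transitions into row $k$ are governed by $\symElem_1^k$, and the boundary case $k=1$ uses $\hat{p}_r^0 = p_r$. Note also that $u$ contains no $c$ or $d$, so the $c$- and $d$-transitions of these states play no role.
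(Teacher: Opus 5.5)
Your proposal is correct and follows essentially the same argument as the paper. The paper tracks the run through the $k$-th clause row using two observations: the run stays in $\{p_i^k,\hat{p}_i^k\}$, and the hatted track is absorbing. It then proves the two implications separately, and for the converse it takes the first index at which the run enters a hatted state. Your inductive invariant packages both observations and both directions into one statement. Your remark that the figure's labels $x_i$ should read $x_{i+1}$ is also accurate.
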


Note that reading $0$ or $1$ in any negative target state $p_r^k$ leads to $p_-$, while reading $0$ or $1$ in the final positive target state $\hat{p}_r^s$ leads to $p_+$.
Together with the second assertion enforced by the formula device, this implies that, for every $l \geq s+2$, a word of the form $u^ldc^\symCVar$ is accepted by $\mathcal{A}_\Phi$ if and only if $\gamma_u$ satisfies $\Phi$.
We expand on this argument in our discussion of \cref{lem:aPhiOnerpOnePumpings} in \cref{subsubsec:characterizationToNPHardness}.
Also in \cref{subsubsec:characterizationToNPHardness}, we will see that $\mathcal{A}_\Phi$ is prime if and only if $\mathcal{A}_\Phi$ accepts some word of the form $u^ldc^\symCVar$. Thus, $\mathcal{A}_\Phi$ is prime if and only if $\Phi$ is satisfiable.

We have outlined the \symCNFDFA{} $\mathcal{A}_\Phi$ yielded by the reduction and given an intuition for its design. Next, we prove the correctness of the reduction.

\subsection{Correctness of the Reduction}
\label{subsec:correctnessAPhi}
We complete the proof of \cref{the:primeDFANPHard} by proving the correctness of the reduction.
That is, we prove \cref{lem:aPhiCharacterization}, that $\Phi$ is satisfiable if and only if $\mathcal{A}_\Phi$ is prime.
We accomplish this in two steps. In \cref{subsubsec:compositionalityADFApMLS}, we characterize the compositionality of \symCNFDFA{}s.
Then in \cref{subsubsec:characterizationToNPHardness}, we exploit this characterization to prove \cref{lem:aPhiCharacterization} and thereby \cref{the:primeDFANPHard}.

\subsubsection{Compositionality of Minimal Linear Safety $\symADFAp$s}
\label{subsubsec:compositionalityADFApMLS}
We first define minimal linear safety $\symADFAp$s and then characterize their compositionality.

\paragraph*{Definitions: $\symADFAp$s, Linear DFAs, Safety DFAs}
\label{par:definitionsADFApSafetyLinearity}
Following convention, a DFA $\mathcal{A}$ is \highlightDef{acyclic} ($\symADFA$) if
%every cycle in $\mathcal{A}$ begins in a rejecting sink.
there are no cycles in $\mathcal{A}$ besides those of the rejecting sinks.
Building on that, a DFA $\mathcal{A}$ is an \highlightDef{$\symADFA$-plus} ($\symADFAp$) if
$\mathcal{A}$ possesses both a rejecting and an accepting sink and
%every cycle in $\mathcal{A}$ begins in a sink.
there are no cycles in $\mathcal{A}$ besides those of the sinks.
Thus, an $\symADFAp$ is "almost" an $\symADFA$ but additionally has an accepting sink.
%Clearly, a DFA decides a finite language if and only if its minimal DFA is an $\symADFA$.

We call a DFA $\mathcal{A} = (Q,\Sigma,q_I,\delta,F)$ \highlightDef{linear} if, for every $q,q' \in Q$ with $q \neq q'$, exactly one of the following holds: (i) $q'$ is reachable from $q$; (ii) $q$ is reachable from $q'$; (iii) $q$ and $q'$ are unreachable from each other, and $q$ or $q'$ or both are sinks.
Obviously, every minimal linear DFA has at least one and at most two sinks.

For a minimal $\symADFAp$ $\mathcal{A} = (Q,\Sigma,q_I,\delta,F)$, we denote by $\opLinLen{\mathcal{A}}$ the length of the longest word $w \in \Sigma^*$ such that words $v,v' \in \Sigma^*$ with $\delta(q_I, w v) \in F$ and $\delta(q_I, w v') \notin F$ exist.
Thus, $\opLinLen{\mathcal{A}}$ is the length of the longest word on which $\mathcal{A}$ does not enter a sink.
Because $\mathcal{A}$ is minimal and has two sinks, such a word always exists.
The following holds:

\begin{restatable}{lemma}{lemaDFASizeAndLinearity}
	\label{lem:aDFASizeAndLinearity}\RestateRemark
	Consider a minimal $\symADFAp$ $\mathcal{A} = (Q, \Sigma, q_I, \delta, F)$. Then the following assertions hold:
	\begin{romanenumerate}
		\item $\opSize{\mathcal{A}} \geq \opLinLen{\mathcal{A}} + 3$.
		\item $\mathcal{A}$ is linear if and only if $\opSize{\mathcal{A}} = \opLinLen{\mathcal{A}} + 3$. \lipicsEnd
	\end{romanenumerate}
\end{restatable}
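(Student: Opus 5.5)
We work with the "depth" of states. Let $\mathcal{A}$ be a minimal $\symADFAp$. It has exactly one accepting sink $q_+$ and one rejecting sink $q_-$; by minimality these are unique, since two sinks of equal acceptance status are equivalent. Let $n = \opLinLen{\mathcal{A}}$. Because $\mathcal{A}$ is minimal, every state is reachable. Since the only cycles are the sink self-loops, every path between non-sinks is simple.

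For a non-sink $q$, let $d(q)$ be the length of the longest word $w$ with $\delta(q_I,w) = q$. This is finite because the non-sink part is acyclic. Every non-sink $q$ can still reach both sinks: otherwise $q$ would be equivalent to one of the sinks, contradicting minimality. Hence $q$ lies on no sink, and $d(q) \leq n$. Conversely, take a longest word $w = \sigma_1\dots\sigma_n$ on which $\mathcal{A}$ does not enter a sink. Its initial run visits the non-sinks $q_0,\dots,q_n$. These are pairwise distinct, since a repetition would give a cycle outside the sinks.

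\emph{Assertion (i).} The run above provides $n+1$ distinct non-sinks. Together with $q_+$ and $q_-$ this gives $\opSize{\mathcal{A}} \geq n+3$.

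\emph{Assertion (ii), direction ($\Leftarrow$).} Suppose $\opSize{\mathcal{A}} = n+3$. Then the non-sinks are exactly $q_0,\dots,q_n$. For $i<j$, $q_j$ is reachable from $q_i$ via a factor of $w$. Each non-sink reaches both sinks, so every pair consisting of a non-sink and a sink satisfies (i) or (ii) of the definition of linearity. The two sinks are mutually unreachable, so they satisfy (iii). It remains to check that exactly one condition holds for each pair. For distinct non-sinks, acyclicity forbids mutual reachability, and (iii) fails because neither is a sink. For a non-sink and a sink, the sink cannot reach the non-sink, and again (iii) fails because the two states are comparable. Hence $\mathcal{A}$ is linear.

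\emph{Assertion (ii), direction ($\Rightarrow$).} Suppose $\mathcal{A}$ is linear. Any two distinct non-sinks are comparable under reachability, and by acyclicity this relation is a strict total order on the non-sinks. Every transition between non-sinks goes strictly forward in this order. We claim that the non-sinks can be listed as $r_0 = q_I, r_1, \dots, r_m$, where $r_{i+1}$ is reachable from $r_i$, and that for each $i$ some letter leads directly from $r_i$ to $r_{i+1}$. Indeed, $r_{i+1}$ is reachable from $r_i$ by some path. That path visits only non-sinks, because sinks cannot leave. Its intermediate states lie strictly between $r_i$ and $r_{i+1}$ in the total order, and there are none. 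So the path consists of a single transition.

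Concatenating these letters gives a word of length $m$ that never enters a sink. Hence $m \leq n$. Combined with (i), which gives $m+1 \geq n+1$, we get $m = n$ and therefore $\opSize{\mathcal{A}} = n+3$.

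The main care point is this direction: using the total order to show that consecutive non-sinks are joined by a single transition. This is needed to build a non-sink-avoiding word covering all non-sinks.
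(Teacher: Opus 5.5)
Your treatment of (i) and of direction ($\Leftarrow$) of (ii) matches the paper's proof. However, it relies on a false auxiliary claim: a non-sink of a minimal $\symADFAp$ need \emph{not} reach both sinks, and ``otherwise it would be equivalent to one of the sinks'' does not follow. Take $\Sigma=\{a,b\}$ and states $q_0,q_1,q_+,q_-$ with $\delta(q_0,a)=q_+$, $\delta(q_0,b)=q_1$, $\delta(q_1,a)=\delta(q_1,b)=q_-$, and all states except $q_-$ accepting. This is a minimal (even linear safety) $\symADFAp$, yet $q_1$ cannot reach $q_+$. Its residual language is $\{\varepsilon\}$, which is neither $\Sigma^*$ nor $\emptyset$, so $q_1$ is equivalent to neither sink.

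Minimality only gives you this weaker fact: the residual of a non-sink is neither $\Sigma^*$ nor $\emptyset$. That is all you need to count a word ending in a non-sink towards $\opLinLen{\mathcal{A}}$, and $d(q)$ is never used afterwards anyway. The place where the false claim is actually used is ($\Leftarrow$). There you conclude that every non-sink/sink pair is comparable, so that (iii) fails. Replace this with the paper's case split. A sink reaches no other state, so if the non-sink reaches the sink, exactly one of (i), (ii) holds. Otherwise the two states are mutually unreachable and one of them is a sink, so exactly (iii) holds; this is the case for $q_1,q_+$ above. With that change the argument is complete.

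Your direction ($\Rightarrow$) is correct and a bit more direct than the paper's. The paper proves by induction that every set of non-sinks contains a state from which all other states of the set are reachable. It then greedily concatenates arbitrary nonempty connecting words to build a sink-avoiding word of length at least $\opSize{\mathcal{A}}-3$. You instead note that reachability is a strict total order on the non-sinks whose minimum is $q_I$ (every state is reachable by minimality), and you chain consecutive elements.

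Contrary to your closing remark, the single-transition step is not needed for the length bound. Any connecting word of length at least one suffices, which is exactly what the paper exploits. The step is nevertheless true, and it is the fact $\Sigma_{i-1,i}\neq\emptyset$ that the paper later uses for minimal linear safety $\symADFAp$s.
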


Finally, we introduce a type of DFA already inspected in \cite{kupferman2015prime}.
A regular language $L \subseteq \Sigma^*$ is a \highlightDef{safety language} if, for every $w \in \Sigma^*$, it holds that $w \notin L$ implies $wy \notin L$ for every $y \in \Sigma^*$.
A DFA $\mathcal{A}$ is a \highlightDef{safety DFA} if $\opLang{\mathcal{A}}$ is a safety language.
Clearly, every non-trivial minimal safety DFA has exactly one rejecting state, and this state is a sink.

% !!!!!!!!!! VERSIONSTART !!!!!!!!!!
% !!!!!!!!!! FULL VERSION !!!!!!!!!!
The general form of minimal linear safety $\symADFAp$s is outlined in \cref{fig:pmADFAMinLinSafe}. The following observation can be easily verified by inspecting \cref{def:aPhi,fig:aPhiaPhiXTransitions}.
% !!!!!!!!!! CONF VERSION !!!!!!!!!!
%The general form of minimal linear safety $\symADFAp$s is outlined in \cite[Figure 3]{spenner2026deciding}. The following observation can be easily verified by inspecting \cite[Definition A.1]{spenner2026deciding} and Figure 1.
% !!!!!!!!!! VERSIONEND  !!!!!!!!!!
\begin{restatable}{lemma}{lemaPhiIsMLSADFAp}
	\label{lem:aPhiIsMLSADFAp}\RestateRemark
	The \symCNFDFA{} $\mathcal{A}_\Phi$ is a minimal linear safety $\symADFAp$. \lipicsEnd
\end{restatable}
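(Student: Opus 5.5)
We verify the four properties (ADFA$^+$, safety, linearity, minimality) directly from the construction in \cref{fig:aPhiaPhiXTransitions}, using a fixed linear order on the states. Order the non-sink states as
\[
p_0, p_1, \dots, p_r, p_c^0, \; p_1^1, \hat{p}_1^1, p_2^1, \hat{p}_2^1, \dots, p_r^1, \hat{p}_r^1, p_c^1, \; \dots, \; p_1^s, \hat{p}_1^s, \dots, p_r^s, \hat{p}_r^s, p_c^s,
\]
followed by the two sinks $p_-$ and $p_+$. The first step is to check, by inspecting every transition, that each non-self-loop transition goes strictly forward in this order. The cases are: $p_i \to p_{i+1}$; $p_r \to p_c^0$; $p_r$ and $p_c^0 \to p_1^1/\hat{p}_1^1$; within a row, $p_i^k \to \hat{p}_i^k$ on $c$ and $p_i^k,\hat{p}_i^k \to p_{i+1}^k/\hat{p}_{i+1}^k$; $\hat{p}_r^k \to p_c^k$ and $\hat{p}_r^k, p_c^k \to p_1^{k+1}/\hat{p}_1^{k+1}$; all remaining edges lead to a sink. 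This shows that the only cycles are the self-loops of $p_-$ and $p_+$. Since $p_+$ is accepting and $p_-$ rejecting, $\mathcal{A}_\Phi$ is an $\symADFAp$. Moreover, $p_-$ is the unique non-accepting state and a sink, so $\opLang{\mathcal{A}_\Phi}$ is a safety language.

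For linearity I would show that each state in the order reaches its successor. Then reachability is a total order on the non-sinks, and every non-sink reaches both sinks. Successor reachability is read off the figure: $p_i \to p_{i+1}$ via $0$; $p_r \to p_c^0$ via $d$; $p_c^0 \to p_1^1$ via $c$; $p_i^k \to \hat{p}_i^k$ via $c$; $\hat{p}_i^k \to p_{i+1}^k$ via $c$; $p_r^k \to \hat{p}_r^k$ via $c$; $\hat{p}_r^k \to p_c^k$ via $c$; $p_c^k \to p_1^{k+1}$ via $c$. Reachability of $p_-$ holds via $d$ (or $0,1$ from $p_c^k$). For $p_+$, every state reaches $\hat{p}_r^s$ and hence $p_+$ via $0$. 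The two sinks are mutually unreachable, so condition (iii) holds for that pair, while for any other pair exactly one of (i),(ii) holds by acyclicity.

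The remaining and main step is minimality, i.e.\ that all states are reachable and pairwise inequivalent. Reachability from $p_0$ follows from the chain above. For inequivalence, I would use \cref{lem:aDFASizeAndLinearity}: if the minimal DFA of $\opLang{\mathcal{A}_\Phi}$ is a linear $\symADFAp$ of size $\opLinLen{\cdot}+3$, it suffices to show that $\opLinLen{\mathcal{A}_\Phi} = \opSize{\mathcal{A}_\Phi}-3$, because the length is a language invariant. The quotient of a linear $\symADFAp$ by Nerode equivalence is again a linear $\symADFAp$, so (ii) applies to it.

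To compute this length, note that the word $w = 0^r d\, c^{\mu'}$, which walks the chain state by state, visits all $\opSize{\mathcal{A}_\Phi}-2$ non-sinks, one letter per step. Its length is therefore $\opSize{\mathcal{A}_\Phi}-3$. I must check that after $w$ (ending in $p_c^s$) both an accepted continuation ($0$) and a rejected one ($c$) exist. This shows $\opLinLen \ge \opSize{\mathcal{A}_\Phi}-3$, and the bound is evaluated on the language, so it transfers to the minimal DFA. Combined with (i) of \cref{lem:aDFASizeAndLinearity} for the minimal DFA, this forces $\opInd{\mathcal{A}_\Phi} \ge \opLinLen + 3 = \opSize{\mathcal{A}_\Phi}$, which gives minimality.

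The delicate point is that the minimal DFA is itself an $\symADFAp$: it has both sinks, since $p_-$ and $p_+$ are inequivalent and are reached. Cycles cannot appear in the quotient, since a Nerode class containing a cycle would yield unbounded non-sink words. I would argue this last fact via the bounded length of words avoiding both sink languages.
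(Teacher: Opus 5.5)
Your proof is correct, but it reaches minimality and linearity by a different route than the paper.

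\textbf{The paper's route.} The paper first proves minimality by explicit pairwise distinguishability. Pairs involving a sink are separated by $\varepsilon$ or by a word leading the non-sink to $p_-$. For two non-sinks $q \preceq q'$, it takes the suffix $v'$ of $0^r d c^{\symCVar-1}$ that leads from $q$ to $p_c^s$. Since fewer non-sinks lie after $q'$, $v'$ must drive $q'$ into a sink, and the two states are separated by $v'$ or $v'c$. Only afterwards does it get linearity from \cref{lem:aDFASizeAndLinearity}~(ii) via $\opLinLen{\mathcal{A}_\Phi} = \opSize{\mathcal{A}_\Phi}-3$.

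\textbf{Your route.} You invert this order.
\begin{itemize}
\item Linearity follows directly from your successor chain plus acyclicity, with no minimality needed.
\item Minimality follows by counting. The language-level quantity $\opLinLen{\cdot}$ is at least $\opSize{\mathcal{A}_\Phi}-3$, and \cref{lem:aDFASizeAndLinearity}~(i), applied to the minimal DFA, gives $\opInd{\mathcal{A}_\Phi} \geq \opSize{\mathcal{A}_\Phi}$.
\end{itemize}
The step you flag as delicate closes as you suggest. In a minimal DFA, the states with residual $\emptyset$ or $\Sigma^*$ are exactly the two sinks. A cycle through a non-sink class reached by $x$ with loop word $y$ would therefore give arbitrarily long words $xy^l$ with nontrivial residual. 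But in $\mathcal{A}_\Phi$ every word longer than $\opSize{\mathcal{A}_\Phi}-3$ ends in a sink. Only (i) is needed here; your remark that the quotient is linear, so that (ii) applies, is superfluous.

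\textbf{What each buys.} Your argument gives a reusable criterion and avoids case analysis on pairs: an $\symADFAp$ with a run through all its non-sinks, ending in a state from which both sinks are reachable, is minimal. The paper's argument is fully elementary and never has to reason about the Nerode quotient.

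\textbf{Small slips to fix.}
\begin{itemize}
\item $p_c^s$ does not reach $\hat{p}_r^s$.
\item The letter $d$ does not lead to $p_-$ from $p_1,\dots,p_r$ or from $p_c^s$.
\end{itemize}
The uniform argument is that every non-sink reaches $p_c^s$ along your chain, and $\delta_\Phi(p_c^s,c)=p_-$, $\delta_\Phi(p_c^s,0)=p_+$. Also, the chain word must contain exactly $s(2r+1)$ letters $c$, i.e.\ $0^r d c^{\symCVar-1}$ in the paper's notation. Only then does it end in $p_c^s$ rather than in $p_-$.
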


Clearly, $\mathcal{A}_\Phi$ is a safety DFA since the rejecting sink $p_-$ is the only rejecting state of $\mathcal{A}_\Phi$.
Further, to see that $\mathcal{A}_\Phi$ is additionally a linear $\symADFAp$, recall the following observation from \cref{subsec:constructionAPhi}: 
To fully traverse $\mathcal{A}_\Phi$, a word of the form $udc^\symCVar$ for $u \in \{0,1\}^r$ and a suitable $\symCVar$ has to be read. On such a word, the states of the assignment device are traversed in the order $p_0,\dots,p_r,p_c^0$, and the states of the formula device are traversed in the order $p_1^1,\hat{p}_1^1,\dots,p_r^1,\hat{p}_r^1,p_c^1,\dots,p_1^s,\hat{p}_1^s,\dots,p_r^s,\hat{p}_r^s,p_c^s$. Intuitively, this order describes the "forward direction" in $\mathcal{A}_\Phi$, and there are no "backward transitions" in $\mathcal{A}_\Phi$. Thus, $\mathcal{A}_\Phi$ is acyclic (disregarding $p_+$) and its states are linearly ordered, meaning that $\mathcal{A}_\Phi$ is a linear $\symADFAp$.
Lastly, regarding the minimality, note that two states of $\mathcal{A}_\Phi$ are differentiated by the suffixes of words $udc^\symCVar$ needed to fully traverse the remainder of $\mathcal{A}_\Phi$ and enter the rejecting sink when starting in the respective states.
For example, two states from the formula device are differentiated by the number of $c$'s needed to reach the rejecting sink.
Thus, $\mathcal{A}_\Phi$ is minimal.
In total, $\mathcal{A}_\Phi$ is a minimal linear safety $\symADFAp$.

Thanks to \cref{lem:aPhiIsMLSADFAp}, it is sufficient to characterize the compositionality of minimal linear safety $\symADFAp$s in order to characterize the compositionality of \symCNFDFA{}s.

\paragraph*{Characterization of the Compositionality of Minimal Linear Safety $\symADFAp$s}
\label{par:characterizationOfADFApMLSCompositionality}
Now we characterize the compositionality of minimal linear safety $\symADFAp$s. With \cref{exm:mpProp} at the end of this section, we illustrate the main ideas of the characterization by considering, by way of example, two minimal linear safety $\symADFAp$s, one composite and one prime.

We point out that when we consider the compositionality of a restricted class of DFAs, the (smaller) DFAs used in the decompositions are not restricted. In particular, the DFAs we use in the characterization of the compositionality of minimal linear safety $\symADFAp$s need not be minimal linear safety $\symADFAp$s themselves.

The compositionality of a DFA $\mathcal{A}$ boils down to the question whether, for each word $w \notin \opLang{\mathcal{A}}$, there exists some DFA $\mathcal{B}$ with $w \notin \opLang{\mathcal{B}}$ and $\mathcal{B} \in \opDecompSet{\mathcal{A}}$, the latter meaning $\opSize{\mathcal{B}} < \opInd{\mathcal{A}}$ and $\opLang{\mathcal{A}} \subseteq \opLang{\mathcal{B}}$.
Indeed, if the answer is yes, the collection of these DFAs yields a decomposition of $\mathcal{A}$ (thanks to $\opSize{\mathcal{B}} < \opInd{\mathcal{A}}$, the collection is finite).
If, on the other hand, for some $w \notin \opLang{\mathcal{A}}$, there is no such $\mathcal{B}$, then $w$ is a primality witness of $\mathcal{A}$.

For a minimal linear safety $\symADFAp$ $\mathcal{A}$, the situation is simplified by the fact that there is only one rejecting state, namely, the rejecting sink.
For words $w$ on which $\mathcal{A}$ enters the rejecting sink without visiting every non-sink, building an appropriate DFA rejecting $w$ is relatively simple. 
Essentially, we can just remove one of the skipped-over states $q$ of $\mathcal{A}$ so that the constructed DFA accepts all words on which $\mathcal{A}$ visits $q$ and acts as $\mathcal{A}$ on all other words, including $w$.
% !!!!!!!!!! VERSIONSTART !!!!!!!!!!
% !!!!!!!!!! FULL VERSION !!!!!!!!!!
Details are given in \cref{def:aip}, \cref{subfig:aip} and \cref{lem:aip} in \cref{subsubsec:compositionalityADFApMLSProofs}.
% !!!!!!!!!! CONF VERSION !!!!!!!!!!
%Details are given in the full version \cite[Definition A.3, Figure 4a and Lemma A.4]{spenner2026deciding}.
% !!!!!!!!!! VERSIONEND  !!!!!!!!!!
Thus, in the following, we only need to consider the words on which $\mathcal{A}$ enters its rejecting sink after visiting every non-sink. Formally, we define the set of these words as:

\begin{restatable}{definitionRoman}{defdifficultWords}
	\label{def:difficultWords}\RestateRemark
	Consider a minimal linear safety $\symADFAp$ $\mathcal{A}$ over an alphabet $\Sigma$. Let $\symLinLenVar = \opLinLen{\mathcal{A}}$.
	Then we define the set $\opDiffWords{\mathcal{A}}$ of the \highlightDef{max-visiting words of} $\mathcal{A}$ as follows:
	\begin{align*}
		\opDiffWords{\mathcal{A}} = \{v \sigma \mid v \in \Sigma^\symLinLenVar \wedge \sigma \in \Sigma \wedge v \in \opLang{\mathcal{A}} \wedge v \sigma \notin \opLang{\mathcal{A}}\}. \tag*{\lipicsEnd}
	\end{align*}
\end{restatable}

On a max-visiting word $w = v \sigma \in \opDiffWords{\mathcal{A}}$, the $\symADFAp$ $\mathcal{A}$ does not skip over any non-sink.
Thus, the idea of removing a skipped-over state to construct an appropriate DFA rejecting $w$ is not applicable.
Indeed, the situation for the max-visiting words is more involved.
It turns out that every relevant $\mathcal{B} \in \opDecompSet{\mathcal{A}}$ is a safety DFA with an accepting sink.
Thus, every such $\mathcal{B}$ possesses one accepting sink, one rejecting sink, and at most $\symLinLenVar$ accepting non-sinks.
% !!!!!!!!!! VERSIONSTART !!!!!!!!!!
% !!!!!!!!!! FULL VERSION !!!!!!!!!!
Here, we omit the details and refer to \cref{lem:safetyDecomp,lem:acceptingSink} and their proofs in \cref{subsubsec:compositionalityADFApMLSProofs}.
% !!!!!!!!!! CONF VERSION !!!!!!!!!!
%Here, we omit the details and refer to the full version \cite[Lemmas A.10 and A.11]{spenner2026deciding}.
% !!!!!!!!!! VERSIONEND  !!!!!!!!!!
Since $\mathcal{B}$ has at most $\symLinLenVar$ non-sinks and the prefix $v$ of $w$ has length $\symLinLenVar$, the initial run of $\mathcal{B}$ on $v$ has to visit a non-sink twice.
%Thus, every DFA in $\opDecompSet{\mathcal{A}}$ is "at least one state short" to keep track of $w$.
Therefore, for some $x,y,z$ with $w = xyz$ and $|y|,|z| > 0$, the DFA $\mathcal{B}$ reaches the same state after reading $x$, $xy$ and every further $xy^l$. This observation gives rise to the definition of the \symPPLong:

\begin{restatable}{definitionRoman}{defpP}
	\label{def:pP}\RestateRemark
	Consider a minimal linear safety $\symADFAp$ $\mathcal{A}$ over an alphabet $\Sigma$. We say that $\mathcal{A}$ has the \highlightDef{\symPPLong} (\highlightDef{\symPPShort}) if, for every max-visiting word $w \in \opDiffWords{\mathcal{A}}$, there are $x,y,z \in \Sigma^*$ with $w = xyz$ and $|y|,|z|>0$ so that $xy^lz \notin \opLang{\mathcal{A}}$ holds for every $l \in \symNatNum$. \lipicsEnd
\end{restatable}

Before we state the desired characterization of the compositionality of minimal linear safety $\symADFAp$s in terms of the \symPPShort{}, we introduce an alternative formulation of the \symPPShort{}, for which we need some additional notation.
Consider a word $w = \sigma_1 \dots \sigma_n \in \Sigma^n$.
For $i, j \in \{1,\dots,n+1\}, i < j, l \in \symNatNum$, we define $\opPump{w}{i,j;l} = \sigma_1\dots\sigma_{i-1}(\sigma_i\dots\sigma_{j-1})^l\sigma_{j}\dots\sigma_n$.
We refer to $\opPump{w}{i,j;l}$ as a \highlightDef{pumping of} $w$.
The following lemma provides an alternative formulation of the \symPPShort{} by translating it into the $\opPump{w}{i,j;l}$-notation.

\begin{restatable}{lemma}{lempP}
	\label{lem:pP}\RestateRemark
	Consider a minimal linear safety $\symADFAp$ $\mathcal{A}$. Let $\symLinLenVar = \opLinLen{\mathcal{A}}$. Then $\mathcal{A}$ has the \symPPShort{} if and only if, for every max-visiting word $w \in \opDiffWords{\mathcal{A}}$, there are $i,j \in \{1,\dots,\symLinLenVar+1\}$ with $i<j$ so that $\opPump{w}{i,j;l} \notin \opLang{\mathcal{A}}$ holds for every $l \in \symNatNum$. \lipicsEnd
\end{restatable}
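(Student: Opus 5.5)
This is a direct translation between the factorization $w = xyz$ and the index notation $\opPump{w}{i,j;l}$. The only thing to check is that the index bounds match the constraints $|y|,|z|>0$. Every max-visiting word $w \in \opDiffWords{\mathcal{A}}$ has the form $v\sigma$ with $|v| = \symLinLenVar$, so $|w| = n = \symLinLenVar+1$. I prove both directions separately, fixing an arbitrary $w = \sigma_1\dots\sigma_n \in \opDiffWords{\mathcal{A}}$ in each.

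For ($\Rightarrow$), suppose $\mathcal{A}$ has the \symPPShort{}. By \cref{def:pP} there are $x,y,z$ with $w=xyz$, $|y|,|z|>0$ and $xy^lz\notin\opLang{\mathcal{A}}$ for all $l\in\symNatNum$. Set $i = |x|+1$ and $j = |x|+|y|+1$.
\begin{itemize}
\item Since $|y|>0$, we get $i<j$.
\item Clearly $i \ge 1$.
\item Since $|z|>0$, we get $j \le n = \symLinLenVar+1$, and hence $i,j\in\{1,\dots,\symLinLenVar+1\}$.
\end{itemize}
With these indices, $x=\sigma_1\dots\sigma_{i-1}$, $y=\sigma_i\dots\sigma_{j-1}$ and $z=\sigma_j\dots\sigma_n$. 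Therefore $\opPump{w}{i,j;l} = xy^lz\notin\opLang{\mathcal{A}}$ for every $l$.

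For ($\Leftarrow$), take $i<j$ in $\{1,\dots,\symLinLenVar+1\}$ as given, and define $x=\sigma_1\dots\sigma_{i-1}$, $y=\sigma_i\dots\sigma_{j-1}$ and $z=\sigma_j\dots\sigma_n$.
\begin{itemize}
\item Then $w=xyz$.
\item $|y| = j-i>0$.
\item Because $j\le \symLinLenVar+1 = n$, the suffix $z$ contains at least $\sigma_n$, so $|z| = n-j+1>0$.
\end{itemize}
Moreover $xy^lz=\opPump{w}{i,j;l}\notin\opLang{\mathcal{A}}$ for every $l$. Since $w$ was arbitrary, $\mathcal{A}$ has the \symPPShort{}.

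There is no real obstacle here. The only point needing care is the length identity $|w|=\symLinLenVar+1$, which follows directly from \cref{def:difficultWords}. It is what makes the upper bound $j\le\symLinLenVar+1$ correspond exactly to $|z|>0$, so that $j=n+1$, which the pumping notation would otherwise permit, is excluded.
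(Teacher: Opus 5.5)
Your proof is correct and matches the paper's approach. The paper gives no separate proof of this lemma and simply treats it as rewriting the factorization $w = xyz$ in the notation $\opPump{w}{i,j;l}$ via $i = |x|+1$, $j = |x|+|y|+1$, with $|w| = \symLinLenVar+1$ making $j \le \symLinLenVar+1$ correspond exactly to $|z|>0$.
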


For
a minimal linear safety $\symADFAp$ $\mathcal{A}$,
a max-visiting word $w \in \opDiffWords{\mathcal{A}}$
and indices $i,j$,
we say that the \highlightDef{\symPPShort-condition holds for $w,i,j$} if
%$\delta(q_0, \opPump{w}{i,j;l}) = q_-$
$\opPump{w}{i,j;l} \notin \opLang{\mathcal{A}}$
holds for every $l \in \symNatNum$.
Note that $\opPump{w}{i,j;l} \notin \opLang{\mathcal{A}}$ implies $\opPump{w}{i,j;l}w' \notin \opLang{\mathcal{A}}$ for all words $w'$ because $\mathcal{A}$ is a safety DFA.

With the \symPPShort{} defined, we can characterize the compositionality of minimal linear safety $\symADFAp$s as follows:

\begin{restatable}{theorem}{thepmADFAMinLinSafeCompositionality}
	\label{the:pmADFAMinLinSafeCompositionality}\RestateRemark
	A minimal linear safety $\symADFAp$ is prime if and only if it does not have the \symPPShort. \lipicsEnd
\end{restatable}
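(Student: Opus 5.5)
We prove both directions separately, writing $\mathcal{A} = (Q,\Sigma,q_I,\delta,F)$ for a minimal linear safety $\symADFAp$ with rejecting sink $q_-$, accepting sink $q_+$ and $\symLinLenVar = \opLinLen{\mathcal{A}}$. By \cref{lem:aDFASizeAndLinearity}, $\opInd{\mathcal{A}} = \symLinLenVar + 3$. The DFAs in $\opDecompSet{\mathcal{A}}$ therefore have at most $\symLinLenVar+2$ states.

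\emph{Prime implies no \symPPShort{}} (by contraposition). Assume $\mathcal{A}$ has the \symPPShort{}. We show that every $w \notin \opLang{\mathcal{A}}$ is rejected by some $\mathcal{B} \in \opDecompSet{\mathcal{A}}$. Since $\mathcal{A}$ is safety, it suffices to treat the shortest rejected prefix $w'$ of $w$: any safety $\mathcal{B}$ rejecting $w'$ also rejects $w$.

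If the run on $w'$ skips a non-sink $q$, we use the state-removal construction. We delete $q$ and redirect every transition into $q$ to $q_+$. The resulting DFA has $\symLinLenVar+2$ states and accepts a superset of $\opLang{\mathcal{A}}$, because runs through $q$ are now accepted. It still rejects $w'$, since runs avoiding $q$ are unchanged.

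Otherwise, by linearity the run visits all $\symLinLenVar+1$ non-sinks in order, so $w' \in \opDiffWords{\mathcal{A}}$. By \cref{lem:pP}, there are $i<j \le \symLinLenVar+1$ such that every pumping $\opPump{w'}{i,j;l}$ is rejected. We then build $\mathcal{B}$ by merging the state reached after $\sigma_1\dots\sigma_{i-1}$ with the state reached after $\sigma_1\dots\sigma_{j-1}$. Concretely, $\mathcal{B}$ follows the run of $\mathcal{A}$ on $w'$ but loops back at position $j$. Every deviation from $w'$ at position $t$ is sent to the state $\delta(q_I,\sigma_1\dots\sigma_{t-1}\tau)$ of $\mathcal{A}$ and continues as $\mathcal{A}$ does. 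Because the state at position $t$ is the $t$-th state of the linear order, this can be arranged without extra states.

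The key claim is $\opLang{\mathcal{A}} \subseteq \opLang{\mathcal{B}}$. Any word $\mathcal{B}$ rejects either has the form $\opPump{w'}{i,j;l}v$, which $\mathcal{A}$ rejects by the \symPPShort{} and safety, or it leaves the loop. In the latter case the word is some $\opPump{w'}{i,j;l}$-prefix followed by a deviation, and the continuation reaches a state of $\mathcal{A}$ at least as far along as the corresponding state reached on an actual prefix of $w'$. I expect this step to be the main obstacle. It needs a monotonicity lemma for linear safety $\symADFAp$s: languages of later states are "smaller" in the relevant sense, or acceptance is preserved when entering a state earlier in the order. The construction of $\mathcal{B}$ may need adjusting, for instance by sending deviations to $q_+$ whenever $\mathcal{A}$'s target would be ambiguous, until this inclusion goes through. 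The state count is $(j-1) + (\symLinLenVar+1-j+1) + $ sinks, which is at most $\symLinLenVar+2$ after the merge.

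\emph{No \symPPShort{} implies prime.} Take $w \in \opDiffWords{\mathcal{A}}$ such that for every $i<j$ some pumping $\opPump{w}{i,j;l}$ is accepted by $\mathcal{A}$. We claim $w$ is a primality witness. Let $\mathcal{B} \in \opDecompSet{\mathcal{A}}$. By \cref{lem:safetyDecomp,lem:acceptingSink}, we may assume $\mathcal{B}$ is a safety DFA with an accepting sink and a rejecting sink. Hence $\mathcal{B}$ has at most $\symLinLenVar$ accepting non-sinks.

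Suppose $\mathcal{B}$ rejects $w = v\sigma$ with $|v| = \symLinLenVar$. If $\mathcal{B}$ entered its accepting sink on a prefix, it would accept $w$, a contradiction. If it entered the rejecting sink strictly inside $v$, then the first rejected prefix of $w$ has length at most $\symLinLenVar$, and a pumping argument forces a pumping that $\mathcal{A}$ accepts but $\mathcal{B}$ rejects, contradicting $\opLang{\mathcal{A}} \subseteq \opLang{\mathcal{B}}$. So the run on $v$ stays within the at most $\symLinLenVar$ non-sinks over $\symLinLenVar+1$ positions, and by pigeonhole it repeats a state at positions $i<j\le\symLinLenVar+1$. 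Consequently $\mathcal{B}$ rejects $\opPump{w}{i,j;l}$ for every $l$, yet $\mathcal{A}$ accepts one of these pumpings, contradicting $\opLang{\mathcal{A}} \subseteq \opLang{\mathcal{B}}$. Hence every $\mathcal{B}$ accepts $w$ while $\mathcal{A}$ rejects it, so $\mathcal{A}$ is prime.
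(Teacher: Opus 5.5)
\textbf{The gap is in the automaton you build for max-visiting words.} The rest matches the paper. Your second direction is the paper's argument: reduce to safety DFAs with both sinks via \cref{lem:safetyDecomp,lem:acceptingSink}, then pigeonhole over the at most $\symLinLenVar$ non-sinks along $\sigma_1\dots\sigma_\symLinLenVar$. The case where $\mathcal{B}$ hits its rejecting sink while reading $v$ needs no pumping, since every prefix of $v$ already lies in $\opLang{\mathcal{A}}$. Your state-removal automaton is the paper's $\mathcal{A}_i^+$.

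The automaton for max-visiting words carries the whole weight of this direction, and it fails. Letting deviations ``continue as $\mathcal{A}$ does'' does not give $\opLang{\mathcal{A}}\subseteq\opLang{\mathcal{B}}$, and the monotonicity lemma you hope for is false. A later state may accept continuations that an earlier one rejects, and after the merge $\mathcal{B}$ cannot tell whether the loop has been traversed.

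Concretely, let $\Sigma=\{a,b,c\}$ with non-sinks $q_0,\dots,q_3$ and transitions $q_0\xrightarrow{a}q_1\xrightarrow{b}q_2$, $\delta(q_2,a)=\delta(q_2,c)=q_3$, $\delta(q_0,b)=\delta(q_1,c)=\delta(q_3,a)=\delta(q_3,c)=q_+$, and all other transitions into $q_-$. This is a minimal linear safety $\symADFAp$ with $\symLinLenVar=3$ and $\opDiffWords{\mathcal{A}}=\{abab,abcb\}$. It has the \symPPShort{}: $abab$ via $i=2,j=3$, and $abcb$ via $i=1,j=3$, which is the only pair for which the \symPPShort-condition holds for $abcb$. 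For $w'=abcb$, your $\mathcal{B}$ merges $q_0$ with $q_2$, so $q_1\xrightarrow{b}q_0$. The deviation $a$ at $q_1$ is sent to $\delta(q_1,a)=q_-$. Hence $\mathcal{B}$ rejects $abaa$, whereas $\mathcal{A}$ accepts it via $q_0,q_1,q_2,q_3,q_+$. Note that $q_3$ accepts $a$ while the earlier $q_1$ rejects it.

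\textbf{The repair.} The paper's $\mathcal{A}_{w,i,j}$ (\cref{def:awij}) sends \emph{every} deviation to the accepting sink, not just the ambiguous ones. Then $\mathcal{B}$ rejects exactly the extensions of the words $\opPump{w'}{i,j;l}$. The inclusion follows at once from the \symPPShort-condition plus safety, with no further structural property of $\mathcal{A}$ needed. Partial redirection is fragile anyway: a deviation continuing as in $\mathcal{A}$ may need the deleted state $q_{j-1}$, or reach $q_{j-2}$, whose $\sigma_{j-1}$-edge now returns to $q_{i-1}$.

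\textbf{Two further points.} You never ensure $\sigma_i\neq\sigma_j$, which the merged state needs in order to go to $q_i$ on $\sigma_i$ and to $q_j$ on $\sigma_j$. The paper obtains this in \cref{lem:pPAndDifferingSigmas} by taking $i$ maximal.
\begin{itemize}
\item If $\sigma_i=\sigma_j$ and $j\le\symLinLenVar$, the pair $(i+1,j+1)$ produces exactly the same pumpings.
\item If $j=\symLinLenVar+1$, rejection of $\opPump{w'}{i,j;0}$ forces $\delta(q_{i-1},\sigma_j)=q_-\neq q_i=\delta(q_{i-1},\sigma_i)$.
\end{itemize}
The resulting automaton has $\symLinLenVar+2$ states. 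Your count $(j-1)+(\symLinLenVar-j+2)+2$ actually equals $\symLinLenVar+3$, so it is off by one.
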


We now set out to give an intuition for this result.

First, towards a proof by contraposition, assume that the minimal linear safety $\symADFAp$ $\mathcal{A}$ has the \symPPShort{} and consider a max-visiting word $w = \sigma_1 \dots \sigma_{\symLinLenVar+1} \in \opDiffWords{\mathcal{A}}$. Then there are some indices $i,j$ with $\opPump{w}{i,j;l} \notin \opLang{\mathcal{A}}$ for every $l$. Thus, when trying to build a DFA $\mathcal{B} \in \opDecompSet{\mathcal{A}}$ rejecting $w$, we know that $\mathcal{B}$ is allowed to reject every $\opPump{w}{i,j;l}$ as well. We can use this to construct $\mathcal{B}$ out of $\mathcal{A}$ by essentially merging the states that $\mathcal{A}$ reaches after reading $\sigma_1 \dots \sigma_{i-1}$ and $\sigma_1 \dots \sigma_{j-1}$. Since we can build a suitable DFA for every max-visiting word, $\mathcal{A}$ is composite.
% !!!!!!!!!! VERSIONSTART !!!!!!!!!!
% !!!!!!!!!! FULL VERSION !!!!!!!!!!
Details for this construction are given in \cref{def:awij}, \cref{subfig:awij} and \cref{lem:awij} in \cref{subsubsec:compositionalityADFApMLSProofs}.
% !!!!!!!!!! CONF VERSION !!!!!!!!!!
%Details for this construction are given in the full version \cite[Definition A.6, Figure 4b and Lemma A.7]{spenner2026deciding}.
% !!!!!!!!!! VERSIONEND  !!!!!!!!!!

Now assume that $\mathcal{A}$ does not have the \symPPShort. Then there is a word $w \in \opDiffWords{\mathcal{A}}$ that breaks the \symPPShort, meaning that, for every indices $i,j$, there is an $l$ with $\opPump{w}{i,j;l} \in \opLang{\mathcal{A}}$. We argue that this $w$ is a primality witness of $\mathcal{A}$.

Towards a contradiction, assume that there is a $\mathcal{B} \in \opDecompSet{\mathcal{A}}$ with $w \notin \opLang{\mathcal{B}}$. We pointed out above that there have to be indices $i,j$ so that, for every $l$, the DFA $\mathcal{B}$ cannot differentiate between $w$ and $\opPump{w}{i,j;l}$, meaning that $\mathcal{B}$ rejects every such pumping $\opPump{w}{i,j;l}$. Yet, because $w$ breaks the \symPPShort, there is some $l$ so that $\opPump{w}{i,j;l} \in \opLang{\mathcal{A}}$,
yielding a contradiction.
%With $\opPump{w}{i,j;l} \notin \opLang{\mathcal{B}}$ and $\opPump{w}{i,j;l} \in \opLang{\mathcal{A}}$, we have $\opLang{\mathcal{A}} \not\subseteq \opLang{\mathcal{B}}$, which contradicts $\mathcal{B} \in \opDecompSet{\mathcal{A}}$.

Since there is no $\mathcal{B} \in \opDecompSet{\mathcal{A}}$ with $w \notin \opLang{\mathcal{B}}$, the selected max-visiting word $w$ is a primality witness of $\mathcal{A}$, meaning that $\mathcal{A}$ is prime.

In conclusion, if $\mathcal{A}$ has the \symPPShort, we can build, for every $w \in \opDiffWords{\mathcal{A}}$, a suitable DFA rejecting $w$ by merging appropriate states of $\mathcal{A}$. Otherwise, that is, if $\mathcal{A}$ does not have the \symPPShort, then every word $w \in \opDiffWords{\mathcal{A}}$ breaking the \symPPShort{} is a primality witness of $\mathcal{A}$ because every DFA in $\opDecompSet{\mathcal{A}}$ is necessarily confused about some pumpings of $w$ and can therefore not reject $w$ itself. In total, we argued that a minimal linear safety $\symADFAp$ is prime if and only if it does not have the \symPPShort, which is formalized above in \cref{the:pmADFAMinLinSafeCompositionality}.

\begin{figure}
	\newcommand{\nHS}{1.5} 	% Norma horizontal step
	\newcommand{\sHS}{1.25}	% Small horizontal step
	\newcommand{\nVS}{0.75}	% Normal vertical step
	\newcommand{\lVS}{3}	% Large vertical step
	\scriptsize
	\begin{subfigure}{0.485\textwidth}
		\centering
		\begin{tikzpicture}
			\node[state, initial, accepting]		at ( 0 * \nHS,  0 * \nVS)			(q0)				{$q_0$};
			\node[state, accepting]					at ( 1 * \nHS,  0 * \nVS)			(q1)				{$q_1$};
			\node[state, accepting]					at ( 2 * \nHS,  0 * \nVS)			(q2)				{$q_2$};
			\node[state, accepting]					at ( 3 * \nHS,  1 * \nVS)			(q+)				{$q_+$};
			\node[state]							at ( 3 * \nHS, -1 * \nVS)			(q-)				{$q_-$};
			
			\draw (q0)	edge[above]						node{$a$}			(q1);
			\draw (q0)	edge[above, bend left]			node{$b$}			(q2);
			%\draw (q0)	edge[above, bend left, red]		node{$b$}			(q2);
			
			\draw (q1)	edge[below, bend right=12.5]	node{$a$}			(q-);
			\draw (q1)	edge[above]						node{$b$}			(q2);
			
			\draw (q2)	edge[above]						node{$a$}			(q+);
			\draw (q2)	edge[above]						node{$b$}			(q-);
			
			\draw (q+)	edge[loop right]				node{$\Sigma$}		(q+);
			
			\draw (q-)	edge[loop right]				node{$\Sigma$}		(q-);
		\end{tikzpicture}
		\caption{
			Minimal linear safety $\symADFAp$ $\mathcal{A}$ from \cref{exm:mpProp}. It has the \symPPShort{}. Thus, it is \linebreak composite.
		}
		\label{subfig:aDFAWithMpProp}
	\end{subfigure}
	\begin{subfigure}{0.485\textwidth}
		\centering
		\begin{tikzpicture}
			\node[state, initial, accepting]		at ( 0 * \nHS,  0 * \nVS)			(q0)				{$q_0$};
			\node[state, accepting]					at ( 1 * \nHS,  0 * \nVS)			(q1)				{$q_1$};
			\node[state, accepting]					at ( 2 * \nHS,  0 * \nVS)			(q2)				{$q_2$};
			\node[state, accepting]					at ( 3 * \nHS,  1 * \nVS)			(q+)				{$q_+$};
			\node[state]							at ( 3 * \nHS, -1 * \nVS)			(q-)				{$q_-$};
			
			\draw (q0)	edge[above]						node{$a$}			(q1);
			\draw (q0)	edge[above, bend left=12.5]		node{$b$}			(q+);
			%\draw (q0)	edge[above, bend left=12.5, red]node{$b$}			(q+);
			
			\draw (q1)	edge[below, bend right=12.5]	node{$a$}			(q-);
			\draw (q1)	edge[above]						node{$b$}			(q2);
			
			\draw (q2)	edge[above]						node{$a$}			(q+);
			\draw (q2)	edge[above]						node{$b$}			(q-);
			
			\draw (q+)	edge[loop right]				node{$\Sigma$}		(q+);
			
			\draw (q-)	edge[loop right]				node{$\Sigma$}		(q-);
		\end{tikzpicture}
		\caption{
			Minimal linear safety $\symADFAp$ $\mathcal{A}'$ from \cref{exm:mpProp}. It does not have the \symPPShort{}. Thus, it is prime.
		}
		\label{subfig:aDFAWithoutMpProp}
	\end{subfigure}
	\begin{subfigure}{1\textwidth}
		%
		%\centering
		\begin{tikzpicture}
			%
			% DFA A_1^+
			
			\node[state, draw=none]					at (-1 * \sHS,  0 * \nVS)						(xn)				{$\mathcal{A}_1^+$:};
			\node[state, initial, accepting]		at ( 0 * \sHS,  0 * \nVS)						(xq0)				{$q_0$};
			\node[state, draw=none]					at ( 1 * \sHS,  0 * \nVS)						(xq1)				{};
			\node[state, accepting]					at ( 2 * \sHS,  0 * \nVS)						(xq2)				{$q_2$};
			\node[state, accepting]					at ( 3 * \sHS,  1 * \nVS)						(xq+)				{$q_+$};
			\node[state]							at ( 3 * \sHS, -1 * \nVS)						(xq-)				{$q_-$};
			
			\draw (xq0)	edge[above, bend left=12.5]		node{$a$}			(xq+);
			\draw (xq0)	edge[above]						node{$b$}			(xq2);
			
			%\draw (q1)	edge[below, bend right=12.5]	node{$a$}			(q-);
			%\draw (q1)	edge[above]						node{$b$}			(q2);
			
			\draw (xq2)	edge[above]						node{$a$}			(xq+);
			\draw (xq2)	edge[above]						node{$b$}			(xq-);
			
			\draw (xq+)	edge[loop right]				node{$\Sigma$}		(xq+);
			
			\draw (xq-)	edge[loop right]				node{$\Sigma$}		(xq-);
		\end{tikzpicture}
		\begin{tikzpicture}
			%
			% DFA A_2^+
			
			\node[state, draw=none]					at (-1 * \sHS,  0 * \nVS)			(yn)				{$\mathcal{A}_2^+$:};
			\node[state, initial, accepting]		at ( 0 * \sHS,  0 * \nVS)			(yq0)				{$q_0$};
			\node[state, accepting]					at ( 1 * \sHS,  0 * \nVS)			(yq1)				{$q_1$};
			\node[state, draw=none]					at ( 2 * \sHS,  0 * \nVS)			(yq2)				{};
			\node[state, accepting]					at ( 3 * \sHS,  1 * \nVS)			(yq+)				{$q_+$};
			\node[state]							at ( 3 * \sHS, -1 * \nVS)			(yq-)				{$q_-$};
			
			\draw (yq0)	edge[above]						node{$a$}			(yq1);
			\draw (yq0)	edge[above, bend left=12.5]		node{$b$}			(yq+);
			
			\draw (yq1)	edge[below]						node{$a$}			(yq-);
			\draw (yq1)	edge[above]						node{$b$}			(yq+);
			
			%\draw (q2)	edge[above]						node{$a$}			(q+);
			%\draw (q2)	edge[above]						node{$b$}			(q-);
			
			\draw (yq+)	edge[loop right]				node{$\Sigma$}		(yq+);
			
			\draw (yq-)	edge[loop right]				node{$\Sigma$}		(yq-);
		\end{tikzpicture}
		\centering
		\begin{tikzpicture}
			%
			% DFA A_{w,1,2}
			
			\node[state, draw=none]					at (-1 * \sHS,  0 * \nVS)			(zn)				{$\mathcal{A}_{abb,1,2}$:};
			\node[state, initial, accepting]		at ( 0 * \sHS,  0 * \nVS)			(zq0)				{$q_0$};
			\node[state, draw=none]					at ( 1 * \sHS,  0 * \nVS)			(zq1)				{};
			\node[state, accepting]					at ( 2 * \sHS,  0 * \nVS)			(zq2)				{$q_2$};
			\node[state, accepting]					at ( 3 * \sHS,  1 * \nVS)			(zq+)				{$q_+$};
			\node[state]							at ( 3 * \sHS, -1 * \nVS)			(zq-)				{$q_-$};
			
			\draw (zq0)	edge[loop above]				node{$a$}			(zq+);
			\draw (zq0)	edge[above]						node{$b$}			(zq2);
			
			%\draw (q1)	edge[below, bend right=12.5]	node{$a$}			(q-);
			%\draw (q1)	edge[above]						node{$b$}			(q2);
			
			\draw (zq2)	edge[above]						node{$a$}			(zq+);
			\draw (zq2)	edge[above]						node{$b$}			(zq-);
			
			\draw (zq+)	edge[loop right]				node{$\Sigma$}		(zq+);
			
			\draw (zq-)	edge[loop right]				node{$\Sigma$}		(zq-);
		\end{tikzpicture}
		\caption{
			Decomposition of the composite minimal linear safety $\symADFAp$ $\mathcal{A}$ from \cref{subfig:aDFAWithMpProp}.
			The three DFAs are constructed according to
			% !!!!!!!!!! VERSIONSTART !!!!!!!!!!
			% !!!!!!!!!! FULL VERSION !!!!!!!!!!
			Definitions \ref{def:aip} and \ref{def:awij}.
			% !!!!!!!!!! CONF VERSION !!!!!!!!!!
			%\cite[Definitions A.3 and A.6]{spenner2026deciding}.
			% !!!!!!!!!! VERSIONEND  !!!!!!!!!!
		}
		\label{subfig:aDFAWithMpPropDecomp}
	\end{subfigure}
	\caption{Minimal linear safety $\symADFAp$s $\mathcal{A}$ and $\mathcal{A}'$ from \cref{exm:mpProp}. Decomposition of $\mathcal{A}$.}
	\label{fig:aDFAsWithAndWithoutMpProp}
\end{figure}

We conclude this discussion of the compositionality of minimal linear safety $\symADFAp$s by using an example to further illustrate the main ideas.
\begin{example}
	\label{exm:mpProp}
	In this example, we consider two simple minimal linear safety $\symADFAp$s and argue that one of them has the \symPPShort, while the other has not. Further, we outline how this leads to the one being composite and the other being prime.
	
	Consider the two DFAs $\mathcal{A} = (Q,\Sigma,q_0,\delta,F)$ and $\mathcal{A}' = (Q',\Sigma,q_0,\delta',F')$ depicted in \cref{subfig:aDFAWithMpProp,subfig:aDFAWithoutMpProp}.
	It is easy to verify that both DFAs are minimal linear safety $\symADFAp$s and that $\opLinLen{\mathcal{A}} = \opLinLen{\mathcal{A}'} = 2$ and $\opDiffWords{\mathcal{A}} = \opDiffWords{\mathcal{A}'} = \{abb\}$ hold. Thus, for both $\symADFAp$s, their possession of the \symPPShort{}, and with it their compositionality, depends solely on the max-visiting word $w = abb$.
	Note that $\mathcal{A}$ and $\mathcal{A}'$ differ in only one transition: We have $\delta(q_0,b) = q_2$ but $\delta'(q_0,b) = q_+$. Still, we will see that, with this difference, $\mathcal{A}$ has the \symPPShort{} and is composite, while $\mathcal{A}'$ does not have the \symPPShort{} and is prime.
	
	First, we consider $\mathcal{A}$ and argue that the \symPPShort-condition holds for $w=abb,i=1,j=2$.
	For $l=0$, we have $\opPump{w}{1,2;l} = a^lbb = a^0bb = bb \notin \opLang{\mathcal{A}}$. Here, the transition $\delta(q_0,b) = q_2$ (in contrast to $\delta'(q_0,b) = q_+$) is crucial.
	For $l=1$, we trivially have $\opPump{w}{1,2;l} = a^lbb = a^1bb = w \notin \opLang{\mathcal{A}}$.
	And lastly, for $l\geq 2$, we have $\opPump{w}{1,2;l} = a^lbb \notin \opLang{\mathcal{A}}$ due to $\delta(q_0,aa) = q_-$.
	Thus, the \symPPShort-condition holds for $w,i=1,j=2$. Since $\opDiffWords{\mathcal{A}} = \{w\}$, this immediately implies that $\mathcal{A}$ has the \symPPShort.
	
	With \cref{the:pmADFAMinLinSafeCompositionality}, $\mathcal{A}$ having the \symPPShort{} implies that it is composite.
	% !!!!!!!!!! VERSIONSTART !!!!!!!!!!
	% !!!!!!!!!! FULL VERSION !!!!!!!!!!
	In \cref{subsubsec:compositionalityADFApMLSProofs}, we explicitly state the decomposition of $\symADFAp$s with the \symPPShort{} (\cref{lem:pmADFAMinLinSafeWithPPDecomposition}) and outline all necessary DFAs (Definitions \ref{def:aip} and \ref{def:awij} and \cref{fig:aipAwij}).
	% !!!!!!!!!! CONF VERSION !!!!!!!!!!
	%In the full version, we explicitly state the decomposition of $\symADFAp$s with the \symPPShort{} \cite[Lemma A.8]{spenner2026deciding} and outline all necessary DFAs \cite[Definitions A.3 and A.6 and Figure 4]{spenner2026deciding}.
	% !!!!!!!!!! VERSIONEND  !!!!!!!!!!
	Here, we discuss the decomposition of the specific $\symADFAp$ $\mathcal{A}$ to motivate how its compositionality is connected to the \symPPShort.
	
	The decomposition of $\mathcal{A}$ consists of the three simple DFAs $\mathcal{A}_1^+, \mathcal{A}_2^{+}$ and $\mathcal{A}_{w,1,2}$ depicted in \cref{subfig:aDFAWithMpPropDecomp}.
	In this decomposition, $\mathcal{A}_{w,1,2}$ is the only DFA exploiting the \symPPShort.
	It is used to reject the max-visiting word $w$ (and its extensions).
	Basically, it omits the state $q_1$ and instead adds a self-loop $q_0 \xrightarrow{a} q_0$. For this to be allowed, it is crucial that $\opPump{w}{1,2;l} = a^lbb \notin \opLang{\mathcal{A}}$ holds for every $l \in \symNatNum$, that is, that the \symPPShort-condition holds for $w,i=1,j=2$.
	Further, $\mathcal{A}_1^+$ and $\mathcal{A}_2^+$ reject the words rejected by $\mathcal{A}$ on which $\mathcal{A}$ does not enter $q_1$ or $q_2$, respectively. They are constructed out of $\mathcal{A}$ by omitting the respective state $q_i$ and redirecting all in-$q_i$-transition into $q_+$.
	
	Indeed, in this instance, we have $\opLang{\mathcal{A}_{w,1,2}} \subseteq \opLang{\mathcal{A}_1^+}$, meaning that $\mathcal{A}_1^+$ can be omitted in the decomposition of $\mathcal{A}$.
	This is due to the simplicity of $\mathcal{A}$:
	For example, simply by adding the letter $c$ and the transitions $\delta(q_0,c) = q_2$ and $\delta(q_1,c) = \delta(q_2,c) = q_+$ to $\mathcal{A}$, we get a still composite DFA in whose decomposition all three DFAs of
	% !!!!!!!!!! VERSIONSTART !!!!!!!!!!
	% !!!!!!!!!! FULL VERSION !!!!!!!!!!
	\cref{lem:pmADFAMinLinSafeWithPPDecomposition}
	% !!!!!!!!!! CONF VERSION !!!!!!!!!!
	%\cite[Lemma A.8]{spenner2026deciding}
	% !!!!!!!!!! VERSIONEND  !!!!!!!!!!	
	are necessary.
	
	Next, we consider $\mathcal{A}'$ and argue that $w=abb$ breaks the \symPPShort.
	This follows from the simple observation that $\opPump{w}{i,j;0} \in \opLang{\mathcal{A}'}$ holds for all $i,j \in \{1,2,3\},i<j$.
	For $i=1,j=2$, we have $\opPump{w}{1,2;0} = a^0bb = bb \in \opLang{\mathcal{A}'}$. Here, the transition $\delta'(q_0,b) = q_+$ (in contrast to $\delta(q_0,b) = q_2$) is crucial.
	For $i=1,j=3$, we have $\opPump{w}{1,3;0} = (ab)^0b = b \in \opLang{\mathcal{A}'}$.
	And lastly, for $i=2,j=3$, we have $\opPump{w}{2,3;0} = ab^0b = ab \in \opLang{\mathcal{A}'}$.
	In total, there is no pair $i,j \in \{1,2,3\},i<j$ so that the \symPPShort-condition holds for $w,i,j$, meaning that $w$ breaks the \symPPShort.
	
	With \cref{the:pmADFAMinLinSafeCompositionality}, $\mathcal{A}'$ not having the \symPPShort{} implies that it is prime.
	% !!!!!!!!!! VERSIONSTART !!!!!!!!!!
	% !!!!!!!!!! FULL VERSION !!!!!!!!!!
	In \cref{subsubsec:compositionalityADFApMLSProofs},
	% !!!!!!!!!! CONF VERSION !!!!!!!!!!
	%In the full version \cite[Appendix A.2.1]{spenner2026deciding},
	% !!!!!!!!!! VERSIONEND  !!!!!!!!!!
	we present the complete proof that,
	for an $\symADFAp$ without the \symPPShort, a max-visiting word breaking the \symPPShort{} is a primality witness.
	Here, we discuss why $w$ is a primality witness of the specific $\symADFAp$ $\mathcal{A}'$.
	
	For $\mathcal{A}'$ to be composite, there would have to be a $\mathcal{B} \in \opDecompSet{\mathcal{A}'}$ with $w \notin \opLang{\mathcal{B}}$. This $\mathcal{B}$ would have to have at least one non-sink less than $\mathcal{A}'$
	% !!!!!!!!!! VERSIONSTART !!!!!!!!!!
	% !!!!!!!!!! FULL VERSION !!!!!!!!!!
	(see \cref{lem:safetyDecomp,lem:acceptingSink})
	% !!!!!!!!!! CONF VERSION !!!!!!!!!!
	%(see full version \cite[Lemmas A.10 and A.11]{spenner2026deciding})
	% !!!!!!!!!! VERSIONEND  !!!!!!!!!!
	and could therefore visit only two distinct states while reading the prefix $ab$ of $w$.
	Thus, $\mathcal{B}$ would "confuse"
	%$ab$ with $\varepsilon$, $a$ or $b$ and consequently
	$w$ with $bb$, $b$ or $ab$.
	But, as we argued above, these three words are all accepted by $\mathcal{A}'$, meaning that every DFA in $\opDecompSet{\mathcal{A}'}$ has to accept them as well.
	Thus, no suitable $\mathcal{B} \in \opDecompSet{\mathcal{A}'}$ with $w \notin \opLang{\mathcal{B}}$ exists, meaning that $w$ is a primality witness of $\mathcal{A}'$.\lipicsEnd
\end{example}

\subsubsection{From the Characterization to \fontComplexityClass{NP}-Hardness}
\label{subsubsec:characterizationToNPHardness}
For \cref{subsubsec:characterizationToNPHardness}, let $\symLinLenVar = \opLinLen{\mathcal{A}_\Phi}$ and $\symCVar = (\symLinLenVar+1) - (r+1)$.

Together, \cref{lem:aPhiIsMLSADFAp,the:pmADFAMinLinSafeCompositionality} imply that $\mathcal{A}_\Phi$ is prime if and only if it does not have the \symPPShort. Following \cref{def:pP}, this means that exactly the max-visiting words of $\mathcal{A}_\Phi$ are relevant for the compositionality of $\mathcal{A}_\Phi$. How these look like and where they can potentially be pumped is then answered by:

\begin{restatable}{lemma}{lemaPhi}
	\label{lem:aPhi}\RestateRemark
	The following assertions hold:
	\begin{romanenumerate}
		\item $\opDiffWords{\mathcal{A}_\Phi} = \{u d c^\symCVar \mid u \in \{0,1\}^r\}$.
		\item For every $w \in \opDiffWords{\mathcal{A}_\Phi}$ and every $i, j \in \{1, \dots, \symLinLenVar+1\}, i < j$, we have: If the \symPPShort-condition holds for $w,i,j$, then $i = 1$ and $j = r+1$. \lipicsEnd
	\end{romanenumerate}
\end{restatable}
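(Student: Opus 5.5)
The plan is to read off the linear order of the non-sinks of $\mathcal{A}_\Phi$ directly from \cref{def:aPhi} and \cref{fig:aPhiaPhiXTransitions}. For (i) this determines the unique successor letters along the order. For (ii) I then exhibit, for every pair $i<j$ other than $(1,r+1)$, an accepted pumping; throughout I expect $l=0$ to suffice. The order is
\[
p_0,\dots,p_r,\;p_c^0,\;p_1^1,\hat{p}_1^1,\dots,p_r^1,\hat{p}_r^1,\;p_c^1,\;\dots,\;p_1^s,\hat{p}_1^s,\dots,p_r^s,\hat{p}_r^s,\;p_c^s .
\]
This gives $r+2+s(2r+1)$ non-sinks, so by \cref{lem:aPhiIsMLSADFAp} and \cref{lem:aDFASizeAndLinearity}~(ii) we get $\symLinLenVar = r+1+s(2r+1)$. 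Consequently $\symCVar = 1+s(2r+1)$, which is exactly the number of $c$'s needed to go from $p_c^0$ through all $s$ clause rows back to $p_c^s$.

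\emph{Part (i).} Let $v\sigma\in\opDiffWords{\mathcal{A}_\Phi}$ with $|v|=\symLinLenVar$. Since $\mathcal{A}_\Phi$ is an $\symADFAp$, the run on $v$ does not enter a sink (it is still accepting and can still be rejected). It is therefore a cycle-free path of $\symLinLenVar+1$ pairwise distinct non-sinks, i.e.\ it visits all of them. Because $\mathcal{A}_\Phi$ is linear and has no backward transitions, this path must follow the order above and use only transitions from each state to its immediate successor.

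I will then check in the construction which letters realize these successor steps:
\begin{itemize}
\item $p_{i-1}\to p_i$ is realized by exactly $0,1$;
\item $p_r\to p_c^0$ is realized only by $d$;
\item $p_c^0\to p_1^1$, every $p_i^k\to\hat{p}_i^k$, every $\hat{p}_i^k\to p_{i+1}^k$, $\hat{p}_r^k\to p_c^k$ and $p_c^k\to p_1^{k+1}$ are realized only by $c$.
\end{itemize}
The last letter $\sigma$ from $p_c^s$ must lead to $p_-$, and only $c$ does so. Hence $v\sigma = udc^{\symCVar}$ with $u\in\{0,1\}^r$. Conversely, every such word traverses this path, since all the listed transitions exist for all values of $u$, so it lies in $\opDiffWords{\mathcal{A}_\Phi}$.

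\emph{Part (ii).} Fix $w=udc^\symCVar$ and $i<j\le\symLinLenVar+1$ with $(i,j)\neq(1,r+1)$. I show that $\opPump{w}{i,j;0}\in\opLang{\mathcal{A}_\Phi}$, so the \symPPShort-condition fails. There are three cases, depending on where $y=\sigma_i\dots\sigma_{j-1}$ lies.
\begin{itemize}
\item \emph{$y$ contains the $d$} ($i\le r+1<j$). The word $\opPump{w}{i,j;0}$ reads $x=\sigma_1\dots\sigma_{i-1}$, which reaches $p_{i-1}$. Since $j\le\symLinLenVar+1$, the suffix $z$ is nonempty, and because it lies inside $c^\symCVar$ its first letter is $c$. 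Each of $p_0,\dots,p_r$ moves to $p_+$ on $c$, so the word is accepted.
\item \emph{$y$ lies inside $u$} ($j\le r+1$). The pumping is $u'dc^\symCVar$ with $1\le|u'|<r$, where $|u'|\ge1$ because $(i,j)\neq(1,r+1)$. The prefix $u'$ leads to some $p_m$ with $1\le m\le r-1$, and $d$ then leads to $p_+$.
\item \emph{$y$ lies inside $c^\symCVar$} ($i\ge r+2$). The pumping is $udc^{\symCVar-(j-i)}$. Its run is a strict prefix of the max-visiting path, so it ends in an accepting non-sink.
\end{itemize}
This proves that only $i=1,j=r+1$ can satisfy the condition. (Consistently, for that pair $l=0$ gives $dc^\symCVar$, which goes from $p_0$ to $p_-$.)

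The main obstacle is the rigorous step in (i): justifying that a sink-free word of length $\symLinLenVar$ must visit every non-sink in the linear order using only successor transitions, and then verifying carefully, transition by transition from \cref{def:aPhi}, that each successor step is realized only by the letters claimed. In particular, I must confirm that $c$ never skips a state inside a clause row, for instance the edge $\hat{p}_i^k\xrightarrow{c}p_{i+1}^k$ as opposed to edges into $\hat{p}_{i+1}^k$. Part (ii) should be a routine case check once the transitions to $p_+$ of the assignment-device states are recorded.
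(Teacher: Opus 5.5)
Your proposal is correct and follows essentially the paper's argument. For (i), the run on a max-visiting word is forced through all non-sinks in the linear order, ending in $p_c^s$ followed by $c$. For (ii), you show that $\opPump{w}{i,j;0}$ avoids $p_-$ for every pair $(i,j)\neq(1,r+1)$; your three cases, grouped by where $y$ lies, simply regroup the paper's five subcases (its Cases 1.1--1.3, 2.1 and 2.2).
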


\cref{lem:aPhi}~(i) formalizes the observation about the form of the max-visiting words of $\mathcal{A}_\Phi$ which we made in \cref{subsec:constructionAPhi} and which we already revisited when discussing \cref{lem:aPhiIsMLSADFAp}. In particular, note that, in every state of the formula device (except $\hat{p}_r^s$), reading $0$ or $1$ results in "skipping over" at least one state, while reading $d$ results in entering the rejecting sink. Thus, the max-vising words of $\mathcal{A}_\Phi$ indeed have to be suffixed by $c^\symCVar$.

The argumentation behind (ii) is somewhat technical.
For all $w = udc^\symCVar \in \opDiffWords{\mathcal{A}_\Phi}$ and all suitable $i,j$ with $i \neq 1$ or $j \neq r+1$, we can show $\delta_\Phi(p_0,\opPump{w}{i,j;0}) \neq p_-$, meaning that it suffices to consider the pumpings with $l=0$ to prove (ii).
For the simple case $i > r+1$, we have $\opPump{w}{i,j;0} = udc^x$ for some $x < \symCVar$, which is akin to removing some $c$-suffix of $w$. Clearly, this means that the initial run of $\mathcal{A}_\Phi$ on $\opPump{w}{i,j;0}$ is identical to the one on $w$ until it ends somewhere in the formula device, without reaching the rejecting sink.
For the other cases, that is, for $1 < i \leq r+1$ as well as for $i = 1$ and $j \neq r+1$, the key observation is that, in $\opPump{w}{i,j;0}$, some part of the prefix $ud$ is removed so that a $c$ or a $d$ is read "out of place". This means that a $c$ is read in one of the states $p_0,\dots,p_r$, or a $d$ is read in one of the states $p_1,\dots,p_{r-1}$, which results in entering $p_+$.

With \cref{lem:aPhi}~(i), the form of the max-vising words of $\mathcal{A}_\Phi$ is highly restricted:
As mentioned, they consist of an assignment string followed by $dc^\symCVar$. Thus, every assignment string $u \in \{0,1\}^r$ corresponds to exactly one max-visiting word in $\opDiffWords{\mathcal{A}_\Phi}$, namely, $udc^\symCVar$, and vice versa.
Then (ii) greatly restricts where pumping positions for the max-visiting words can lie that could witness that $\mathcal{A}_\Phi$ has the \symPPShort. For every $w \in \opDiffWords{\mathcal{A}_\Phi}$, only $i=1,j=r+1$ are possible candidates. This means that $\mathcal{A}_\Phi$ does not have the \symPPShort, and is therefore prime, if and only if $\delta_\Phi(p_0,\opPump{w}{1,r+1;l}) \neq p_-$ for some $w \in \opDiffWords{\mathcal{A}_\Phi}$ and $l \in \symNatNum$. And because the words in $\opDiffWords{\mathcal{A}_\Phi}$ have the form $udc^\symCVar$ for $u \in \{0,1\}^r$, the relevant pumpings are of the form $u^ldc^\symCVar$, meaning that we pump, and thereby repeat, exactly the assignment strings.

We have established that only the pumpings of the form $u^ldc^\symCVar$ are relevant for the compositionality of $\mathcal{A}_\Phi$.
With the following lemma, we finally cement the connection between the compositionality of $\mathcal{A}_\Phi$ and the satisfiability of $\Phi$, which was already perceivable in \cref{lem:aPhiClauseRows}.

\begin{restatable}{lemma}{lemaPhiOnerpOnePumpings}
	\label{lem:aPhiOnerpOnePumpings}\RestateRemark
	Let $u \in \{0,1\}^r$ be an assignment string and let $w = udc^\symCVar \in \opDiffWords{\mathcal{A}_\Phi}$ be the corresponding max-visiting word.
	Then the \symPPShort-condition does not hold for $w,i=1,j=r+1$ if and only if the assignment $\gamma_u$ induced by $u$ satisfies $\Phi$. \lipicsEnd
\end{restatable}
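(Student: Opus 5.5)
The plan is to analyse the runs of $\mathcal{A}_\Phi$ on the pumpings $\opPump{w}{1,r+1;l} = u^l d c^\symCVar$ for every $l \in \symNatNum$, using \cref{lem:aPhiClauseRows} as the main tool. First we note that $\delta_\Phi(p_0,u) = p_r = \hat{p}_r^0$, since from $p_0,\dots,p_{r-1}$ every letter $0,1$ advances one step in the assignment device. Applying \cref{lem:aPhiClauseRows} inductively then gives the following. If $\gamma_u$ satisfies clauses $1,\dots,k$, then $\delta_\Phi(p_0,u^{k+1}) = \hat{p}_r^k$. If $k$ is the first clause not satisfied by $\gamma_u$, then $\delta_\Phi(p_0,u^{k+1}) = p_r^k$.

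For the direction ``$\gamma_u$ does not satisfy $\Phi$ $\Rightarrow$ the \symPPShort-condition holds'', we fix the first unsatisfied clause $k$ and treat the pumpings by their exponent $l$.
\begin{itemize}
\item For $l = 0$, the pumping is $dc^\symCVar$. We have $\delta_\Phi(p_0,d) = p_-$, so it is rejected.
\item For $l = 1$, the pumping is $w$ itself, which is rejected because $w \in \opDiffWords{\mathcal{A}_\Phi}$.
\item For $2 \le l \le k$, we have $\delta_\Phi(p_0,u^l) = \hat{p}_r^{l-1}$ with $l-1 < k \le s$. Reading $d$ in $\hat{p}_r^{l-1}$ leads to $p_-$: for $l-1 \ge 1$ this holds by the figure, since $\hat{p}_r^{k'} \xrightarrow{d} p_-$. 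So the pumping is rejected.
\item For $l \ge k+1$, the run on $u^{k+1}$ reaches $p_r^k$. If $l > k+1$, reading $0$ or $1$ there leads to $p_-$. If $l = k+1$, the next letter is $d$, and $p_r^k \xrightarrow{d} p_-$. Either way the pumping is rejected.
\end{itemize}
Since $\mathcal{A}_\Phi$ is a safety DFA (\cref{lem:aPhiIsMLSADFAp}), entering $p_-$ at any point means the whole pumping is rejected. Hence the \symPPShort-condition holds for $w,1,r+1$.

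For the converse, assume $\gamma_u$ satisfies $\Phi$ and take $l = s+2$. By the induction, $\delta_\Phi(p_0,u^{s+1}) = \hat{p}_r^s$. The next letter is the first letter of the final copy of $u$, which is $0$ or $1$. From $\hat{p}_r^s$ such a letter leads to $p_+$, the accepting sink, so $u^{s+2}dc^\symCVar$ is accepted. Thus the \symPPShort-condition fails for $w,1,r+1$. (This uses $r \ge 1$, so that $u$ is non-empty.)

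The main obstacle is to verify the individual transitions out of the target states carefully from \cref{def:aPhi}, rather than from the figure alone. These are: $\hat{p}_r^{k'} \xrightarrow{d} p_-$ for $k' \ge 1$, $p_r^{k} \xrightarrow{0,1,d} p_-$, and $\hat{p}_r^s \xrightarrow{0,1} p_+$. One also has to handle the boundary case $k = 1$, where the range $2 \le l \le k$ is empty, and the case $l = k+1$, where the state $p_r^k$ is followed directly by $d$. Beyond these checks, the argument is a direct induction on the clause index using \cref{lem:aPhiClauseRows}.
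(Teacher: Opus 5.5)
Your proposal is correct and follows essentially the same route as the paper. The paper likewise splits on $l$ ($l=0$, $l=1$, $2\le l\le k$, $l\ge k+1$, with $k$ the first unsatisfied clause), drives each case with \cref{lem:aPhiClauseRows} and the transitions out of $\hat{p}_r^{l-1}$ and $p_r^k$, and for the converse uses the same witness $l=s+2$ via $\delta_\Phi(p_0,u^{s+1})=\hat{p}_r^s$ and $\hat{p}_r^s \xrightarrow{0,1} p_+$.
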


As \cref{lem:aPhiOnerpOnePumpings} connects the compositionality of $\mathcal{A}_\Phi$ and the satisfiability of $\Phi$, it is central for the correctness of our reduction. Also, the main ideas behind the design of $\mathcal{A}_\Phi$ become apparent in its proof.
Therefore, we briefly argue that it holds.
Let $u \in \{0,1\}^r$ be an assignment string and let $w = udc^\symCVar$ be the corresponding max-visiting word.

First, towards a proof by contraposition, we assume that the assignment $\gamma_u$ induced by $u$ does not satisfy $\Phi$.
We argue that the \symPPShort-condition holds for $w,i=1,j=r+1$.
Because $\gamma_u$ does not satisfy $\Phi$, there is a $k \in \{1,\dots,s\}$ so that $\gamma_u$ does not satisfy the $k$-th clause of $\Phi$. Let $k$ be the minimum value for which this holds.

The key insight here is that, firstly,
for all $1<l<k+1$, the pumping $\opPump{w}{1,r+1;l}$ is rejected because a $d$ is read "out of place", namely, in the state $\hat{p}_r^{l-1}$;
and that, secondly,
for all $l \geq k+1$, the pumping $\opPump{w}{1,r+1;l}$ is rejected because, after traversing the $k$-th clause row on $u$, the state $p_r^k$ is reached, from which every continuation with a letter $\sigma \in \{0,1,d\}$ leads to $p_-$.

To be more precise,
for all $1<l<k+1$,
we have $\delta_\Phi(p_0,u^l) = \hat{p}_r^{l-1}$ and therefore $\delta_\Phi(p_0,\opPump{w}{1,r+1;l}) = p_-$,
with \cref{lem:aPhiClauseRows}.
This holds because $\gamma_u$ satisfies the first $k-1$ clauses, meaning that, if traversing any of these clause rows on $u$, then $\mathcal{A}_\Phi$ ends up in a $\hat{p}$-state, from which $p_-$ is entered on reading the subsequent $d$.
Further,
for all $l \geq k+1$,
we have $\delta_\Phi(p_0,u^{k+1}) = p_r^k$ and therefore $\delta_\Phi(p_0,\opPump{w}{1,r+1;l}) = p_-$, again with \cref{lem:aPhiClauseRows}.
This holds because $\gamma_u$ does not satisfy the $k$-th clause, meaning that, if traversing the $k$-th clause row on $u$, then $\mathcal{A}_\Phi$ ends up in the state $p_r^k$, from which $p_-$ is entered on reading any of the possible subsequent letters $0$, $1$ or $d$.
Lastly, $\delta_\Phi(p_0,\opPump{w}{1,r+1;0}) = p_-$ and $\delta_\Phi(p_0,\opPump{w}{1,r+1;1}) = p_-$ hold obviously.
In total, we have $\delta_\Phi(p_0,\opPump{w}{1,r+1;l}) = p_-$ for all $l \in \symNatNum$.
So the \symPPShort-condition holds for $w,i=1,j=r+1$.

Second, we assume that the assignment $\gamma_u$ induced by $u$ satisfies $\Phi$.
We argue that the \symPPShort-condition does not hold for $w,i=1,j=r+1$.

The key insight here is that, for a satisfying assignment $\gamma$ and a sufficiently large $l$, $\mathcal{A}_\Phi$ reaches $\hat{p}_r^s$ on reading $u_\gamma^l$.
The reason is that, because $\gamma$ satisfies every clause, $\mathcal{A}_\Phi$ enters the $\hat{p}$-states in every clause row and therefore always advances to the next clause row, finally reaching $\hat{p}_r^s$. From there, the transitions are such that $p_+$ is entered.

More precisely, with \cref{lem:aPhiClauseRows}, we have $\delta_\Phi(p_0,u^{s+1}) = \hat{p}_r^s$. The argument is similar to above, only that now the assignment satisfies all clauses, not just the clauses up to some $k<s$. With $\delta_\Phi(\hat{p}_r^s,\sigma) = p_+$ for $\sigma \in \{0,1\}$, we then immediately have $\delta_\Phi(p_0,\opPump{w}{1,r+1;s+2}) = p_+ \neq p_-$. Thus, $l = s+2$, and indeed every $l \geq s+2$, witnesses that there is an $l \in \symNatNum$ with $\delta_\Phi(p_0,\opPump{w}{1,r+1;l}) = \delta_\Phi(p_0,u^ldc^\symCVar) \neq p_-$.
So the \symPPShort-condition does not hold for $w,i=1,j=r+1$.

In total, we have argued that, for every $w = u dc^\symCVar \in \opDiffWords{\mathcal{A}_\Phi}$, the \symPPShort-condition does not hold for $w,i=1,j=r+1$ if and only if the assignment $\gamma_u$ induced by $u$ satisfies $\Phi$.
Thus, \cref{lem:aPhiOnerpOnePumpings} holds.

With \cref{the:pmADFAMinLinSafeCompositionality,lem:aPhiIsMLSADFAp,lem:aPhi,lem:aPhiOnerpOnePumpings}, we have everything we need to prove the correctness of our reduction. That is, we have everything we need to prove \cref{lem:aPhiCharacterization}, that $\Phi$ is satisfiable if and only if $\mathcal{A}_\Phi$ is prime.

\begin{proof}[Proof of \cref{lem:aPhiCharacterization}]
	With \cref{lem:aPhiIsMLSADFAp}, the \symCNFDFA{} $\mathcal{A}_\Phi$ is a minimal linear safety $\symADFAp$.
	Together with \cref{the:pmADFAMinLinSafeCompositionality}, this implies that $\mathcal{A}_\Phi$ is prime if and only if it does not have the \symPPShort.
	Thus, we only have to prove that $\Phi$ is satisfiable if and only if $\mathcal{A}_\Phi$ does not have the \symPPShort{}.
	
	First, towards a proof by contraposition, let $\Phi$ be unsatisfiable. Let $w \in \opDiffWords{\mathcal{A}_\Phi}$. With \cref{lem:aPhi}~(i), we have $w = u d c^\symCVar$, where $u \in \{0,1\}^r$ is an assignment string.
	Because $\Phi$ is unsatisfiable, the assignment $\gamma_u$ induced by $u$ does not satisfy $\Phi$.
	With with \cref{lem:aPhiOnerpOnePumpings}, we therefore have that the \symPPShort{}-condition holds for $w,i=1,j=r+1$.
	Thus, the \symPPShort{}-condition holds for every max-visiting word, meaning that $\mathcal{A}_\Phi$ has the \symPPShort.
	
	Now let $\Phi$ be satisfiable. Let $\gamma$ be a satisfying assignment. We consider the assignment string $u_\gamma$ induced by $\gamma$ and the corresponding word $w = u_\gamma d c^\symCVar$.
	With \cref{lem:aPhi}~(i), we have $w \in \opDiffWords{\mathcal{A}_\Phi}$.
	And with \cref{lem:aPhi}~(ii), the \symPPShort{}-condition does not hold for $w,i,j$ where $i \neq 1$ or $j \neq r+1$. 
	Further, because $\gamma$ satisfies $\Phi$, we have, with \cref{lem:aPhiOnerpOnePumpings}, that the \symPPShort-condition does not hold for $w,i=1,j=r+1$ either. Thus, $w$ witnesses that $\mathcal{A}_\Phi$ does not have the \symPPShort.
	
	We have proven that $\Phi$ is satisfiable if and only if $\mathcal{A}_\Phi$ does not have the \symPPShort{}. As outlined, this implies that $\Phi$ is satisfiable if and only if $\mathcal{A}_\Phi$ is prime. The proof is complete.
\end{proof}

With \cref{lem:aPhiCharacterization}, we have proven that our reduction is correct.
Because the construction of $\mathcal{A}_\Phi$ is clearly possible in polytime, we then immediately get \cref{the:primeDFANPHard}, the \fontComplexityClass{NP}-hardness of \probPrimeDFA{}.

\subsection{\fontComplexityClass{NP}-Completeness of $\probPrimeADFApMLS{}$}
\label{subsec:aDFApMLSNPComplete}
In \cref{subsec:constructionAPhi,subsec:correctnessAPhi}, we improved the lower complexity bound of the general problem \probPrimeDFA{}.
Now we conclude \cref{sec:npHardnessShortV2} by arguing that this improved lower bound is tight for the restriction of \probPrimeDFA{} to minimal linear safety $\symADFAp$s.

We denote the restriction of \probPrimeDFA{} to DFAs whose respective minimal DFA is a minimal linear safety $\symADFAp$ by $\probPrimeADFApMLS{}$.

The restriction of \probPrimeDFA{} to \symADFA{}s, so to truly acyclic DFAs, was studied in \cite{spenner2023decomposing}. Motivated by this, we study $\probPrimeADFApMLS{}$. For a comparison of the compositionality of \symADFA{}s and $\symADFAp$s, we refer to \cref{subsec:compADFAADFApMLS}.
Here, we prove:

\begin{restatable}{theorem}{theprimeADFApmLinSafeNPComplete}
	\label{the:primeADFApmLinSafeNPComplete}\RestateRemark
	The problem $\probPrimeADFApMLS{}$ is \fontComplexityClass{NP}-complete. \lipicsEnd
\end{restatable}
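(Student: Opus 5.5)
Both directions rest on results established above. Hardness is immediate: by \cref{lem:aPhiIsMLSADFAp} every \symCNFDFA{} $\mathcal{A}_\Phi$ is a minimal linear safety $\symADFAp$, so the polytime reduction $\Phi \mapsto \mathcal{A}_\Phi$ from \cref{lem:aPhiCharacterization} (satisfiable iff prime) maps into instances of $\probPrimeADFApMLS{}$. Hence $\probPrimeADFApMLS{}$ is \fontComplexityClass{NP}-hard.

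For membership in \fontComplexityClass{NP}, I would first minimize the input DFA in polytime; by the promise its minimal DFA $\mathcal{A}$ is a minimal linear safety $\symADFAp$. By \cref{the:pmADFAMinLinSafeCompositionality}, $\mathcal{A}$ is prime iff it lacks the \symPPShort{}, i.e.\ iff some $w \in \opDiffWords{\mathcal{A}}$ satisfies the following: for every pair $i<j$ in $\{1,\dots,\symLinLenVar+1\}$ there is an $l$ with $\opPump{w}{i,j;l} \in \opLang{\mathcal{A}}$ (\cref{lem:pP}). The certificate is $w$ itself, a word of length $\symLinLenVar+1 \le \opSize{\mathcal{A}}$ by \cref{lem:aDFASizeAndLinearity}. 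The verifier computes $\symLinLenVar$ as the longest path among non-sinks in the acyclic part, which takes polytime. It then checks $w \in \opDiffWords{\mathcal{A}}$ by simulation and, for each of the $\opBigO{\symLinLenVar^2}$ pairs $(i,j)$, decides whether some pumping $\opPump{w}{i,j;l}$ is accepted.

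The main obstacle is that last check, since $l$ is unbounded. I would bound it using acyclicity. Let $x=\sigma_1\dots\sigma_{i-1}$, $y=\sigma_i\dots\sigma_{j-1}$, $z=\sigma_j\dots\sigma_{\symLinLenVar+1}$. Each letter of $y^l$ read from a non-sink strictly advances in the topological order of the acyclic non-sink part, which has at most $\symLinLenVar+1$ states. So after reading $xy^{l}$ with $l \ge \symLinLenVar+2$, the run is already in a sink, and for all larger $l$ the outcome is constant: $\delta(q_I,xy^l) = \delta(q_I,xy^{\symLinLenVar+2})$, and because it is a sink the subsequent $z$ changes nothing. (Here $|y|\ge 1$.) It therefore suffices to test $l \in \{0,\dots,\symLinLenVar+2\}$, which gives polynomially many simulations of polynomial-length words.

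Equivalently, the verifier can iterate the map $q \mapsto \delta(q,y)$ from $\delta(q_I,x)$ until a repeat and test each reached state with $z$. Altogether the verification is polytime, so $\probPrimeADFApMLS{} \in \fontComplexityClass{NP}$. Combined with hardness, this gives \fontComplexityClass{NP}-completeness.
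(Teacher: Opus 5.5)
Your proposal is correct and follows the paper's proof. Hardness comes from the reduction $\Phi \mapsto \mathcal{A}_\Phi$ landing in minimal linear safety $\symADFAp$s, and membership comes from guessing a max-visiting word and verifying, via the characterization theorem, that it breaks the \symPPShort{}, with acyclicity bounding the pumping exponent; the only cosmetic differences are that you use the uniform bound $l \le \lambda+2$ instead of the paper's pair-specific minimal $l'_{i,j}$ with $(i-1)+l'_{i,j}(j-i) \ge \lambda+1$, and that you omit the paper's special case $\lambda = 0$, which your verifier handles vacuously because $\opDiffWords{\mathcal{A}} \neq \emptyset$.
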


We proved \cref{the:primeDFANPHard}, the \fontComplexityClass{NP}-hardness of \probPrimeDFA{}, by a reduction from \probCNFSAT. At the heart of this reduction was the construction of the \symCNFDFA{} $\mathcal{A}_\Phi$ out of the given CNF-formula $\Phi$. With \cref{lem:aPhiIsMLSADFAp}, every \symCNFDFA{} is also a minimal linear safety $\symADFAp$. Thus, we immediately get:

\begin{restatable}{lemma}{lemprimeADFApmLinSafeNPHard}
	\label{lem:primeADFApmLinSafeNPHard}\RestateRemark
	The problem $\probPrimeADFApMLS{}$ is \fontComplexityClass{NP}-hard. \lipicsEnd
\end{restatable}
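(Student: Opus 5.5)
The plan is a direct polynomial-time many-one reduction from \probCNFSAT{} to $\probPrimeADFApMLS{}$, reusing the construction from \cref{subsec:constructionAPhi}. Given a CNF-formula $\Phi$, we output the \symCNFDFA{} $\mathcal{A}_\Phi$.

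First, we check that this map is computable in polynomial time. $\mathcal{A}_\Phi$ has $r+2$ states in the assignment device, $2rs+s$ states in the formula device, plus the two sinks. Over the four-letter alphabet $\Sigma$, every transition is determined locally by a single element $\symElem_i^k$. So the automaton can be written down in time polynomial in $r\cdot s$, which is polynomial in $|\Phi|$. The normalization of $\Phi$ into the fixed format (every clause has $r$ elements, each variable at most once per clause, at least one clause) is likewise polynomial.

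Second, we check that $\mathcal{A}_\Phi$ is a legal instance of the restricted problem, namely that its minimal DFA is a minimal linear safety $\symADFAp$. This is exactly \cref{lem:aPhiIsMLSADFAp}: $\mathcal{A}_\Phi$ is itself minimal, so it is its own minimal DFA.

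Third, correctness follows from \cref{lem:aPhiCharacterization}: $\Phi$ is satisfiable if and only if $\mathcal{A}_\Phi$ is prime. Hence yes-instances of \probCNFSAT{} map to yes-instances of $\probPrimeADFApMLS{}$ and no-instances to no-instances. Combined with the \fontComplexityClass{NP}-hardness of \probCNFSAT{} \cite{cook1971complexity}, this gives the claim.

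There is no real obstacle here, since all substantive work is contained in \cref{lem:aPhiIsMLSADFAp} and \cref{lem:aPhiCharacterization}. The only point needing care is the promise nature of the restricted problem. The reduction must land inside the class, not merely produce a DFA whose primality matches satisfiability. That requirement is discharged by the minimality statement in \cref{lem:aPhiIsMLSADFAp}, so no extra minimization step is needed.
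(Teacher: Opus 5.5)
Your proposal is correct and follows the paper's argument: the paper derives this lemma directly from the \probCNFSAT{} reduction used for \cref{the:primeDFANPHard}, with \cref{lem:aPhiIsMLSADFAp} guaranteeing that every produced \symCNFDFA{} already lies in the restricted class and \cref{lem:aPhiCharacterization} giving correctness. Your state count, the polynomial-time check, and your remark that the reduction must land inside the class (which the minimality of $\mathcal{A}_\Phi$ ensures) simply make explicit what the paper leaves implicit.
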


To arrive at \cref{the:primeADFApmLinSafeNPComplete}, we additionally prove:

\begin{restatable}{lemma}{lemprimeADFApmLinSafeInNP}
	\label{lem:primeADFApmLinSafeInNP}\RestateRemark
	The problem $\probPrimeADFApMLS{}$ is in \fontComplexityClass{NP}. \lipicsEnd
\end{restatable}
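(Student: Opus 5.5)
We place $\probPrimeADFApMLS{}$ in \fontComplexityClass{NP} via \cref{the:pmADFAMinLinSafeCompositionality}: the input DFA $\mathcal{A}$ is prime if and only if its minimal DFA $\mathcal{M}$ does not have the \symPPShort{}. A certificate of primality is then a single max-visiting word $w \in \opDiffWords{\mathcal{M}}$ that breaks the \symPPShort{}, together with, for every pair $i<j$ in $\{1,\dots,\symLinLenVar+1\}$, a number $l_{i,j}$ such that $\opPump{w}{i,j;l_{i,j}} \in \opLang{\mathcal{M}}$. The verifier proceeds as follows.
\begin{enumerate}
\item It minimizes $\mathcal{A}$ in polynomial time. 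By the problem's promise, $\mathcal{M}$ is a minimal linear safety $\symADFAp$.
\item It computes $\symLinLenVar = \opLinLen{\mathcal{M}}$. By \cref{lem:aDFASizeAndLinearity}, $\symLinLenVar = \opSize{\mathcal{M}}-3$, so $|w| = \symLinLenVar+1$ is polynomial.
\item It checks $w \in \opDiffWords{\mathcal{M}}$ by simulating $\mathcal{M}$ on $w$ and on its prefix of length $\symLinLenVar$.
\item For each of the $\opBigO{\symLinLenVar^2}$ pairs $(i,j)$, it checks that $\mathcal{M}$ accepts $\opPump{w}{i,j;l_{i,j}}$.
\end{enumerate}
Completeness and soundness follow directly from \cref{def:pP}, \cref{lem:pP} and \cref{the:pmADFAMinLinSafeCompositionality}.

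The main obstacle is that $l_{i,j}$ might need to be large, so $\opPump{w}{i,j;l}$ could have exponential length. We avoid this with a bound. Let $n=\opSize{\mathcal{M}}$. Since $\mathcal{M}$ is an $\symADFAp$, there are no cycles among its non-sinks. Hence, after reading $xy^l$ with $|y|>0$, the state $\delta(q_I,xy^l)$ strictly advances in the acyclic order until it hits a sink. Concretely, the states $\delta(q_I,xy^{l})$ for $l=0,1,\dots$ are pairwise distinct as long as they are non-sinks, because a repeat would form a cycle. So within $l \le n$ a sink is reached, and from then on the outcome is constant in $l$.

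It follows that if some $l$ yields acceptance, then some $l \le n$ does: either acceptance happens before a sink is reached, which forces $l\le n$ as above, or the accepting sink is reached at some $l \le n$, after which every larger $l$ is accepted. We therefore restrict certificates to $l_{i,j} \le n$. Then each pumped word has length at most $(n+1)(\symLinLenVar+1)$, and each check is a polynomial simulation. Indeed, the verifier can even enumerate all $l \le n$ itself, so only $w$ needs to be guessed.

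Membership in \fontComplexityClass{NP} together with \cref{lem:primeADFApmLinSafeNPHard} then gives \cref{the:primeADFApmLinSafeNPComplete}.
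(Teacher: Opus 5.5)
Your proof is correct and follows essentially the same approach as the paper: minimize, compute $\symLinLenVar = \opSize{\mathcal{M}}-3$, guess a max-visiting word $w$ that breaks the \symPPShort{}, and check polynomially many pumpings $\opPump{w}{i,j;l}$ for each pair $i<j$, which also handles $\symLinLenVar=0$ vacuously. The only difference is how $l$ is bounded: the paper takes the least $l'_{i,j}$ with $(i-1)+l'_{i,j}(j-i) \geq \symLinLenVar+1$, so that $\sigma_1\dots\sigma_{i-1}(\sigma_i\dots\sigma_{j-1})^{l'_{i,j}}$ is too long to avoid a sink, whereas your acyclicity/pigeonhole argument on the states $\delta(q_I,xy^l)$ gives the equally valid bound $l \le n$.
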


To prove that $\probPrimeADFApMLS{}$ is in \fontComplexityClass{NP}, we describe a guess-and-verify \fontComplexityClass{NP}-algorithm for $\probPrimeADFApMLS{}$ which exploits \cref{the:pmADFAMinLinSafeCompositionality}.
Given a minimal linear safety $\symADFAp$ $\mathcal{A}$, the algorithm guesses a max-visiting word $w \in \opDiffWords{\mathcal{A}}$ and then verifies that $w$ breaks the \symPPShort, meaning that $w$ witnesses that $\mathcal{A}$ does not have the \symPPShort.
To do so, it simulates the initial run of $\mathcal{A}$ on $\opPump{w}{i,j;l}$ for every pair $i,j \in \{1,\dots,\opLinLen{\mathcal{A}}+1\},i<j$ and every relevant $l$. The key insight here is that, for every such pair $i,j$, there is a polynomially bounded value $l_{i,j}'$ such that only the values $l \leq l_{i,j}'$ are relevant. Thus, the algorithm has to simulate only polynomially many runs.
If the algorithm finds that $w$ indeed breaks the \symPPShort, it accepts.
Else, it rejects.

The polynomial runtime of the algorithm follows immediately from the key insight just mentioned. The runtime is dominated by the simulation of the runs. There are only polynomially many pairs $i,j$. And for each pair, there are only polynomially many relevant pumpings $\opPump{w}{i,j;l}$, each having polynomial length. Thus, the algorithm has polynomial runtime.

Regarding the correctness, if $\mathcal{A}$ is prime, then there is a word breaking the \symPPShort. The algorithm guesses this word and subsequently accepts.
Else, that is, if $\mathcal{A}$ is composite, then no word breaking the \symPPShort{} exists. Thus, the algorithm is forced to reject no matter which word it guesses.
In total, the algorithm works correctly for $\probPrimeADFApMLS{}$.

To summarize, we have described an \fontComplexityClass{NP}-algorithm for $\probPrimeADFApMLS{}$. Thus, we have proven \cref{lem:primeADFApmLinSafeInNP}. Together with the already established \cref{lem:primeADFApmLinSafeNPHard}, we immediately get \cref{the:primeADFApmLinSafeNPComplete}, the \fontComplexityClass{NP}-completeness of $\probPrimeADFApMLS{}$.

%% file: discussion.tex
\section{Discussion}
\label{sec:discussion}
We studied the primality of DFAs and thereby of regular languages. We proved the \fontComplexityClass{NP}-hardness of \probPrimeDFA{}.
This is the first improvement of a complexity bound for \probPrimeDFA{} since \cite{kupferman2015prime} introduced \probPrimeDFA{} and proved that it is \fontComplexityClass{NL}-hard and in \fontComplexityClass{ExpSpace}.

We proved the \fontComplexityClass{NP}-hardness of \probPrimeDFA{} by a reduction from \probCNFSAT. For a given CNF-formula $\Phi$, the reduction yields a so-called \symCNFDFA{} $\mathcal{A}_\Phi$ so that $\Phi$ is satisfiable if and only if $\mathcal{A}_\Phi$ is prime. 
To link the satisfiability of $\Phi$ and the compositionality of $\mathcal{A}_\Phi$, 
the construction of $\mathcal{A}_\Phi$ ensures that
(1) $\Phi$ is encoded in $\mathcal{A}_\Phi$, and
(2) the variable assignments for $\Phi$ are encoded in the words relevant for the compositionality of $\mathcal{A}_\Phi$.

Central to the \fontComplexityClass{NP}-hardness proof was the characterization of the compositionality of \symCNFDFA{}s. However, only certain structural aspects of \symCNFDFA{}s were relevant for this characterization. Thus, we actually characterized the compositionality of the more general minimal linear safety $\symADFAp$s. Exploiting this characterization, we then proved the \fontComplexityClass{NP}-completeness of $\probPrimeADFApMLS{}$, the restriction of \probPrimeDFA{} to DFAs whose respective minimal DFA is a minimal linear safety $\symADFAp$.

To wrap up this paper, we briefly discuss two points.

Firstly, we compare the complexity of $\probPrimeADFApMLS{}$ and $\probPrimeDFAfin{}$, the restriction of \probPrimeDFA{} to DFAs deciding finite languages.
And secondly, we highlight that the \fontComplexityClass{NP}-hardness of \probPrimeDFA{} carries over to \probSPrimeDFA{}, which is a variant of the problem based on a slightly different notion of compositionality.

\subsection{$\probPrimeADFApMLS{}$ and $\probPrimeDFAfin{}$}
\label{subsec:compADFAADFApMLS}
We denote the restriction of \probPrimeDFA{} to DFAs deciding a finite language by $\probPrimeDFAfin{}$.
Note that a DFA decides a finite language if and only if its minimal DFA is an \symADFA.
Thus, the problems $\probPrimeDFAfin{}$ and $\probPrimeADFApMLS{}$ are very similar: $\probPrimeDFAfin{}$ considers \symADFA{}s, so truly acyclic DFAs, while $\probPrimeADFApMLS{}$ considers $\symADFAp$s, so "almost" acyclic DFAs possessing an accepting sink.

In this paper, we have proven that $\probPrimeADFApMLS{}$ is \fontComplexityClass{NP}-complete.
With \cite[Theorem 4.1]{spenner2023decomposing}, $\probPrimeDFAfin{}$ is \fontComplexityClass{NL}-complete.
Therefore, the simple addition of an accepting sink leads to a jump in the complexity of the primality problem (assuming $\fontComplexityClass{NL} \subset \fontComplexityClass{NP}$).
We give an intuition for this somewhat surprising fact.

The reduction from \probCNFSAT{} crucially hinges on the fact that the \symCNFDFA{} $\mathcal{A}_\Phi$ can potentially enter the accepting sink after reading the prefix $u^{s+2}$ of $\opPump{w}{1,r+1;s+2}$, where $w = u d c^\symCVar \in \opDiffWords{\mathcal{A}_\Phi}$ is a max-visiting word. In other words, it can enter the accepting sink after "pumping up" $w$ so that the assignment string $u$ is checked against every clause row. Once in the accepting sink, the suffix $d c^\symCVar$ of dummy letters can be read and the pumping $\opPump{w}{1,r+1;s+2}$ is accepted.

No construction of this kind is possible for \symADFA{}s. Every \symADFA{} has a threshold value so that it rejects every word longer than that threshold value. This is obvious, because for words longer than the threshold value, it runs out of non-sinks and necessarily enters the rejecting sink.
Therefore, "pumping up" and thereby lengthening the \symADFA{}-equivalent of a max-visiting word necessarily leads to an overlong word that is rejected.

Thus, with the existence of an accepting sink, no threshold value exists so that words longer than that threshold value are rejected. This simple additional complication is enough to lead to the jump in complexity, from \fontComplexityClass{NL}-completeness for $\probPrimeDFAfin{}$ to \fontComplexityClass{NP}-completeness for $\probPrimeADFApMLS{}$.

\subsection{\fontComplexityClass{NP}-Hardness of \probSPrimeDFA{}}
With \cref{def:compositionality}, we followed \cite{kupferman2015prime} and defined compositionality using the index of the given DFA.
However, without stating so explicitly, \cite{jecker2020unary,jecker2021decomposing} employed a slightly different definition, using the size instead of the index.
In \cite{spenner2023decomposing}, this difference was made explicit by introducing the notion of S-compositionality and the respective decision problem.
We follow \cite{spenner2023decomposing} and define:

\begin{definitionRoman}
	A DFA $\mathcal{A}$ is \highlightDef{S-composite} if it is $(\opSize{\mathcal{A}}-1)$-decomposable.
	Otherwise, it is \highlightDef{S-prime}.\lipicsEnd
\end{definitionRoman}
We denote the problem of deciding S-primality for a given DFA by \probSPrimeDFA{}.

Many results known for compositionality carry over trivially to S-compositionality.
In particular, \probPrimeDFA{} is in \fontComplexityClass{ExpSpace} \cite[Theorem 2.4]{kupferman2015prime}. The proof of this result has to be adapted only slightly to prove that \probSPrimeDFA{} is in \fontComplexityClass{ExpSpace} as well \cite[Theorem 6.3]{spenner2023decomposing}.

However, the \fontComplexityClass{NL}-hardness of \probPrimeDFA{}, established in \cite[Theorem 2.5]{kupferman2015prime}, does not carry over trivially to \probSPrimeDFA{}. The \fontComplexityClass{NL}-hardness of \probSPrimeDFA{} is established in \cite[Theorem 6.3]{spenner2023decomposing} by using a reduction of a different problem than the one used for \probPrimeDFA{}.

Here, we highlight that the improved lower complexity bound of \probPrimeDFA{} does carry over trivially to \probSPrimeDFA{}.
We established the \fontComplexityClass{NP}-hardness of \probPrimeDFA{} by a reduction from \probCNFSAT{}. The \symCNFDFA{}s yielded by this reduction are minimal. Thus, with the same reduction, we get:

\begin{restatable}{theorem}{thesPrimeDFANPHard}
	\label{the:sPrimeDFANPHard}\RestateRemark
	The problem \probSPrimeDFA{} is \fontComplexityClass{NP}-hard. \lipicsEnd
\end{restatable}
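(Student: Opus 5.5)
We reuse the reduction from \probCNFSAT{} built for \cref{the:primeDFANPHard}: given a CNF-formula $\Phi$, output the \symCNFDFA{} $\mathcal{A}_\Phi$. This map is computable in polynomial time, as already noted after the proof of \cref{lem:aPhiCharacterization}. It therefore suffices to show that $\Phi$ is satisfiable if and only if $\mathcal{A}_\Phi$ is S-prime.

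The key observation is that S-primality and primality coincide on minimal DFAs. By \cref{def:compositionality}, $\mathcal{A}$ is composite if and only if it is $(\opInd{\mathcal{A}}-1)$-decomposable. By the definition above, it is S-composite if and only if it is $(\opSize{\mathcal{A}}-1)$-decomposable. For a minimal DFA we have $\opSize{\mathcal{A}} = \opInd{\mathcal{A}}$, as recorded in the preliminaries, so the two notions agree.

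By \cref{lem:aPhiIsMLSADFAp}, $\mathcal{A}_\Phi$ is minimal, so $\opSize{\mathcal{A}_\Phi} = \opInd{\mathcal{A}_\Phi}$. Hence $\mathcal{A}_\Phi$ is S-prime if and only if it is prime. By \cref{lem:aPhiCharacterization}, this holds if and only if $\Phi$ is satisfiable. Since \probCNFSAT{} is \fontComplexityClass{NP}-complete, \probSPrimeDFA{} is \fontComplexityClass{NP}-hard.

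The only point that needs care is the minimality of $\mathcal{A}_\Phi$. Everything else is a direct chain of earlier results. Minimality is supplied by \cref{lem:aPhiIsMLSADFAp}, whose intuition is that distinct states are separated by the number of further letters, for example $c$'s, needed to reach $p_-$. So I expect no real obstacle.
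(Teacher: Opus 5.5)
Your proposal is correct and matches the paper's argument: the same reduction works because the \symCNFDFA{}s are minimal (\cref{lem:aPhiIsMLSADFAp}), so size equals index, S-primality coincides with primality, and \cref{lem:aPhiCharacterization} finishes the proof. The paper also mentions an alternative route, minimizing an arbitrary input DFA in polynomial time, but your argument does not need it.
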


Indeed, it is well known that DFA minimization can be done in polytime. Thus, by making the transition from \fontComplexityClass{NL}-hardness to \fontComplexityClass{NP}-hardness, the distinction between compositionality and S-compositionality loses its edge because we can always minimize the given DFA. This simple argument is an alternative way to arrive at \cref{the:sPrimeDFANPHard}.

%% file: npHardness.tex
\section{\fontComplexityClass{NP}-Hardness of \probPrimeDFA{} – Proofs}
\label{sec:npHardness}
In \cref{sec:npHardnessShortV2}, we outlined the main ideas behind the \fontComplexityClass{NP}-hardness proof for \probPrimeDFA{}. Here, we present the full argument with complete formal proofs. To do so, we introduce some additional lemmas.

As in \cref{sec:npHardnessShortV2}, the goal is to prove:

\theprimeDFANPHard*

The structure of \cref{sec:npHardness} mirrors that of \cref{sec:npHardnessShortV2}.
In \cref{subsec:constructionAPhiProofs} (corresponding to \cref{subsec:constructionAPhi}), we specify the reduction by defining the \symCNFDFA{}s yielded by the reduction.
In \cref{subsec:correctnessAPhiProofs} (corresponding to \cref{subsec:correctnessAPhi}), we prove the correctness of the reduction.
To do so, we characterize the compositionality of $\symADFAp$s and thereby of \symCNFDFA{}s in \cref{subsubsec:compositionalityADFApMLSProofs} (corresponding to \cref{subsubsec:compositionalityADFApMLS}).
Then we exploit this characterization to prove the correctness in \cref{subsubsec:characterizationToNPHardnessProofs} (corresponding to \cref{subsubsec:characterizationToNPHardness}).
To conclude, we prove, in \cref{subsec:aDFApMLSNPCompleteProofs} (corresponding to \cref{subsec:aDFApMLSNPComplete}), that the restriction of \probPrimeDFA{} to the DFAs relevant for the reduction is \fontComplexityClass{NP}-complete.

\subsection{Construction of $\mathcal{A}_\Phi$ – Proofs}
\label{subsec:constructionAPhiProofs}
We use the same notation and the same assumptions as in \cref{sec:npHardnessShortV2}. For the sake of readability, we restate them here.

For the remainder of \cref{sec:npHardness}, let $X = \{x_1, \dots, x_r\}$ be a set of $r \in \symNatNumGeq{1}$ numbered variables, and let $\Phi$ be a CNF-formula over $X$. 
W.l.o.g.\ we assume that $\Phi$ has at least one clause and that every variable appears at most once in every clause. Clearly, these assumptions are permissible.

To simplify the construction of $\mathcal{A}_\Phi$, we fix a special notation for $\Phi$.
Namely, we write:
\begin{align*}
	\Phi = (\symElem_1^1 \vee \dots \vee \symElem_r^1) \wedge \dots \wedge (\symElem_1^s \vee \dots \vee \symElem_r^s),
\end{align*}
where $s \in \symNatNumGeq{1}$ is the number of clauses of $\Phi$, and $\symElem_i^k \in \{\neg x_i, x_i, \bot\}$ is the $i$-th \highlightDef{element} in the $k$-th clause where $k \in \{1, \dots, s\}, i \in \{1, \dots, r\}$.
In other words, in our notation, every clause has exactly $r$ elements and the $i$-th element of every clause is either a literal of the $i$-th variable $x_i$ or $\bot$.
Here, $\bot$ is a special symbol that always evaluates to \emph{False}.

Given a CNF-formula with at least one clause in which every variable appears at most once in every clause, it is straightforward to construct an equivalent CNF-formula that adheres to this special notation. We can simply order the variables in each clause and include $\bot$'s for "missing" variables. Thus, it is possible to write $\Phi$ in this way.

Towards the definition of $\mathcal{A}_\Phi$, we introduce some additional notation.
Let $\Sigma = \{0,1,c,d\}$.
We call a string $u \in \{0,1\}^r$ an \highlightDef{assignment string}.
As usual, an assignment over $X$ is a function $\gamma: X \rightarrow \{0,1\}$.
An assignment $\gamma$ over $X$ induces an assignment string $u_\gamma \in \{0,1\}^r$ in the following way: $u_\gamma = \gamma(x_1) \dots \gamma(x_r)$. Conversely, an assignment string $u = \sigma_1 \dots \sigma_r \in \{0,1\}^r$ induces an assignment $\gamma_u$ over $X$ in the following way: $\gamma_u(x_i) = \sigma_i$ for all $i \in \{1, \dots, r\}$.

We consider $\mathcal{A}_\Phi = (Q_\Phi, \Sigma, p_0, \delta_\Phi, F_\Phi)$, which is outlined in \cref{fig:aPhiaPhiXTransitions} and formally defined as follows:

\begin{definitionRoman} % def:aPhi
	\label{def:aPhi}
	To simplify the definition, we use the following notation: For every $x \in X$, let $\opLitToBin{x} = 1$ and $\opLitToBin{\neg x} = 0$.
	We define the \highlightDef{\symCNFDFA{}} $\mathcal{A}_\Phi = (Q_\Phi, \Sigma, p_0, \delta_\Phi, F_\Phi)$ induced by $\Phi$ as:
	\begin{align*}
		Q_\Phi 	& = \{p_0, \dots, p_r\} \cup \{p_c^0\} \cup \{p_i^k, \hat{p}_i^k, p_c^k \mid k \in \{1, \dots, s\} \wedge i \in \{1, \dots, r\}\} \cup \{p_+, p_-\}, \\
		F_\Phi	& = Q \setminus \{p_-\}, \\
		\delta_\Phi(p_i, \sigma) & =
		\begin{cases}
			p_{i+1}		& \text{ if $i < r$ and $\sigma \in \{0,1\}$} \\
			p_1^1		& \text{ if $i = r$ and $\symElem_1^1 = \bot$ and $\sigma \in \{0,1\}$} \\
			p_1^1		& \text{ if $i = r$ and $\symElem_1^1 \neq \bot$ and $\sigma = 1 - \opLitToBin{\symElem_1^1}$} \\
			\hat{p}_1^1	& \text{ if $i = r$ and $\symElem_1^1 \neq \bot$ and $\sigma = \opLitToBin{\symElem_1^1}$} \\
			p_+			& \text{ if $\sigma = c$} \\
			p_-			& \text{ if $i = 0$ and $\sigma = d$} \\
			p_+			& \text{ if $0 < i < r$ and $\sigma = d$} \\
			p_c^0		& \text{ else, thus if $i = r$ and $\sigma = d$}
		\end{cases} \\
		& \text{for all } i \in \{1, \dots, r\}, \\
		\delta_\Phi(p_i^k, \sigma) & =
		\begin{cases}
			p_{i+1}^k	& \text{ if $i < r$ and $\symElem_{i+1}^k = \bot$ and $\sigma \in \{0,1\}$} \\
			p_{i+1}^k	& \text{ if $i < r$ and $\symElem_{i+1}^k \neq \bot$ and $\sigma = 1 - \opLitToBin{\symElem_{i+1}^k}$} \\
			\hat{p}_{i+1}^k & \text{ if $i < r$ and $\symElem_{i+1}^k \neq \bot$ and $\sigma = \opLitToBin{\symElem_{i+1}^k}$} \\
			p_-			& \text{ if $i = r$ and $\sigma \in \{0,1\}$} \\
			\hat{p}_i^k	& \text{ if $\sigma = c$} \\
			p_-			& \text{ else, thus if $\sigma = d$}
		\end{cases} \\
		& \text{for all } k \in \{1, \dots, s\}, i \in \{1, \dots, r\}, \\
		\delta_\Phi(\hat{p}_i^k, \sigma) & =
		\begin{cases}
			\hat{p}_{i+1}^k	& \text{ if $i < r$ and $\sigma \in \{0,1\}$} \\
			p_1^{k+1}	& \text{ if $i = r$ and $k < s$ and $\symElem_1^{k+1} = \bot$ and $\sigma \in \{0,1\}$} \\
			p_1^{k+1}	& \text{ if $i = r$ and $k < s$ and $\symElem_1^{k+1} \neq \bot$ and $\sigma = 1 - \opLitToBin{\symElem_1^{k+1}}$} \\
			\hat{p}_1^{k+1}	& \text{ if $i = r$ and $k < s$ and $\symElem_1^{k+1} \neq \bot$ and $\sigma = \opLitToBin{\symElem_1^{k+1}}$} \\
			p_+			& \text{ if $i = r$ and $k = s$ and $\sigma \in \{0,1\}$} \\
			p_{i+1}^k	& \text{ if $i < r$ and $\sigma = c$} \\
			p_c^k		& \text{ if $i = r$ and $\sigma = c$} \\
			p_-			& \text{ else, thus if $\sigma = d$}
		\end{cases} \\
		& \text{for all } k \in \{1, \dots, s\}, i \in \{1, \dots, r\}, \\
		\delta_\Phi(p_c^k, \sigma) & =
		\begin{cases}
			p_-			& \text{ if $k < s$ and $\sigma \in \{0,1,d\}$} \\
			p_+			& \text{ if $k = s$ and $\sigma \in \{0,1,d\}$} \\
			p_1^{k+1}	& \text{ if $k < s$ and $\sigma = c$} \\
			p_-			& \text{ else, if $k = s$ and $\sigma = c$}
		\end{cases} \\
		& \text{for all } k \in \{0, \dots, s\}, \\
		\delta_\Phi(p, \sigma) & = p \\
		& \text{for all } p \in \{p_+,p_-\}. \tag*{\lipicsEnd}
	\end{align*}
\end{definitionRoman}

We repeat that we denote the state $p_r$ also by $\hat{p}_r^0$.

Having defined $\mathcal{A}_\Phi$, we now work towards proving that $\Phi$ is satisfiable if and only if $\mathcal{A}_\Phi$ does not have the \symPPShort{} and is therefore prime.

Our goal is to prove:

\lemaPhiCharacterization*

We begin working towards this by proving \cref{lem:aPhiClauseRows}, which states that the behavior of the clause rows is as intended. That is, the clause rows actually encode the clauses of $\Phi$,
so that which of the two end states $\hat{p}_r^k, p_r^k$ is reached after traversing the $k$-th clause row on an assignment string depends on whether the induced assignment satisfies the $k$-th clause or not.
For the sake of readability, we restate the lemma. Afterwards, we give the formal proof.

\lemaPhiClauseRows*

\begin{proof}[Proof of \cref{lem:aPhiClauseRows}]
	Let $u = \sigma_1 \dots \sigma_r, k$ be as required. We inspect the behavior of the $k$-th clause row of $\mathcal{A}_\Phi$ on reading $u$ when starting in $\hat{p}_r^{k-1}$. 
	
	As in \cref{def:aPhi}, we use the following notation: For every $x \in X$, let $\opLitToBin{x} = 1$ and $\opLitToBin{\neg x} = 0$.
	
	Note that, for every $i \in \{1, \dots, r\}$, we have $\delta_\Phi(\hat{p}_r^{k-1}, \sigma_1 \dots \sigma_i) \in \{p_i^k, \hat{p}_i^k\}$. Further, note that, once $\mathcal{A}_\Phi$ has entered the $\hat{p}$-states of the $k$-th clause row when reading $u$, it will not leave them again. These observations can be easily verified by inspecting \cref{def:aPhi,fig:aPhiaPhiXTransitions}.
	
	We prove the two implications separately.
	
	We begin by assuming that $\gamma_u$ satisfies the $k$-th clause of $\Phi$. We argue that this implies $\delta_\Phi(\hat{p}_r^{k-1},u) = \hat{p}_r^k$.
	
	Because $\gamma_u$ satisfies the $k$-th clause of $\Phi$, there is an $i \in \{1, \dots, r\}$ with $\symElem_i^k \neq \bot$ and $\opLitToBin{\symElem_i^k} = \gamma(x_i) = \sigma_i$. If $\mathcal{A}_\Phi$ has already entered the $\hat{p}$-states on $\sigma_1 \dots \sigma_{i-1}$, then it will stay in them. That is, if $\delta_\Phi(\hat{p}_r^{k-1}, \sigma_1 \dots \sigma_{i-1}) = \hat{p}_{i-1}^k$, then $\delta_\Phi(\hat{p}_r^{k-1}, \sigma_1 \dots \sigma_i) = \hat{p}_i^k$. Otherwise, that is, if $\delta_\Phi(\hat{p}_r^{k-1}, \sigma_1 \dots \sigma_{i-1}) = p_{i-1}^k$, we also have $\delta_\Phi(\hat{p}_r^{k-1}, \sigma_1 \dots \sigma_i) = \hat{p}_i^k$ because $\opLitToBin{\symElem_i^k} = \sigma_i$. Thus, we definitely have $\delta_\Phi(\hat{p}_r^{k-1}, \sigma_1 \dots \sigma_i) = \hat{p}_i^k$. This implies $\delta_\Phi(\hat{p}_r^{k-1}, \sigma_1 \dots \sigma_r) = \hat{p}_r^k$. We have proven the first implication.
	
	Next, we assume that $\delta_\Phi(\hat{p}_r^{k-1},u) = \hat{p}_r^k$. We argue that this implies that $\gamma_u$ satisfies the $k$-th clause of $\Phi$. Because $\delta_\Phi(\hat{p}_r^{k-1}, \sigma_1 \dots \sigma_r) \in \{p_r^k, \hat{p}_r^k\}$, arguing for this is sufficient to prove the second implication.
	
	Let $i$ be the minimum value in $\{1, \dots, r\}$ with $\delta_\Phi(\hat{p}_r^{k-1}, \sigma_1 \dots \sigma_i) = \hat{p}_i^k$. Clearly, this implies $\symElem_i^k \neq \bot$ and $\opLitToBin{\symElem_i^k} = \sigma_i$. Thus, we have $\gamma_u(x_i) = \sigma_i = \opLitToBin{\symElem_i^r}$. This directly implies that $\gamma_u$ satisfies the $k$-th clause of $\Phi$. We have proven the second implication.
	
	We have proven both implications. The proof is complete.
\end{proof}

With \cref{def:aPhi}, we have specified the \symCNFDFA{} $\mathcal{A}_\Phi$ induced by the given CNF-formula $\Phi$. And with \cref{lem:aPhiClauseRows}, we have stated a first important characteristic of this $\mathcal{A}_\Phi$. It remains to prove the correctness of the reduction by proving \cref{lem:aPhiCharacterization}.

\subsection{Correctness of the Reduction – Proofs}
\label{subsec:correctnessAPhiProofs}
We prove the correctness of the reduction. To do so, we first characterize the compositionality of minimal linear safety $\symADFAp$s in \cref{subsubsec:compositionalityADFApMLSProofs}. Then we exploit this characterization to prove \cref{lem:aPhiCharacterization} in \cref{subsubsec:characterizationToNPHardnessProofs}.

\subsubsection{Compositionality of Minimal Linear Safety $\symADFAp$s – Proofs}
\label{subsubsec:compositionalityADFApMLSProofs}
For the definitions of $\symADFAp$s, linear DFAs, and safety DFAs, we refer to \cref{subsubsec:compositionalityADFApMLS}. We begin by proving \cref{lem:aDFASizeAndLinearity}, which states fundamental properties regarding the size of $\symADFAp$s.

\lemaDFASizeAndLinearity*

\begin{proof}[Proof of \cref{lem:aDFASizeAndLinearity}]
	Let $\mathcal{A} = (Q, \Sigma, q_I, \delta, F)$ be a minimal $\symADFAp$.
	
	Note that $\mathcal{A}$ is minimal and possesses both an accepting and a rejecting sink. This immediately implies $\opSize{\mathcal{A}} \geq 3$ and that the initial state of $\mathcal{A}$ is not a sink. Consequently, there is at least one word on which $\mathcal{A}$ does not enter a sink, namely, the empty word.
	
	Let $U \subset Q$ be the set of non-sinks of $\mathcal{A}$. As just mentioned, we have $q_I \in U$. Let $\symLinLenVar = \opLinLen{\mathcal{A}}$.
	
	We consider Assertion (i). We argue that $\opSize{\mathcal{A}} \geq \symLinLenVar + 3$ holds. Note that this is equivalent to $\symLinLenVar \leq \opSize{\mathcal{A}} - 3$. To argue this, let $w \in \Sigma^*$ with $w > \opSize{\mathcal{A}} - 3$. We argue that $\delta(q_I, w) \notin U$, meaning that $\delta(q_I, w)$ is a sink. This directly implies $\symLinLenVar \leq \opSize{\mathcal{A}} - 3$.
	
	Consider the initial run of $\mathcal{A}$ on $w$. Because $\mathcal{A}$ is acyclic, every non-sink can appear at most once in this run. With $|U| = \opSize{\mathcal{A}} - 2$ and $w > \opSize{\mathcal{A}} - 3$, this immediately implies that a sink has to appear in this run. Thus, this run ends in a sink, meaning $\delta(q_I, w) \notin U$.
	
	We have argued that $\delta(q_I, w) \notin U$ for every $w \in \Sigma^*$ with $w > \opSize{\mathcal{A}} - 3$. As mentioned, this directly implies $\symLinLenVar \leq \opSize{\mathcal{A}} - 3$. The proof of Assertions (i) is complete.
	
	Next, we consider Assertion (ii). We prove the two directions of the equivalence separately.
	
	We start by assuming that $\opSize{\mathcal{A}} = \symLinLenVar + 3$ holds. We prove that $\mathcal{A}$ is linear.
	
	We begin by introducing the following claim:
	
	\begin{claimWithinProof}
		\label{claWP:qqP}
		Let $q, q' \in U, q \neq q'$. Then either $q'$ is reachable from $q$, or $q$ is reachable from $q'$, but not both.
	\end{claimWithinProof}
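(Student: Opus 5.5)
The claim sits inside the proof of the direction ``$\opSize{\mathcal{A}} = \symLinLenVar + 3$ implies linear'' of \cref{lem:aDFASizeAndLinearity}~(ii), so I assume $\opSize{\mathcal{A}} = \symLinLenVar + 3$ and $|U| = \symLinLenVar + 1$. The plan has two parts.

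The ``not both'' part comes directly from acyclicity. If $q'$ were reachable from $q$ and $q$ from $q'$ with $q \neq q'$, then concatenating the two paths gives a cycle through the non-sink $q$. An $\symADFAp$ has no cycles besides those of its sinks, so this is impossible.

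For ``at least one'', fix a word $w = \sigma_1 \dots \sigma_\symLinLenVar$ of length $\symLinLenVar = \opLinLen{\mathcal{A}}$ on which $\mathcal{A}$ does not enter a sink. Such a word exists by the definition of $\opLinLen{\mathcal{A}}$, together with the fact that a run which never enters a sink stays among non-sinks, since sinks are absorbing. The initial run $q_I = q_0, q_1, \dots, q_\symLinLenVar$ on $w$ then consists of $\symLinLenVar + 1$ non-sinks. By acyclicity these are pairwise distinct: a repeated state would yield a cycle through a non-sink. Since $|U| = \opSize{\mathcal{A}} - 2 = \symLinLenVar + 1$, the run visits every non-sink exactly once, so $U = \{q_0, \dots, q_\symLinLenVar\}$. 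Now take $q, q' \in U$ with $q = q_a$, $q' = q_b$ and $a \neq b$. If $a < b$, the word $\sigma_{a+1} \dots \sigma_b$ leads from $q$ to $q'$, so $q'$ is reachable from $q$; if $b < a$, symmetrically $q$ is reachable from $q'$.

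The only delicate point is that the longest non-sink word of length $\symLinLenVar$ really has a run staying in $U$ for its whole length. The definition of $\opLinLen{\mathcal{A}}$ (a $w$ admitting both an accepting and a rejecting continuation) gives exactly that $\delta(q_I,w)$ is not a sink. Each prefix of $w$ then also avoids sinks, since any continuation of a word ending in a sink would end in that same sink. With this, the counting argument above goes through.
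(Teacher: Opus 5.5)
Your proposal is correct and follows essentially the same route as the paper. You take a word of length $\symLinLenVar$ on which $\mathcal{A}$ does not enter a sink, use acyclicity to see that its initial run consists of $\symLinLenVar+1 = |U|$ pairwise distinct non-sinks and therefore exhausts $U$, read off reachability along that run, and rule out mutual reachability by acyclicity; your added justification that every prefix of such a word also avoids the sinks (because sinks are absorbing) makes explicit a step the paper leaves implicit.
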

	\begin{claimproof}
		Let $w \in \Sigma^\symLinLenVar$ be a word on which $\mathcal{A}$ does not enter a sink. By assumption, we have $|w| = \symLinLenVar = \opSize{\mathcal{A}} - 3$. Additionally, because $\mathcal{A}$ is acyclic, no state appears more than once in the initial run of $\mathcal{A}$ on $w$. Thus, the initial run of $\mathcal{A}$ on $w$ contains $|w| + 1 = \symLinLenVar + 1 = \opSize{\mathcal{A}} - 2$ distinct states, no sinks, and no state appears more than once. Clearly, this means that the initial run of $\mathcal{A}$ on $w$ contains exactly the $\opSize{\mathcal{A}} - 2$ non-sinks of $\mathcal{A}$. In particular, this implies that, for every two non-sinks $q, q' \in U, q \neq q'$, we have that $q'$ is reachable from $q$ or that $q$ is reachable from $q'$. Because $\mathcal{A}$ is acyclic, it can not be that $q'$ is reachable from $q$ and that $q$ is reachable from $q'$. Therefore, we have that, for every two non-sinks $q, q' \in U, q \neq q'$, either $q'$ is reachable from $q$, or $q$ is reachable from $q'$, but not both. The proof of \cref{claWP:qqP} is complete.
	\end{claimproof}
	
	With \cref{claWP:qqP} in hand, we come to the actual proof of the linearity of $\mathcal{A}$. Let $q, q' \in Q, q \neq q'$. We prove that exactly one of the three linearity conditions holds. We make a case distinction. Case 1 is very simple. Case 2 is made simple by \cref{claWP:qqP}.
	\begin{description}
		\item[Case 1: \normalfont{At least one of the states $q, q'$ is a sink.}] W.l.o.g.\ let $q$ be a sink. This clearly implies that $q'$ is unreachable from $q$, meaning that the first linearity condition does not hold.
		
		If $q$ is reachable from $q'$, then the second linearity condition holds and the third linearity condition does not hold. Thus, exactly one of the three linearity conditions holds, namely, the second one.
		
		If $q$ is not reachable from $q'$, then the second linearity condition does not hold and the third linearity condition holds. Thus, exactly one of the three linearity conditions holds, namely, the third one.
		
		We are done with Case 1.
		
		\item[Case 2: \normalfont{Both $q, q'$ are non-sinks.}] With \cref{claWP:qqP}, we have that either $q'$ is reachable from $q$, or $q$ is reachable from $q'$, but not both. This immediately implies that exactly one of the three linearity conditions holds, either the first or the second one. We are done with Case 2.
	\end{description}
	With Cases 1 and 2, exactly one of the three linearity conditions holds. Thus, the $\symADFAp$ $\mathcal{A}$ is linear. We have proven the first direction of the equivalence.
	
	Next, we assume that $\mathcal{A}$ is linear. We prove that $\opSize{\mathcal{A}} = \symLinLenVar + 3$ holds. Note that $\opSize{\mathcal{A}} \geq \symLinLenVar + 3$ holds with Assertion (i). Therefore, we only have to prove $\opSize{\mathcal{A}} \leq \symLinLenVar + 3$. Note that this is equivalent to $\symLinLenVar \geq \opSize{\mathcal{A}} - 3$. To prove $\symLinLenVar \geq \opSize{\mathcal{A}} - 3$, we construct a word of length at least $\opSize{\mathcal{A}} - 3$ on which $\mathcal{A}$ does not enter a sink.
	
	To start, we introduce a further claim:
	\begin{claimWithinProof}
		\label{claWP:u}
		Let $U' \subseteq U$. Then there is a state $q \in U'$ so that every state in $U'$ is reachable from $q$.
	\end{claimWithinProof}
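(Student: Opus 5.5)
We prove the claim by induction on $\opSize{U'}$, using only that $\mathcal{A}$ is linear and that every state of $U'$ is a non-sink. Implicitly $U' \neq \emptyset$; for $U' = \emptyset$ the statement is vacuous or false, so we read it for non-empty $U'$. The statement says that the reachability order restricted to non-sinks has a minimum element on every non-empty subset. The base case $\opSize{U'} = 1$ is immediate: take the single state $q$, which is reachable from itself via the empty path.

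For the inductive step, let $U' \subseteq U$ with $\opSize{U'} \geq 2$. Pick any $q' \in U'$ and apply the induction hypothesis to $U' \setminus \{q'\}$. This yields a state $q'' \in U' \setminus \{q'\}$ from which every state of $U' \setminus \{q'\}$ is reachable.

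Now compare $q'$ and $q''$. They are distinct non-sinks, so linearity excludes condition (iii), since that condition requires one of them to be a sink. Hence either $q'$ is reachable from $q''$, or $q''$ is reachable from $q'$.
\begin{itemize}
\item In the first case, $q := q''$ works: it reaches every state of $U' \setminus \{q'\}$ by hypothesis, and it reaches $q'$.
\item In the second case, $q := q'$ works: it reaches $q'$ trivially, and it reaches every other state of $U'$ by going first to $q''$ and then on, because reachability is transitive (paths concatenate).
\end{itemize}

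The only point that needs care is the exclusion of condition (iii), and it is exactly where the hypothesis $U' \subseteq U$ (non-sinks only) is used. Otherwise the argument is routine: transitivity of reachability plus the induction.

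In the surrounding proof, this claim is presumably applied repeatedly to build a maximal chain through all non-sinks, starting at $q_I$, which yields a word of length $\opSize{\mathcal{A}} - 3$ that never enters a sink.
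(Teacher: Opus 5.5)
Your proof is correct and matches the paper's argument. Both proceed by induction on $|U'|$: remove one state, apply the hypothesis to the rest, and use linearity on two distinct non-sinks, where condition (iii) is excluded. This gives the same two cases, closed by transitivity of reachability. Your explicit restriction to non-empty $U'$ is slightly more careful than the paper's remark that the claim holds trivially for $|U'| \leq 1$, since the claim is false for $U' = \emptyset$.
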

	\begin{claimproof}
		\cref{claWP:u} holds trivially for $|U| = 1$. Thus, we assume $|U| > 1$.
		
		We employ a prove by induction over the size of the subsets. \cref{claWP:u} holds trivially for all subsets $U' \subseteq U$ with $\opSize{U'} \leq 1$. Now let
		%$i \in \{0, \dots, |U|\}, 1 \leq i < |U|$
		$i \in \{1, \dots, |U|-1\}$
		arbitrary but fixed. We assume that \cref{claWP:u} holds for all subsets $U' \subseteq U$ with $\opSize{U'} = i$. We prove that \cref{claWP:u} then holds for all subsets  $U' \subseteq U$ with $\opSize{U'} = i+1$.
		
		Let $U' \subseteq U$ with $|U'| = i+1$. Consider some $U'' \subset U'$ with $|U''| = i$. By assumption, there is a $q \in U''$ so that every state in $U''$ is reachable from $q$. Let $\hat{q} \in U' \setminus U''$. Thus, state $\hat{q}$ is the sole state in $U'$ that is not in $U''$. Because $\mathcal{A}$ is linear and $q,\hat{q} \in U$, either $q$ is reachable from $\hat{q}$, or $\hat{q}$ is reachable from $q$.
		\begin{description}
			\item[Case 1: \normalfont{$q$ is reachable from $\hat{q}$.}]
			By selection of $q$, every state in $U''$ is reachable from $q$. And by assumption, this state $q$ is reachable from $\hat{q}$, which is the sole state in $U' \setminus U''$. Thus, every state in $U'$ is reachable from $\hat{q}$. We are done with Case 1.
			\item[Case 2: \normalfont{$\hat{q}$ is reachable from $q$.}]
			By selection of $q$, every state in $U''$ is reachable from $q$. And by assumption, state $\hat{q}$, which is the sole state in $U' \setminus U''$, is reachable from $q$. Thus, every state in $U'$ is reachable from $q$. We are done with Case 2.
		\end{description}
		With Cases 1 and 2, \cref{claWP:u} holds for $U'$.
		
		We have proven that \cref{claWP:u} holds for all subsets $U' \subseteq U$ with $\opSize{U'} = i+1$ under the assumption that it holds for all subsets $U' \subseteq U$ with $\opSize{U'} = i$. By induction, \cref{claWP:u} holds for all subsets. The proof of \cref{claWP:u} is complete.
	\end{claimproof}
	
	With \cref{claWP:u} in hand, we construct a word of length at least $\symLinLenVar$ on which $\mathcal{A}$ does not enter a sink.
	We do so by using a simple procedure which exploits \cref{claWP:u} to iteratively construct a word $w$ on which $\mathcal{A}$ visits every of its non-sinks.
	Intuitively, throughout the procedure, the set $X$ contains all non-sinks not visited by $\mathcal{A}$ on $w$ as well as the state $x = \delta(q_I,w)$ currently reached by $\mathcal{A}$ on $w$.
	Further, it holds throughout the procedure that every state in $X$ is reachable from $x$. That there is a state in $X$ from which every state in $X$ is reachable is guaranteed by \cref{claWP:u}.
	The procedure is as follows:
	\begin{enumerate}
		\item Initialize $X = U$. As argued above, we have $q_I \in U$. Note that every state in $U$ is reachable from $q_I$ because $\mathcal{A}$ is minimal. Initialize $x = q_I$. Initialize $w = \varepsilon$.
		\item While $\opSize{X} > 1$, consider $X' = X \setminus \{x\}$. With \cref{claWP:u}, there is a state $x' \in X'$ so that every state in $X'$ is reachable from $x'$. Because $x'$ is reachable from $x$, there is a word $v \in \Sigma^+$ with $\delta(x,v) = x'$. Make the following updates: $X \leftarrow X', x \leftarrow x', w \leftarrow w v$.
		\item Return $w$.
	\end{enumerate}
	
	Clearly, the procedure constructs a word $w$ with $|w| \geq \opSize{U} - 1 = \opSize{\mathcal{A}} - 3$ on which $\mathcal{A}$ does not enter a sink. Indeed, the initial run of $\mathcal{A}$ on $w$ ends in the sole state that is not removed from $X$ during the procedure. Thus, word $w$ witnesses $\symLinLenVar \geq \opSize{\mathcal{A}} - 3$.
	
	We have just proven that $\symLinLenVar \geq \opSize{\mathcal{A}} - 3$ holds. We mentioned above that $\symLinLenVar \leq \opSize{\mathcal{A}} - 3$ holds with Assertion (i). In total, we have that $\symLinLenVar = \opSize{\mathcal{A}} - 3$ holds. We have proven the second direction of the equivalence.
	
	In total, we have proven both directions of the equivalence. The proof of Assertion (ii) is complete.
	
	We have proven Assertions (i) and (ii). The proof is complete.
\end{proof}

Our interest in minimal linear safety $\symADFAp$s is motivated by the following result:

\lemaPhiIsMLSADFAp*

For the proof of \cref{lem:aPhiIsMLSADFAp}, we make observations about the structure of $\mathcal{A}_\Phi$. Because the same is true for the proof of \cref{lem:aPhi}, we prove both results together further below in \cref{subsubsec:characterizationToNPHardnessProofs}. For now, we move on to characterize the compositionality of minimal linear safety $\symADFAp$s by proving:

\thepmADFAMinLinSafeCompositionality*

To increase readability, we repeat the definitions of max-visiting words and the \symPPShort{} as well as the alternative formulation of the \symPPShort{}:

\defdifficultWords*

\defpP*

We repeat that, for a word $w = \sigma_1 \dots \sigma_n \in \Sigma^n$ and values $i, j \in \{1,\dots,n+1\}, i < j, l \in \symNatNum$, we define $\opPump{w}{i,j;l} = \sigma_1\dots\sigma_{i-1}(\sigma_i\dots\sigma_{j-1})^l\sigma_{j}\dots\sigma_n$.
We refer to $\opPump{w}{i,j;l}$ as a \highlightDef{pumping of} $w$.

\lempP*

Lastly, we repeat that, for
a minimal linear safety $\symADFAp$ $\mathcal{A}$,
a max-visiting word $w \in \opDiffWords{\mathcal{A}}$
and indices $i,j$,
we say that the \highlightDef{\symPPShort-condition holds for $w,i,j$} if
%$\delta(q_0, \opPump{w}{i,j;l}) = q_-$
$\opPump{w}{i,j;l} \notin \opLang{\mathcal{A}}$
holds for every $l \in \symNatNum$.
Remember that $\opPump{w}{i,j;l} \notin \opLang{\mathcal{A}}$ implies $\opPump{w}{i,j;l}w' \notin \opLang{\mathcal{A}}$ for all words $w'$ because $\mathcal{A}$ is a safety DFA.

We introduce one further term which we will need throughout the remainder of \cref{subsubsec:compositionalityADFApMLSProofs}.
Consider a word $w \in \Sigma^*$.
A word $wv$ with $v \in \Sigma^*$ is an \highlightDef{extension} of $w$.

As we argued in \cref{subsubsec:compositionalityADFApMLS}, one can approach the task of decomposing a minimal linear safety $\symADFAp$ $\mathcal{A}$ by trying to find, for every $w \notin \opLang{\mathcal{A}}$, a DFA $\mathcal{B} \in \opDecompSet{\mathcal{A}}$ with $w \in \opLang{\mathcal{B}}$. As we will see,
this is relatively easy if $w$ is not a max-visiting word, because $\mathcal{A}$ does not visit every non-sink in its initial run on $w$. Thus, only the max-visiting words pose problems.

If a minimal linear safety $\symADFAp$ $\mathcal{A}$ has the \symPPShort, then we can exploit the \symPPShort{} to construct, for each max-visiting word $w \in \opDiffWords{\mathcal{A}}$, a DFA which rejects the extensions of $w$ and which is in $\opDecompSet{\mathcal{A}}$. We discuss this, in addition to the treatment of non-max-visiting words, in Appendix \ref{par:pmADFAMinLinSafeWithPPComposite}.
Otherwise, that is, if $\mathcal{A}$ does not have the \symPPShort, no such DFA exists for the word that breaks the \symPPShort{}, implying that this word is a primality witness of $\mathcal{A}$. We discuss this in Appendix \ref{par:pmADFAMinLinSafeWithoutPPPrime}. There, we also discuss the importance of $\mathcal{A}$ being a safety DFA with an accepting sink. In \cref{subsubsec:compositionalityADFApMLS}, we mentioned this aspect only briefly.

\begin{figure}[t] % fig:pmADFAMinLinSafe
	\centering
	\begin{tikzpicture}[node distance=2.2125cm]
		\footnotesize
		\node[state, initial, accepting] 					(q0) 	{$q_0$};
		\node[state, right of=q0, accepting] 				(q1) 	{$q_1$};
		\node[state, right of=q1, accepting] 				(q2) 	{$q_2$};
		\node[state, right of=q2, draw=none]				(emptyNode) 	{};
		\node[state, right of=emptyNode, accepting] 		(qn-1) 	{$q_{\symLinLenVar-1}$};
		\node[state, right of=qn-1, accepting] 				(qn) 	{$q_\symLinLenVar$};
		\node[state, above right of=qn, accepting] 			(q+) 	{$q_+$};
		\node[state, below right of=qn] 					(q-) 	{$q_-$};
		
		\draw	(q0)	edge[below]							node{$\Sigma_{0,1}$}	(q1);
		\draw	(q0)	edge[above, bend left=30]			node{$\Sigma_{0,2}$}	(q2);
		\draw	(q0)	edge[above, bend left=45]			node{$\Sigma_{0,\symLinLenVar-1}$}	(qn-1);
		\draw	(q0)	edge[above, bend left=60]			node{$\Sigma_{0,\symLinLenVar}$}	(qn);
		\draw	(q0)	edge[above, bend left=75]			node{$\Sigma_{0,+}$}	(q+);
		\draw	(q0)	edge[above, bend right=75]			node{$\Sigma_{0,-}$}	(q-);
		
		\draw	(q1)	edge[below]							node{$\Sigma_{1,2}$}	(q2);
		\draw	(q1)	edge[above, bend right=30]			node{$\Sigma_{1,\symLinLenVar-1}$}	(qn-1);
		\draw	(q1)	edge[above, bend right=45]			node{$\Sigma_{1,\symLinLenVar}$}	(qn);
		\draw	(q1)	edge[above, bend left=67.5]			node{$\Sigma_{1,+}$}	(q+);
		\draw	(q1)	edge[above, bend right=60]			node{$\Sigma_{1,-}$}	(q-);
		
		\draw	(q2)	edge[dashed]						node{}	(qn-1);
		\draw	(q2)	edge[above, bend left=30]			node{$\Sigma_{2,\symLinLenVar-1}$}	(qn-1);
		\draw	(q2)	edge[above, bend left=45]			node{$\Sigma_{2,\symLinLenVar}$}	(qn);
		\draw	(q2)	edge[above, bend left=60]			node{$\Sigma_{2,+}$}	(q+);
		\draw	(q2)	edge[above, bend right=60]			node{$\Sigma_{2,-}$}	(q-);
		
		\draw	(qn-1)	edge[below]							node{$\Sigma_{\symLinLenVar-1,\symLinLenVar}$}	(qn);
		\draw	(qn-1)	edge[above, bend left=30]			node[pos=0.6]{$\Sigma_{\symLinLenVar-1,+}$}(q+);
		\draw	(qn-1)	edge[below, bend right=30]			node[pos=0.6]{$\Sigma_{\symLinLenVar-1,-}$}(q-);
		
		\draw	(qn)	edge[left]							node{$\Sigma_{\symLinLenVar,+}$}	(q+);
		\draw	(qn)	edge[left]							node{$\Sigma_{\symLinLenVar,-}$}	(q-);
		
		\draw	(q+)	edge[loop below]					node{$\Sigma$}			(q+);
		
		\draw	(q-)	edge[loop above]					node{$\Sigma$}			(q-);
	\end{tikzpicture}
	\caption{Minimal linear safety $\symADFAp$ $\mathcal{A}$ with $\symLinLenVar = \opLinLen{\mathcal{A}}$. We have $\Sigma_{i-1, -} \neq \emptyset$ for all $i \in \{1, \dots, \symLinLenVar\}$ and $\Sigma_{\symLinLenVar,-} \neq \emptyset$.}
	\label{fig:pmADFAMinLinSafe}
\end{figure}

For the remainder of \cref{subsubsec:compositionalityADFApMLSProofs}, let $\mathcal{A} = (Q, \Sigma, q_I, \delta, F)$ be a minimal linear safety $\symADFAp$ deciding a language $L$. Let $\symLinLenVar = \opLinLen{\mathcal{A}}$. W.l.o.g.\ we assume $Q = \{q_0, \dots, q_\symLinLenVar\} \cup \{q_+, q_-\}$ where $q_+$ is the accepting sink, $q_-$ is the rejecting sink, and where $q_j$ is reachable from $q_i$ for all $i, j \in \{0, \dots, \symLinLenVar\}, i < j$. This implies $q_I = q_0$. We define $\Sigma_{i, \symHash} = \{\sigma \in \Sigma \mid \delta(q_i, \sigma) = q_\symHash\}$ for all $i \in \{0, \dots, \symLinLenVar\}, \symHash \in \{0, \dots, \symLinLenVar\} \cup \{+, -\}$.

Because $\mathcal{A}$ is a minimal $\symADFAp$, both $q_+$ and $q_-$ are reachable, implying that $q_I$ is a non-sink.
Note that we have $F = Q \setminus \{q_-\}$ because $\mathcal{A}$ is a safety DFA. Further, note that $\Sigma_{i-1,i} \neq \emptyset$ for all $i \in \{1, \dots, \symLinLenVar\}$, which is implied by our assumption that $q_j$ is reachable from $q_i$ for all $i, j \in \{0, \dots, \symLinLenVar\}, i < j$ together with the linearity of $\mathcal{A}$. Finally, note that the sole rejecting state $q_-$ is reachable from every state except $q_+$. Otherwise, $\mathcal{A}$ would not be minimal. In particular, this implies $\Sigma_{\symLinLenVar,-} \neq \emptyset$. The form of $\mathcal{A}$ is outlined in \cref{fig:pmADFAMinLinSafe}.

\paragraph{Minimal Linear Safety $\symADFAp$s with the \symPPShort}
\label{par:pmADFAMinLinSafeWithPPComposite}
As our first step towards \cref{the:pmADFAMinLinSafeCompositionality}, we prove:

\begin{restatable}{lemma}{lempmADFAMinLinSafeWithPPComposite}
	\label{lem:pmADFAMinLinSafeWithPPComposite}\RestateRemark
	The minimal safety $\symADFAp$ $\mathcal{A}$ is composite if it has the \symPPShort.\lipicsEnd
\end{restatable}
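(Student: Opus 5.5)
The plan is to show, for every word $w \notin L$, that some $\mathcal{B} \in \opDecompSet{\mathcal{A}}$ rejects $w$. Since $\opInd{\mathcal{A}} = \opSize{\mathcal{A}} = \symLinLenVar+3$ by \cref{lem:aDFASizeAndLinearity}, each such $\mathcal{B}$ needs at most $\symLinLenVar+2$ states and must satisfy $L \subseteq \opLang{\mathcal{B}}$. Taking one such $\mathcal{B}$ per rejected word gives only finitely many distinct DFAs, and their intersection is $L$, so this yields an $(\opInd{\mathcal{A}}-1)$-decomposition. I split the rejected words into two kinds according to their initial run. Because $\mathcal{A}$ is safety with $F = Q\setminus\{q_-\}$, every $w\notin L$ has a shortest prefix $v\sigma$ with $\delta(q_0,v)\neq q_-$ and $\delta(q_0,v\sigma)=q_-$. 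Either $|v| = \symLinLenVar$, in which case $v\sigma \in \opDiffWords{\mathcal{A}}$ (the run on $v$ must then visit $q_0,\dots,q_\symLinLenVar$ in order), or the run on $v$ skips some $q_i$ with $i\in\{1,\dots,\symLinLenVar\}$. By safety it suffices to reject that prefix with a safety-style $\mathcal{B}$, since a safety $\mathcal{B}$ then rejects all its extensions.

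For skipped states I use $\mathcal{A}_i^+$: delete $q_i$ and redirect every transition into $q_i$ to $q_+$. This DFA has $\symLinLenVar+2$ states. A word whose $\mathcal{A}$-run visits $q_i$ is accepted by $\mathcal{A}_i^+$, since it reaches $q_+$. A word whose run avoids $q_i$ has the identical run in both automata. Hence $L\subseteq\opLang{\mathcal{A}_i^+}$, and $\mathcal{A}_i^+$ rejects every rejected word whose run skips $q_i$. This settles the first kind of word.

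For a max-visiting word $w=\sigma_1\dots\sigma_{\symLinLenVar+1}$, the \symPPShort{} together with \cref{lem:pP} gives indices $i<j\le\symLinLenVar+1$ such that $\opPump{w}{i,j;l}\notin L$ for all $l$. Along the run on $w$ we have $\delta(q_0,\sigma_1\dots\sigma_{i-1})=q_{i-1}$ and $\delta(q_0,\sigma_1\dots\sigma_{j-1})=q_{j-1}$. I build $\mathcal{A}_{w,i,j}$ by identifying these two states into one state $m$, which saves one state. The transition $q_{j-2}\xrightarrow{\sigma_{j-1}}q_{j-1}$ thereby becomes a back edge to $m$, exactly as in the self-loop of \cref{subfig:aDFAWithMpPropDecomp}. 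The out-transitions of $m$ must be defined letter by letter. On $\sigma_j$ I send $m$ to $\delta(q_{j-1},\sigma_j)$, so that $w$ and all its pumpings $\opPump{w}{i,j;l}$ end in $q_-$. On every other letter I send $m$ to whichever of $\delta(q_{i-1},\sigma),\delta(q_{j-1},\sigma)$ is "more accepting", preferring $q_+$ over non-sinks over $q_-$; if the two targets are incompatible non-sinks, I redirect to $q_+$. Rejection of $w$ is then immediate from the run.

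The main obstacle is proving $L\subseteq\opLang{\mathcal{A}_{w,i,j}}$. My first attempt is an induction on the runs. Take $x\in L$ and compare its run in $\mathcal{A}_{w,i,j}$ with the run of $\mathcal{A}$. Every time the merged automaton sits in $m$ while $\mathcal{A}$ is in $q_{i-1}$ or $q_{j-1}$ (or further along after the back edge), I argue the $\mathcal{A}_{w,i,j}$-state either coincides with $\mathcal{A}$'s state or is $q_+$. The only dangerous case is that the merged automaton reaches $q_-$ through $\sigma_j$ out of $m$ while $\mathcal{A}$ does not. In that case the portion of $x$ read so far must be of the form $\sigma_1\dots\sigma_{i-1}(\sigma_i\dots\sigma_{j-1})^l\sigma_j$, because any deviation from the loop letters leaves the loop into $q_+$ or into a genuine $\mathcal{A}$-state. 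That prefix is $\opPump{w}{i,j;l}$ truncated, and by the \symPPShort{} and safety it is not in $L$, contradicting $x\in L$. Making this truncation argument precise is where I expect the real work. It must account for the fact that the pumping condition concerns complete pumpings $\opPump{w}{i,j;l}$ rather than their prefixes; I would handle this by noting that $\mathcal{A}$ is then in the same state as on the pumping's prefix, and by tracing the linear order of the states $q_{i-1},\dots,q_{j-1}$.
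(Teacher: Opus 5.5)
Your treatment of the words that skip a state (via $\mathcal{A}_i^+$), the reduction to shortest rejected prefixes, and the finiteness argument all match the paper. The gap is in the max-visiting case. An automaton obtained by identifying $q_{i-1}$ and $q_{j-1}$ \emph{inside} $\mathcal{A}$ does not in general satisfy $L\subseteq\opLang{\mathcal{A}_{w,i,j}}$, so the induction you sketch cannot be completed. The \symPPShort-condition only says that the particular words $\opPump{w}{i,j;l}$, with their fixed suffix $\sigma_{j+1}\cdots\sigma_{\symLinLenVar+1}$, are rejected. Your merged automaton, however, reroutes every word whose run reaches $q_{i-1}$ or $q_{j-1}$ by \emph{any} letters, and then continues with \emph{arbitrary} suffixes. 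So the claim that a dangerous prefix must have the form $\sigma_1\cdots\sigma_{i-1}(\sigma_i\cdots\sigma_{j-1})^l\sigma_j$ is false. Likewise, ``leaving into a genuine $\mathcal{A}$-state'' does not mean being in the state $\mathcal{A}$ is in. On $\sigma_j$ you send $m$ to $\delta(q_{j-1},\sigma_j)$, while $\mathcal{A}$, sitting in $q_{i-1}$, may move to a different non-sink $\delta(q_{i-1},\sigma_j)$. From then on the two runs are desynchronized, and nothing prevents your automaton from rejecting a word of $L$.

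Here is a concrete instance. Take $\Sigma=\{a,b,c\}$ with $q_0\xrightarrow{a}q_1$, $q_0\xrightarrow{b}q_3$, $q_0\xrightarrow{c}q_+$; $q_1\xrightarrow{a}q_-$, $q_1\xrightarrow{b}q_2$, $q_1\xrightarrow{c}q_+$; $q_2\xrightarrow{a}q_3$, $q_2\xrightarrow{b}q_+$, $q_2\xrightarrow{c}q_-$; $q_3\xrightarrow{a}q_-$, $q_3\xrightarrow{b}q_+$, $q_3\xrightarrow{c}q_+$.
\begin{itemize}
\item This is a minimal linear safety $\symADFAp$ with $\symLinLenVar=3$ and $\opDiffWords{\mathcal{A}}=\{abaa\}$, and the \symPPShort-condition holds only for $i=1,j=2$.
\item Merging $q_0,q_1$ gives $m\xrightarrow{a}m$, $m\xrightarrow{b}q_2$, $m\xrightarrow{c}q_+$. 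This rejects $bc$, although $\mathcal{A}$ accepts it via $q_0,q_3,q_+$.
\item No other target for the $b$-edge of $m$ works: $q_3$ rejects $aba\in L$, $q_-$ rejects $ab\in L$, and $m$ or $q_+$ accept $w$.
\end{itemize}

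There are two further defects.
\begin{itemize}
\item You never ensure $\sigma_i\neq\sigma_j$. Without it, $m$ cannot both continue the loop and exit to $q_j$ on that letter.
\item Your ``more accepting'' rule sends $m$ on $\sigma_i$ to $q_+$ whenever $\delta(q_{j-1},\sigma_i)$ is a non-sink. In that case $w$ itself is accepted.
\end{itemize}

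The paper sidesteps all of this by building $\mathcal{A}_{w,i,j}$ from $w$ alone rather than from $\mathcal{A}$. It starts from the chain $q_0\xrightarrow{\sigma_1}q_1\xrightarrow{\sigma_2}\cdots\xrightarrow{\sigma_{\symLinLenVar+1}}q_-$ for $w$. It deletes $q_{j-1}$, redirects its in-edge to $q_{i-1}$, moves its out-edge on $\sigma_j$ to $q_{i-1}$, and sends every other transition into a fresh accepting sink. This automaton rejects exactly the extensions of the words $\opPump{w}{i,j;l}$. Hence $L\subseteq\opLang{\mathcal{A}_{w,i,j}}$ follows at once from the \symPPShort-condition and safety, with no run comparison against $\mathcal{A}$. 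The requirement $\sigma_i\neq\sigma_j$ is secured by choosing $i$ maximal (\cref{lem:pPAndDifferingSigmas}).
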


We assume throughout Appendix \ref{par:pmADFAMinLinSafeWithPPComposite} that $\mathcal{A}$ has the \symPPShort.
Note that this implies that $\mathcal{A}$ is non-trivial, meaning $\symLinLenVar>0$.
We prove the compositionality of $\mathcal{A}$ by specifying a $(\symLinLenVar+2)$-decomposition of $\mathcal{A}$. We define two DFA constructions, which are outlined in \cref{fig:aipAwij}. The first handles the extensions of non-max-visiting words, the second the extensions of max-visiting words.

\begin{figure}[t] % fig:aipAwij
	\begin{subfigure}{1\textwidth}
		\centering
		\begin{tikzpicture}[node distance=2.2125cm]
			\footnotesize
			\node[state, initial, accepting] 					(q0) 	{$q_0$};
			\node[state, right of=q0, draw=none]				(e0) 	{};
			\node[state, right of=e0, accepting] 				(qi-1) 	{$q_{i-1}$};
			\node[state, right of=qi-1, accepting] 				(qi+1) 	{$q_{i+1}$};
			\node[state, right of=qi+1, draw=none]				(e1) 	{};
			\node[state, right of=e1, accepting] 				(qn) 	{$q_\symLinLenVar$};
			\node[state, above right of=qn, accepting] 			(q+) 	{$q_+$};
			\node[state, below right of=qn] 					(q-) 	{$q_-$};
			
			\draw	(q0)	edge[dashed]						node{}					(qi-1);
			\draw	(q0)	edge[above, bend left]				node{$\Sigma_{0,i}$, $\Sigma_{\symLinLenVar,+}$}(q+);
			
			\draw	(qi-1)	edge[above]							node{$\Sigma_{i-1,i+1}$}(qi+1);
			\draw	(qi-1)	edge[above, bend left]				node{$\Sigma_{i-1,i}$, $\Sigma_{i-1,+}$}(q+);
			
			\draw	(qi+1)	edge[dashed]						node{}					(qn);
			
			\draw	(qn)	edge[left]							node{$\Sigma_{\symLinLenVar,+}$}	(q+);
			\draw	(qn)	edge[left]							node{$\Sigma_{\symLinLenVar,-}$}	(q-);
		\end{tikzpicture}
		\caption{DFA $\mathcal{A}_i^+$ for an $i \in \{1, \dots, \symLinLenVar\}$. Most unaltered transitions are omitted.}
		\label{subfig:aip}
	\end{subfigure}
	\begin{subfigure}{1\textwidth}
		\centering
		\begin{tikzpicture}[node distance=2.2125cm]
			\footnotesize
			\node[state, initial, accepting] 					(q0) 			{$q_0$};
			\node[state, right of=q0, draw=none]				(e0) 			{};
			\node[state, right of=e0, accepting] 				(qi-1) 			{$q_{i-1}$};
			\node[state, right of=qi-1, accepting] 				(qi) 			{$q_i$};
			\node[state, right of=qi, draw=none]				(e1) 			{};
			\node[state, right of=e1, accepting] 				(qj-2) 			{$q_{j-2}$};
			\node[state, below of=qi-1, accepting] 				(qj) 			{$q_j$};
			\node[state, right of=qj, draw=none]				(e2) 			{};
			\node[state, right of=e2, accepting] 				(qm) 			{$q_m$};
			\node[state, right of=qm] 							(q-)			{$q_-$};
			
			\draw	(q0)	edge[above, dashed]					node{$\sigma_1 \dots \sigma_{i-1}$}(qi-1);
			
			\draw	(qi-1)	edge[above]							node{$\sigma_i$}(qi);
			\draw	(qi-1)	edge[left]							node{$\sigma_j$}(qj);
			
			\draw	(qi)	edge[above, dashed]					node{$\sigma_{i+1} \dots \sigma_{j-2}$}(qj-2);
			
			\draw	(qj-2)	edge[above, bend right]				node{$\sigma_{j-1}$}(qi-1);
			
			\draw	(qj)	edge[above, dashed]					node{$\sigma_{j+1} \dots \sigma_m$}(qm);
			
			\draw	(qm)	edge[above]							node{$\sigma_{m+1}$}(q-);
			
			\draw	(q-)	edge[loop above]					node{$\Sigma$}	(q-);
		\end{tikzpicture}
		\caption{DFA $\mathcal{A}_{w,i,j}$ for a $w = \sigma_1 \dots \sigma_{m+1} \in \Sigma^m$ with $m \in \symNatNumGeq{1}$ and $i,j \in \{1, \dots, m+1\}, i < j, \sigma_i \neq \sigma_j$. Omitted transitions lead to a not depicted accepting sink.}
		\label{subfig:awij}
	\end{subfigure}
	\caption{DFAs $\mathcal{A}_i^+$ and $\mathcal{A}_{w,i,j}$.}
	\label{fig:aipAwij}
\end{figure}

We start with the first DFA construction $\mathcal{A}_i^+$. We provide the definition and prove the lemma stating the relevant properties.

\begin{restatable}{definitionRoman}{defaip}
	\label{def:aip}\RestateRemark
	Let $i \in \{1, \dots, \symLinLenVar\}$. We define $\mathcal{A}_i^+ = (Q_i^+, \Sigma, q_0, \delta_i^+, F_i^+)$ where:
	\begin{align*}
		Q_i^+	& = Q \setminus \{q_i\}, \\
		F_i^+	& = F \setminus \{q_i\}, \\
		\delta_i^+(q, \sigma) & = 
		\begin{cases}
			\delta(q, \sigma)	& \text{ if $\delta(q,\sigma) \neq q_i$} \\
			q_+					& \text{ else, thus if $\delta(q,\sigma) = q_i$}
		\end{cases}.
	\end{align*}
	\lipicsEnd
\end{restatable}

For $i \in \{1, \dots, \symLinLenVar\}$, the DFA $\mathcal{A}_i^+$ is constructed out of $\mathcal{A}$ by removing $q_i$ and redirecting every in-$q_i$-transition into $q_+$.
The DFA $\mathcal{A}_i^+$ replicates every run of $\mathcal{A}$ that does not contain $q_i$ and defaults into $q_+$ for every other run. Thus, these DFAs can be used to reject the extensions of words on which $\mathcal{A}$ enters $q_-$ without visiting every non-sink, so the extensions of non-max-visiting words.
The DFA construction is outlined in \cref{subfig:aip}. \cref{lem:aip} states the relevant properties of $\mathcal{A}_i^+$.

\begin{restatable}{lemma}{lemaip}
	\label{lem:aip}\RestateRemark
	The following assertions hold:
	\begin{romanenumerate}
		\item $\mathcal{A}_i^+ \in \opDecompSet{\mathcal{A}}$ for all $i \in \{1, \dots, \symLinLenVar\}$.
		\item Let $w \notin L$ where $w$ is not an extension of a word in $\opDiffWords{\mathcal{A}}$. Then $w \notin \bigcap_{i=1}^\symLinLenVar \opLang{\mathcal{A}_i^+}$ holds. \lipicsEnd
	\end{romanenumerate}
\end{restatable}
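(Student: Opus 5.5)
Part (i) splits into three checks, and part (ii) is a case analysis on the initial run of $\mathcal{A}$ on $w$.

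\emph{Size and language inclusion.} $\mathcal{A}_i^+$ has $\opSize{\mathcal{A}}-1=\symLinLenVar+2$ states. Since $\mathcal{A}$ is minimal, $\opInd{\mathcal{A}}=\symLinLenVar+3$, so $\opInd{\mathcal{A}_i^+}\leq\symLinLenVar+2<\opInd{\mathcal{A}}$. Its canonical minimal DFA is the member of $\opDecompSet{\mathcal{A}}$; identifying it with $\mathcal{A}_i^+$ is harmless, since only the language matters. For $\opLang{\mathcal{A}}\subseteq\opLang{\mathcal{A}_i^+}$, take $v\in L$ and compare the runs of $\mathcal{A}$ and $\mathcal{A}_i^+$ on $v$. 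The runs coincide until $\mathcal{A}$ would first enter $q_i$. If $\mathcal{A}$ never enters $q_i$, the runs are identical and end in the same state, which is accepting in $\mathcal{A}$ and hence lies in $F\setminus\{q_i\}=F_i^+$. If $\mathcal{A}$ does enter $q_i$, then at that step $\mathcal{A}_i^+$ moves to $q_+$ and stays there, so it accepts. This is the whole of (i).

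\emph{Setup for (ii).} Let $w\notin L$. Since $q_-$ is the only rejecting state and is a sink, $w$ has a shortest prefix $w'\sigma$ with $\delta(q_0,w'\sigma)=q_-$, where $\delta(q_0,w')=q_t$ for some non-sink $q_t$. By acyclicity and linearity (\cref{lem:aDFASizeAndLinearity}), the states on the run from $q_0$ to $q_t$ have strictly increasing indices. The indices are strictly increasing because every transition between non-sinks goes forward in the linear order: a backward edge would create a cycle, since all forward reachabilities hold.

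\emph{The case analysis.} Suppose every state $q_0,\dots,q_t$ is visited and $t=\symLinLenVar$. Then $|w'|=\symLinLenVar$, $w'\in L$ and $w'\sigma\notin L$, so $w'\sigma\in\opDiffWords{\mathcal{A}}$ and $w$ is an extension of a max-visiting word, contrary to assumption. Hence either some index $i\in\{1,\dots,t\}$ is skipped, or $t<\symLinLenVar$; in the latter case choose $i=\symLinLenVar$. Index $0$ is never skipped, since the run starts there. In either case $q_i$ does not occur on the run of $\mathcal{A}$ on $w'\sigma$. The run of $\mathcal{A}_i^+$ on $w'\sigma$ therefore coincides with that of $\mathcal{A}$ and ends in $q_-$, because $\delta(q_t,\sigma)=q_-\neq q_i$. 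As $q_-$ is a sink of $\mathcal{A}_i^+$, we get $w\notin\opLang{\mathcal{A}_i^+}$.

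The main obstacle is making the ``strictly increasing indices'' claim rigorous. I would argue it as follows. A transition $q_a\to q_b$ with $b\leq a$ between non-sinks would yield a cycle, because $q_a$ is reachable from $q_b$, and that is impossible in an $\symADFAp$. Consequently the visited indices form an increasing sequence inside $\{0,\dots,t\}$, and a run visiting all of them must have length exactly $t$.
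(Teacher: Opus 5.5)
Your proposal is correct and follows the paper's proof: both rest on the observation that $\mathcal{A}_i^+$ copies the run of $\mathcal{A}$ until $q_i$ is first entered and then moves to $q_+$, which gives (i), and both prove (ii) by choosing an $i$ with $q_i$ absent from the run. The paper only asserts that such an $i$ exists, whereas you justify it by showing that state indices strictly increase along every run and ruling out the max-visiting case. You also note that $\mathcal{A}_i^+$ should be replaced by its minimal DFA to belong to $\alpha(\mathcal{A})$, a point the paper glosses over.
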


\begin{proof}[Proof of \cref{lem:aip}]
	We begin by stating a key observation:
	\begin{claimWithinProof}
		\label{claWP:run}
		Let $i \in \{1, \dots, \symLinLenVar\}$. Let $m \in \symNatNum, w = \sigma_1 \dots \sigma_m \in \Sigma^m$. Let $\symRun = p_0, \sigma_1, p_1, \dots, \sigma_m, p_m$ be the initial run of $\mathcal{A}$ on $w$. Let $\symRun' = p_0', \sigma_1, p_1', \dots, \sigma_m, p_m'$ be the initial run of $\mathcal{A}_i^+$ on $w$. Then the following holds:
		\begin{itemize}
			\item If $\symRun$ does not contain $q_i$, then we have $\symRun = \symRun'$.
			\item Otherwise, let $j$ be the minimum value in $\{0, \dots, m\}$ with $p_j = q_i$. Then we have $p_k' = p_k$ for all $k \in \{0, \dots, j-1\}$ and $p_k' = q_+$ for all $k \in \{j, \dots, m\}$.
		\end{itemize}
	\end{claimWithinProof}
	
	In other words, if an initial run $\symRun$ of $\mathcal{A}$ does not contain $q_i$, then $\mathcal{A}_i^+$ replicates $\symRun$ completely.
	Otherwise, $\mathcal{A}_i^+$ replicates $\symRun$ up to the first appearance of $q_i$ and then enters the accepting sink.
	\cref{claWP:run} follows immediately from the definition of $\mathcal{A}_i^+$.
	
	With \cref{claWP:run} in hand, we can begin the proof of the two assertions.
	
	We begin with Assertion (i). Let $i \in \{1, \dots, \symLinLenVar\}$. We have $\opSize{\mathcal{A}_i^+} = \opSize{\mathcal{A}} - 1 < \opSize{\mathcal{A}}$. Thus, $\mathcal{A}_i^+$ is sufficiently small. To prove $L \subseteq \opLang{\mathcal{A}_i^+}$, let $w \in L$. If the initial run of $\mathcal{A}$ on $w$ does not contain $q_i$, then we have $\delta_i^+(q_0, w) = \delta(q_0, w)$ with \cref{claWP:run}. With $F_i^+ = F \setminus \{q_i\}$, this implies $w \in \opLang{\mathcal{A}_i^+}$. Otherwise, that is, if the initial run of $\mathcal{A}$ on $w$ contains $q_i$, then we have $\delta_i^+(q_0, w) = q_+$ with \cref{claWP:run} and therefore $w \in \opLang{\mathcal{A}_i^+}$. Thus, we have $L \subseteq \opLang{\mathcal{A}_i^+}$. With $\opSize{\mathcal{A}_i^+} < \opSize{\mathcal{A}}$ and $L \subseteq \opLang{\mathcal{A}_i^+}$, we have $\mathcal{A}_i^+ \in \opDecompSet{\mathcal{A}}$. The proof of Assertion (i) is complete.
	
	Next, we consider Assertion (ii). Let $w \notin L$ where $w$ is not an extension of a word in $\opDiffWords{\mathcal{A}}$. Then there exists an $i \in \{1, \dots, \symLinLenVar\}$ such that the initial run of $\mathcal{A}$ on $w$ does not contain $q_i$. With \cref{claWP:run}, $\mathcal{A}_i^+$ replicates the initial run of $\mathcal{A}$ on $w$, implying $w \notin \opLang{\mathcal{A}_i^+}$ and therefore $w \notin \bigcap_{i=1}^\symLinLenVar \opLang{\mathcal{A}_i^+}$. The proof of Assertion (ii) is complete.
	
	We have proven Assertions (i) and (ii). The proof is complete.
\end{proof}

Now we turn to the second DFA construction $\mathcal{A}_{w,i,j}$. With \cref{def:awij} and \cref{lem:awij}, we provide its definition and a lemma stating its relevant properties. But first, we give the intuition behind the construction and introduce a necessary technical result with \cref{lem:pPAndDifferingSigmas}.

Recall that the problem with rejecting the extensions of a max-visiting word $w = \sigma_1 \dots \sigma_{\symLinLenVar+1} \in \opDiffWords{\mathcal{A}}$ is that the DFAs in $\opDecompSet{\mathcal{A}}$ are "at least one state short" to keep track of $w$. However, the \symPPShort{} allows DFAs to "confuse" $w$ with $\opPump{w}{i_w,j_w;l}$ for all $l \in \symNatNum$, where $i_w, j_w$ are indices so that the \symPPShort-condition holds for $w, i_w, j_w$.
We exploit this to construct, for every $w \in \opDiffWords{\mathcal{A}}$, a DFA $\mathcal{A}_{w, i_w, j_w}$ which rejects exactly the extensions of the pumpings $\opPump{w}{i_w,j_w;l}$ for all $l \in \symNatNum$.
However, the definition of $\mathcal{A}_{w, i_w, j_w}$ requires $\sigma_{i_w} \neq \sigma_{j_w}$. That such indices can be selected is guaranteed by:

\begin{restatable}{lemma}{lempPAndDifferingSigmas}
	\label{lem:pPAndDifferingSigmas}\RestateRemark
	Let $w = \sigma_1 \dots \sigma_{\symLinLenVar+1} \in \opDiffWords{\mathcal{A}}$ and let $i$ be the maximum value in $\{1, \dots, \symLinLenVar + 1\}$ such that a $j \in \{1, \dots, \symLinLenVar + 1\}, i < j$ exists so that the \symPPShort-condition holds for $w, i, j$. Then we have $\sigma_i \neq \sigma_j$. \lipicsEnd
\end{restatable}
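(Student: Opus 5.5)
The plan is to argue by contradiction: assume $\sigma_i = \sigma_j$ for some $j > i$ for which the \symPPShort-condition holds for $w,i,j$. The guiding observation is a simple shift identity for pumpings. If $\sigma_i = \sigma_j$, write $w = x\,\sigma_i\,y'\,\sigma_j\,z'$ with $x = \sigma_1\dots\sigma_{i-1}$, $y' = \sigma_{i+1}\dots\sigma_{j-1}$ and $z' = \sigma_{j+1}\dots\sigma_{\symLinLenVar+1}$. Then
\[
\opPump{w}{i,j;l} = x(\sigma_i y')^l\sigma_i z' = x\sigma_i(y'\sigma_i)^l z' = x\sigma_i(y'\sigma_j)^l z'
\]
for every $l \in \symNatNum$. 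In other words, the pumping window can be moved one position to the right without changing the pumped words. I would split into two cases according to whether this shifted window stays inside the allowed index range $\{1,\dots,\symLinLenVar+1\}$.

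\emph{Case $j \leq \symLinLenVar$.} Here $i+1 < j+1 \leq \symLinLenVar+1$, so $(i+1,j+1)$ is an admissible pair. By the identity above, $\opPump{w}{i+1,j+1;l} = \opPump{w}{i,j;l}$ for all $l$. Hence the \symPPShort-condition also holds for $w,i+1,j+1$, which contradicts the maximality of $i$.

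\emph{Case $j = \symLinLenVar+1$.} The shift is not admissible, so I would instead use the pumping with $l=0$ directly. We get $\opPump{w}{i,j;0} = \sigma_1\dots\sigma_{i-1}\sigma_{\symLinLenVar+1} = \sigma_1\dots\sigma_{i-1}\sigma_i$, using $\sigma_i = \sigma_j$. This word is a prefix of $v = \sigma_1\dots\sigma_\symLinLenVar$, because $i \leq \symLinLenVar$. Since $w = v\sigma_{\symLinLenVar+1} \in \opDiffWords{\mathcal{A}}$, we know $v \in L$. Because $L$ is a safety language, every prefix of a word in $L$ is again in $L$. So $\opPump{w}{i,j;0} \in L$, contradicting the \symPPShort-condition for $w,i,j$.

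Both cases yield a contradiction, so $\sigma_i \neq \sigma_j$, and this holds for every $j$ witnessing the condition together with the maximal $i$. The main point needing care is the boundary case $j=\symLinLenVar+1$, where the shift argument fails. There the safety property and the membership $v \in L$ from \cref{def:difficultWords} do the work. The rest is just checking the word identity and the index bounds.
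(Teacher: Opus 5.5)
Your proof is correct and takes essentially the same route as the paper. It uses the same case split on $j = \symLinLenVar+1$ versus $j \le \symLinLenVar$, and in the latter case the same shift $(i,j)\mapsto(i+1,j+1)$ to contradict maximality. In the boundary case you use prefix-closedness of the safety language $L$, whereas the paper argues at the state level that $\sigma_i \in \Sigma_{i-1,i}$ but $\sigma_j \in \Sigma_{i-1,-}$; this is the same observation phrased differently.
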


\begin{proof}[Proof of \cref{lem:pPAndDifferingSigmas}]
	Let $w = \sigma_1 \dots \sigma_{\symLinLenVar+1} \in \opDiffWords{\mathcal{A}}$ and let $i, j \in \{1, \dots, \symLinLenVar + 1\}, i < j$ be some indices so that the \symPPShort-condition holds for $w, i ,j$.
	
	\begin{description}
		\item[Case 1: \normalfont{$j = \symLinLenVar+1$.}] We prove that $\sigma_i \neq \sigma_j$ holds. If this holds for the arbitrarily selected index $i$, then it necessarily also holds if $i$ is the maximum possible value.
		
		We have $\delta(q_0, \sigma_1 \dots \sigma_{i-1} \sigma_{\symLinLenVar+1}) = \delta(q_0, \sigma_1 \dots \sigma_{i-1} (\sigma_i \dots \sigma_{j-1})^0 \sigma_j \dots q_{\symLinLenVar+1}) = \delta(q_0, \opPump{w}{i,j;0}) = q_-$. This implies $\sigma_j \in \Sigma_{i-1, -}$. However, we have $\sigma_i \in \Sigma_{i-1, i}$. Thus, we have $\sigma_i \neq \sigma_j$. We are done with Case 1.
		
		\item[Case 2: \normalfont{$j < \symLinLenVar+1$.}] Towards a proof by contraposition, let $\sigma_i = \sigma_j$. We prove that there are values $i', j' \in \{1, \dots, \symLinLenVar + 1\}, i' < j'$ with $i < i'$ so that the \symPPShort-condition holds for $w, i' ,j'$. In other words, we prove that $\sigma_i = \sigma_j$ implies that $i$ is not the maximum possible value.
		
		With $\sigma_i = \sigma_j$, it is easy to see that the following holds:
		\begin{align*}
			\forall l \in \symNatNum. \sigma_1 \dots \sigma_{i-1} (\sigma_i \dots \sigma_{j-1})^l \sigma_j = \sigma_1 \dots \sigma_i (\sigma_{i+1} \dots \sigma_{j})^l.
		\end{align*}
		\iffalse
		We claim that the following holds:
		\begin{align*}
			\forall l \in \symNatNum. \sigma_1 \dots \sigma_{i-1} (\sigma_i \dots \sigma_{j-1})^l \sigma_j = \sigma_1 \dots \sigma_i (\sigma_{i+1} \dots \sigma_{j})^l.
		\end{align*}
		We prove this claim by induction.
		
		For $l = 0$, we have:
		\begin{align*}
			& \sigma_1 \dots \sigma_{i-1} (\sigma_i \dots \sigma_{j-1})^0 \sigma_j \\
			= 	& \sigma_1 \dots \sigma_{i-1} \sigma_j \\
			=	& \sigma_1 \dots \sigma_{i-1} \sigma_i \\
			=	& \sigma_1 \dots \sigma_{i-1} \sigma_i (\sigma_{i+1} \dots \sigma_{j})^0.
		\end{align*}
		Thus, the claim holds for $l = 0$.
		
		Let $l \in \symNatNum$ arbitrary but fixed. We assume that the claim holds for this $l$. Then for $l + 1$, we have:
		\begin{align*}
			& \sigma_1 \dots \sigma_{i-1} (\sigma_i \dots \sigma_{j-1})^{l+1} \sigma_j \\
			=	& \sigma_1 \dots \sigma_{i-1} (\sigma_i \dots \sigma_{j-1})^l (\sigma_i \dots \sigma_{j-1}) \sigma_j \\
			=	& \sigma_1 \dots \sigma_{i-1} (\sigma_i \dots \sigma_{j-1})^l \sigma_i (\sigma_{i+1} \dots \sigma_j) \\
			=	& \sigma_1 \dots \sigma_{i-1} (\sigma_i \dots \sigma_{j-1})^l \sigma_j (\sigma_{i+1} \dots \sigma_j) \\
			=	& \sigma_1 \dots \sigma_i (\sigma_{i+1} \dots \sigma_{j})^l (\sigma_{i+1} \dots \sigma_j) \\
			=	& \sigma_1 \dots \sigma_i (\sigma_{i+1} \dots \sigma_{j})^{l+1}.
		\end{align*}
		Thus, the claim holds for $l+1$.
		
		By induction, the claim holds for all $l \in \symNatNum$.
		\fi
		
		Now we select $i' = i + 1, j' = j + 1$. Note that, with $i < j < \symLinLenVar+1$, we have $i' < j' \leq \symLinLenVar + 1$. Thus, we have $i', j' \in \{1, \dots, \symLinLenVar+1\}, i' < j'$ such that the following holds for every $l \in \symNatNum$:
		\begin{align*}
			& \opPump{w}{i',j';l} \\
			=	& \sigma_1 \dots \sigma_{i'-1} (\sigma_{i'} \dots \sigma_{j'-1})^l \sigma_{j'} \dots \sigma_{\symLinLenVar+1} \\
			=	& \sigma_1 \dots \sigma_i (\sigma_{i+1} \dots \sigma_{j})^l \sigma_{j+1} \dots \sigma_{\symLinLenVar+1} \\
			=	& \sigma_1 \dots \sigma_{i-1} (\sigma_i \dots \sigma_{j-1})^l \sigma_j \dots \sigma_{\symLinLenVar+1} \\
			=	& \opPump{w}{i,j;l}.
		\end{align*}
		Here, the third equality holds with the observation we just made above.
		
		Thus, the \symPPShort-condition holds for $w, i', j'$. Therefore, the index $i$ is not the maximum possible value. We are done with Case 2.
	\end{description}
	
	With Cases 1 and 2, we have $\sigma_i \neq \sigma_j$ if $i$ is the maximum possible value. The proof is complete.
\end{proof}

With \cref{lem:pPAndDifferingSigmas}, we can define: For every $w = \sigma_1 \dots \sigma_{\symLinLenVar+1} \in \opDiffWords{\mathcal{A}}$, let $i_w, j_w \in \{1, \dots, \symLinLenVar+1\}, i_w < j_w$ with $\sigma_{i_w} \neq \sigma_{j_w}$ so that the \symPPShort-condition holds for $w, i_w, j_w$. We use this notation for the remainder of Appendix \ref{par:pmADFAMinLinSafeWithPPComposite}.

With this notation in hand, we can consider DFA construction $\mathcal{A}_{w,i,j}$. 
Note that the form of $\mathcal{A}_{w,i,j}$ depends only on the word $w$ and the indices $i,j$ and is independent of the minimal safety $\symADFAp$ $\mathcal{A}$ which we have fixed for \cref{subsubsec:compositionalityADFApMLSProofs}.
Consequently, \cref{def:awij,subfig:awij} are independent of $\mathcal{A}$. Instead, they consider, for some $m \in \symNatNumGeq{1}$, a word $w = \sigma_1 \dots \sigma_{m+1} \in \Sigma^{m+1}$ and indices $i,j \in \{1,\dots,m+1\}, i<j$ with $\sigma_i \neq \sigma_j$.

\begin{restatable}{definitionRoman}{defawij}
	\label{def:awij}\RestateRemark
	Let $m \in \symNatNumGeq{1}$. Let $w = \sigma_1 \dots \sigma_{m+1} \in \Sigma^{m+1}$ with $w \neq \sigma^{m+1}$ for every $\sigma \in \Sigma$. Let $i, j \in \{1, \dots, m+1\}, i < j$ with $\sigma_i \neq \sigma_j$. We define $\mathcal{A}_{w,i,j} = (Q_{w,i,j}, \Sigma, q_0, F)$ where:
	\begin{align*}
		Q_{w,i,j}	& = (\{q_0, \dots, q_m\} \setminus \{q_{j-1}\}) \cup \{q_+, q_-\}, \\
		F_{w,i,j} 	& = Q_{w,i,j} \setminus q_-, \\
		\delta_{w,i,j}(q_k, \sigma) & =
		\begin{cases} 
			q_{k+1}		& \text{ if $k \notin \{i - 1, j - 2, m\}$ and $\sigma = \sigma_{k + 1}$} \\
			q_+			& \text{ if $k \notin \{i - 1, j - 2, m\}$ and $\sigma \neq \sigma_{k + 1}$} \\
			q_i			& \text{ if $k = i - 1$ and $\sigma = \sigma_i$} \\
			q_j			& \text{ if $k = i - 1$ and $\sigma = \sigma_j$} \\
			q_+			& \text{ if $k = i - 1$ and $\sigma \notin \{\sigma_i, \sigma_j\}$} \\
			q_{i-1}		& \text{ if $k = j - 2$ and $\sigma = \sigma_{j-1}$} \\
			q_+			& \text{ if $k = j - 2$ and $\sigma \neq \sigma_{j-1}$} \\
			q_-			& \text{ if $k = m$ and $\sigma = \sigma_{m+1}$} \\
			q_+			& \text{ else, thus if $k = m$ and $\sigma \neq \sigma_{m+1}$}
		\end{cases}\\
		& \text{for all } q_k \in Q_{w,i,j} \setminus \{q_+, q_-\}, \\
		\delta_{w,i,j}(q, \sigma) & = q\\
		& \text{for all } q \in \{q_+, q_-\}.
	\end{align*}
	In the above definition of $\delta_{w,i,j}$, we use $q_j = q_{m+1} = q_-$ for $j = m+1$.
	\lipicsEnd
\end{restatable}

Let $m \in \symNatNumGeq{1}$. Let $w = \sigma_1 \dots \sigma_{m+1}$ with $w \neq \sigma^{m+1}$ for every $\sigma \in \Sigma$. Let $i,j \in \{1, \dots, m+1\}, i < j$ with $\sigma_i \neq \sigma_j$. We describe the idea behind $\mathcal{A}_{w,i,j}$ by starting from a different, very simple DFA.

It is trivial to construct the minimal DFA that rejects exactly the extensions of $w$. This DFA has $m+3$ states and is a minimal linear safety $\symADFAp$. It advances exactly one state for each letter of $w$, entering the rejecting sink on the last letter $\sigma_{m+1}$. For every "unexpected" letter, this $\symADFAp$ immediately enters the accepting sink.

The DFA $\mathcal{A}_{w,i,j}$ is constructed out of this $\symADFAp$ by removing the state $q_{j-1}$ that is entered after reading $\sigma_1 \dots \sigma_{j-1}$ and redirecting the sole in-$q_{j-1}$-transition into the state $q_{i-1}$ that is entered after reading $\sigma_1 \dots \sigma_{i-1}$. Further, the sole in-$q_j$-transition, which is of the form $q_{j-1} \xrightarrow{\sigma_j} q_j$, is replaced by the transition $q_{i-1} \xrightarrow{\sigma_j} q_j$. Note that, with $\sigma_i \neq \sigma_j$, this change does not affect the transition $q_{i-1} \xrightarrow{\sigma_i} q_i$.

The effect of these modifications is that $\mathcal{A}_{w,i,j}$ now allows for repetitions of the subword $\sigma_i \dots \sigma_{j-1}$. Thus, $\mathcal{A}_{w,i,j}$ rejects exactly the extensions of the pumpings $\opPump{w}{i,j;l}$ for all $l \in \symNatNum$. The DFA construction is outlined in \cref{subfig:awij}. We will use this DFA construction for the words $w \in \opDiffWords{\mathcal{A}}$ and the respective indices $i_w, j_w$. \cref{lem:awij} states the relevant properties of $\mathcal{A}_{w,i_w,j_w}$.

\begin{restatable}{lemma}{lemawij}
	\label{lem:awij}\RestateRemark
	Let $w \in \opDiffWords{\mathcal{A}}$. Then the following assertions hold:
	\begin{romanenumerate}
		\item $\mathcal{A}_{w,i_w,j_w} \in \opDecompSet{\mathcal{A}}$.
		\item $\opLang{\mathcal{A}_{w,i_w,j_w}} = \Sigma^* \setminus \{\opPump{w}{i_w,j_w;l} v \mid l \in \symNatNum \wedge v \in \Sigma^*\}$. \lipicsEnd
	\end{romanenumerate}
\end{restatable}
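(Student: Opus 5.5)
I would prove assertion (ii) first and then derive (i) from it.

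For (ii), I analyse $\mathcal{A}_{w,i,j}$ directly with $w = \sigma_1\dots\sigma_{\symLinLenVar+1}$, $m = \symLinLenVar$, $i = i_w$ and $j = j_w$. The claim is that the only way to reach $q_-$ is along a path of the following shape:
\begin{itemize}
\item read $\sigma_1\dots\sigma_{i-1}$ to reach $q_{i-1}$;
\item then loop some number $l$ of times through $\sigma_i\dots\sigma_{j-1}$, returning to $q_{i-1}$ each time;
\item then read $\sigma_j\dots\sigma_{m+1}$ to reach $q_-$.
\end{itemize}
Every other letter leads to $q_+$. The branching at $q_{i-1}$ is well defined precisely because $\sigma_i \neq \sigma_j$, which \cref{lem:pPAndDifferingSigmas} guarantees.

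The argument is an induction on the length of the input. The key invariant is that the state reached on a prefix lies in $\{q_0,\dots,q_m\}$ if and only if that prefix is a prefix of some $\opPump{w}{i,j;l}$. Because the loop structure is deterministic at $q_{i-1}$, the number of completed loops is determined by the word. Edge cases need care: $i = 1$ makes $q_{i-1} = q_0$; $j = i+1$ turns the loop into a self-loop on $q_{i-1}$, since $q_{j-2} = q_{i-1}$ and the definition's case split overlaps; and $j = m+1$ gives $q_j = q_-$. In the first two cases I would check that the transition definitions stay consistent; the case $j = i+1$ is exactly the one where $k = i-1 = j-2$ and both rules apply to different letters. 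Since $q_-$ is a sink and is the unique rejecting state, the rejected words are exactly the extensions of the pumpings, which gives (ii).

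For (i), the size bound is immediate: $\opSize{\mathcal{A}_{w,i,j}} = (m+1) - 1 + 2 = \symLinLenVar + 2$, which is less than $\symLinLenVar + 3 = \opInd{\mathcal{A}}$ by \cref{lem:aDFASizeAndLinearity}(ii) and the minimality of $\mathcal{A}$. Membership in $\opDecompSet{\mathcal{A}}$ also requires a minimal DFA, but minimizing can only shrink the automaton, so I take the minimal equivalent DFA.

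For language inclusion, take $x \in L$ and suppose $x \notin \opLang{\mathcal{A}_{w,i,j}}$. By (ii), $x = \opPump{w}{i,j;l}v$ for some $l$ and $v$. By the choice of $i_w, j_w$, the \symPPShort-condition gives $\opPump{w}{i,j;l} \notin L$. Because $\mathcal{A}$ is a safety DFA, this implies $x \notin L$, a contradiction.

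The main obstacle is the bookkeeping in (ii), namely showing that no path reaches $q_-$ other than via these pumpings. I would handle it by characterizing, for each state $q_k$, exactly the set of words leading to it. For example, $q_k$ with $k \ge j$ is reached exactly by $\sigma_1\dots\sigma_{i-1}(\sigma_i\dots\sigma_{j-1})^l\sigma_j\dots\sigma_k$. I would prove this by induction on word length, using determinism and $\sigma_i \neq \sigma_j$.
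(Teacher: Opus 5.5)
Your proposal is correct and follows the paper's proof. For (ii), the paper just says it follows immediately from the definition, whereas you spell out the induction and the edge cases; note that for $j=i+1$ the two rules at $k=i-1=j-2$ actually clash on the same letter $\sigma_i$, so the definition must be read as the self-loop you describe. For (i), you use the same argument as the paper: the size count $\lambda+2<\lambda+3$, plus the fact that any word rejected by $\mathcal{A}_{w,i_w,j_w}$ extends some $\opPump{w}{i_w,j_w;l}\notin L$ and so lies outside $L$ by safety. Your remark about passing to the minimal equivalent DFA to land literally in $\opDecompSet{\mathcal{A}}$ is a small point of rigor that the paper leaves implicit.
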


\begin{proof}[Proof of \cref{lem:awij}]
	Let $w \in \opDiffWords{\mathcal{A}}$. Assertion (ii) follows immediately from the definition of $\mathcal{A}_{w,i_w,j_w}$. Therefore, we only consider Assertion (i). 
	
	We have $\opSize{\mathcal{A}_{w,i_w,j_w}} = \opSize{\mathcal{A}} - 1 < \opSize{\mathcal{A}}$. Thus, $\mathcal{A}_{w,i_w,j_w}$ is sufficiently small. To prove $L \subseteq \opLang{\mathcal{A}_{w,i_w,j_w}}$, let $w' \notin \opLang{\mathcal{A}_{w,i_w,j_w}}$. With Assertion (ii), this implies $w' = \opPump{w}{i_w,j_w;l} v$ for some $l \in \symNatNum, v \in \Sigma^*$.
	Because $i_w,j_w$ are selected so that the \symPPShort-condition holds for $w, i_w, j_w$, we have $\opPump{w}{i_w,j_w;l} \notin L$.
	And because $\mathcal{A}$ is a safety DFA, this implies $w' = \opPump{w}{i_w,j_w;l} v \notin L$.
	Thus, we have $L \subseteq \opLang{\mathcal{A}_{w,i_w,j_w}}$. With $\opSize{\mathcal{A}_{w,i_w,j_w}} < \opSize{\mathcal{A}}$ and $L \subseteq \opLang{\mathcal{A}_{w,i_w,j_w}}$, we have $\mathcal{A}_{w,i_w,j_w} \in \opDecompSet{\mathcal{A}}$. The proof of Assertion (i) is complete.
	
	We have proven Assertion (i). As mentioned, Assertion (ii) follows immediately from the definition of $\mathcal{A}_{w,i_w,j_w}$. The proof is complete.
\end{proof}

\cref{lem:aip,lem:awij} immediately imply:

\begin{restatable}{lemma}{lempmADFAMinLinSafeWithPPDecomposition}
	\label{lem:pmADFAMinLinSafeWithPPDecomposition}\RestateRemark
	The following condition holds:
	\begin{align*}
		L = \bigcap_{i=1}^\symLinLenVar \opLang{\mathcal{A}_{i}^+} \cap \bigcap_{w \in \opDiffWords{\mathcal{A}}} \opLang{\mathcal{A}_{w,i_w,j_w}}.
	\end{align*}
	Thus, the $\symADFAp$ $\mathcal{A}$ is composite.
	\lipicsEnd
\end{restatable}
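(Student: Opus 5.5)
We prove the two inclusions of the stated equality separately and then read off compositionality from the sizes of the DFAs involved.

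For ``$\subseteq$'', we invoke \cref{lem:aip}~(i) and \cref{lem:awij}~(i). Every $\mathcal{A}_i^+$ with $i \in \{1,\dots,\symLinLenVar\}$ and every $\mathcal{A}_{w,i_w,j_w}$ with $w \in \opDiffWords{\mathcal{A}}$ lies in $\opDecompSet{\mathcal{A}}$. In particular $L$ is contained in each of their languages, and hence in the intersection.

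For ``$\supseteq$'', we argue by contraposition. Let $w' \notin L$ and split into two cases.
\begin{itemize}
\item If $w'$ is not an extension of a max-visiting word, then \cref{lem:aip}~(ii) directly gives $w' \notin \bigcap_{i=1}^{\symLinLenVar} \opLang{\mathcal{A}_i^+}$.
\item If $w' = wv$ for some $w \in \opDiffWords{\mathcal{A}}$ and $v \in \Sigma^*$, we note that $w = \opPump{w}{i_w,j_w;1}$. By \cref{lem:awij}~(ii), $\mathcal{A}_{w,i_w,j_w}$ rejects all extensions of $\opPump{w}{i_w,j_w;1}$, so $w' \notin \opLang{\mathcal{A}_{w,i_w,j_w}}$.
\end{itemize}
In both cases $w'$ lies outside the intersection.

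Before this we must check that the objects are well defined. The indices $i_w, j_w$ exist with $\sigma_{i_w} \neq \sigma_{j_w}$ by the \symPPShort{} together with \cref{lem:pPAndDifferingSigmas}. The hypotheses of \cref{def:awij} also require $m = \symLinLenVar \geq 1$, which holds since $\symLinLenVar > 0$, and $w \neq \sigma^{m+1}$, which follows from $\sigma_{i_w} \neq \sigma_{j_w}$. Finally, $\opDiffWords{\mathcal{A}}$ is finite since it is contained in $\Sigma^{\symLinLenVar+1}$, so the intersection has finitely many components.

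For compositionality, each component has $\symLinLenVar + 2$ states. Since $\mathcal{A}$ is minimal and linear, \cref{lem:aDFASizeAndLinearity}~(ii) gives $\opInd{\mathcal{A}} = \opSize{\mathcal{A}} = \symLinLenVar + 3$. The equality therefore yields an $(\opInd{\mathcal{A}}-1)$-decomposition of $\mathcal{A}$, so $\mathcal{A}$ is composite.

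The only delicate point is checking the hypotheses of \cref{def:awij} just described; everything else is direct bookkeeping from the cited lemmas.
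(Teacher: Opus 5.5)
Your proposal is correct and follows the paper's proof: you obtain $L \subseteq L_\cap$ from Lemmas~\ref{lem:aip}~(i) and~\ref{lem:awij}~(i). For the reverse inclusion you split a rejected word by whether it extends a max-visiting word, which is just the contrapositive of the paper's split on membership in $\bigcap_{i=1}^{\symLinLenVar} \opLang{\mathcal{A}_i^+}$. Your extra checks make explicit what the paper leaves implicit: that $w = \opPump{w}{i_w,j_w;1}$, that the hypotheses of Definition~\ref{def:awij} hold, and that every component has $\symLinLenVar+2 = \opInd{\mathcal{A}}-1$ states.
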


\begin{proof}[Proof of \cref{lem:pmADFAMinLinSafeWithPPDecomposition}]
	We denote the language created by the intersections on the right hand side by $L_\cap$. With \cref{lem:aip,lem:awij}, all DFAs used to create $L_\cap$ are in $\opDecompSet{\mathcal{A}}$. Therefore, they are sufficiently small and $L \subseteq L_\cap$ holds. We only have to prove $L_\cap \subseteq L$.
	
	To prove $L_\cap \subseteq L$, let $w \notin L$. We show $w \notin L_\cap$. If $w \notin \bigcap_{i=1}^\symLinLenVar \opLang{\mathcal{A}_{i}^+}$, this holds trivially. Therefore, we assume $w \in \bigcap_{i=1}^\symLinLenVar \opLang{\mathcal{A}_{i}^+}$. With \cref{lem:aip}, this implies that $w$ is an extension of a word $u \in \opDiffWords{\mathcal{A}}$. And with \cref{lem:awij}, we then have $w \notin \opLang{\mathcal{A}_{u,i_u,j_u}}$, implying $w \notin L_\cap$. Thus, we have $L_\cap \subseteq L$.
	
	We have proven $L_\cap \subseteq L$. We already argued that $L \subseteq L_\cap$ holds. Additionally, we argued that the DFAs used to create $L_\cap$ are sufficiently small. Thus, we have proven the compositionality of $\mathcal{A}$. The proof is complete.
\end{proof}

\cref{lem:pmADFAMinLinSafeWithPPDecomposition} subsumes \cref{lem:pmADFAMinLinSafeWithPPComposite}, with the decomposition outlined in \cref{lem:pmADFAMinLinSafeWithPPDecomposition} witnessing the compositionality of $\mathcal{A}$ under the assumption that $\mathcal{A}$ has the \symPPShort.
Thus, we have proven that a minimal linear safety $\symADFAp$ is composite if it has the \symPPShort.

\paragraph{Minimal Linear Safety $\symADFAp$s without the \symPPShort}
\label{par:pmADFAMinLinSafeWithoutPPPrime}
As our second and final step towards \cref{the:pmADFAMinLinSafeCompositionality}, we prove:

\begin{restatable}{lemma}{lempmADFAMinLinSafeWithoutPPPrime}
	\label{lem:pmADFAMinLinSafeWithoutPPPrime}\RestateRemark
	The minimal safety $\symADFAp$ $\mathcal{A}$ is prime if it does not have the \symPPShort. \lipicsEnd
\end{restatable}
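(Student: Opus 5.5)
Let $w = \sigma_1 \dots \sigma_{\symLinLenVar+1} \in \opDiffWords{\mathcal{A}}$ be a max-visiting word that breaks the \symPPShort{}. Such a word exists: if $\opDiffWords{\mathcal{A}}$ were empty, the \symPPShort{} would hold vacuously. By \cref{lem:pP}, breaking the \symPPShort{} means that for every $i,j \in \{1,\dots,\symLinLenVar+1\}$ with $i<j$ there is some $l$ with $\opPump{w}{i,j;l} \in L$. Since $w \notin L$, it suffices to show that $w \in \opLang{\mathcal{B}}$ for every $\mathcal{B} \in \opDecompSet{\mathcal{A}}$; then $w$ is a primality witness. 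So I fix a minimal DFA $\mathcal{B}$ with $\opSize{\mathcal{B}} = \opInd{\mathcal{B}} < \opInd{\mathcal{A}} = \opSize{\mathcal{A}} = \symLinLenVar+3$ (using \cref{lem:aDFASizeAndLinearity}~(ii) and the minimality of $\mathcal{A}$) and $L \subseteq \opLang{\mathcal{B}}$. Towards a contradiction, I assume $w \notin \opLang{\mathcal{B}}$. Let $r_k$ denote the state of $\mathcal{B}$ after reading $\sigma_1\dots\sigma_k$, for $k \in \{0,\dots,\symLinLenVar+1\}$.

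The core of the argument is a counting step showing that $\mathcal{B}$ would need at least $\symLinLenVar+3$ pairwise distinct states if $r_0,\dots,r_\symLinLenVar$ were pairwise distinct. I identify two further states besides these.

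First, since $\mathcal{A}$ is a minimal $\symADFAp$, $q_+$ is reachable, say via a word $u$. Then every extension of $u$ lies in $L \subseteq \opLang{\mathcal{B}}$, so the state $s_+ = \delta_\mathcal{B}(\cdot,u)$ accepts every continuation.

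Second, the state $r_{\symLinLenVar+1}$ is rejecting.

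Each $r_k$ with $k \le \symLinLenVar$ is accepting, because $\sigma_1\dots\sigma_k$ is a prefix of $v = \sigma_1\dots\sigma_\symLinLenVar \in L$ and $\mathcal{A}$ is a safety DFA, so $\sigma_1\dots\sigma_k \in L \subseteq \opLang{\mathcal{B}}$. Hence $r_{\symLinLenVar+1} \notin \{r_0,\dots,r_\symLinLenVar\}$. Moreover, each $r_k$ rejects the continuation $\sigma_{k+1}\dots\sigma_{\symLinLenVar+1}$, so $r_k \neq s_+$. Finally, $s_+ \neq r_{\symLinLenVar+1}$, since one is accepting and the other rejecting. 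Thus pairwise distinctness of $r_0,\dots,r_\symLinLenVar$ would give $\opSize{\mathcal{B}} \geq \symLinLenVar+3$, a contradiction.

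Consequently, $r_{i-1} = r_{j-1}$ for some $1 \le i < j \le \symLinLenVar+1$. By a standard induction on $l$, $\mathcal{B}$ reaches $r_{i-1}$ after $\sigma_1\dots\sigma_{i-1}(\sigma_i\dots\sigma_{j-1})^l$ for every $l \in \symNatNum$. Reading the common suffix $\sigma_j\dots\sigma_{\symLinLenVar+1}$ then gives $\delta_\mathcal{B}(\cdot,\opPump{w}{i,j;l}) = r_{\symLinLenVar+1}$, which is rejecting. Hence $\opPump{w}{i,j;l} \notin \opLang{\mathcal{B}} \supseteq L$ for all $l$, so the \symPPShort-condition holds for $w,i,j$. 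This contradicts the choice of $w$, and therefore $w \in \opLang{\mathcal{B}}$, completing the argument.

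The step I expect to need the most care is the distinctness bookkeeping. In particular, one must separate $r_{\symLinLenVar+1}$ from the prefix states; this uses that every prefix of $v$ is in $L$. The collision must also occur among $r_0,\dots,r_\symLinLenVar$ rather than involving $r_{\symLinLenVar+1}$, so that the pumped factor lies inside $\sigma_1\dots\sigma_\symLinLenVar$ with nonempty $z$, as required by \cref{def:pP}. The argument never requires $\mathcal{B}$ itself to be a safety DFA or linear, so I would not invoke any structural lemma about $\mathcal{B}$ beyond its size.
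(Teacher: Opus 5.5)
Your proof is correct. It shares the paper's pigeonhole-and-pump core, but it reaches that core by a more direct route.

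The paper first invokes \cref{lem:safetyDecomp} and \cref{lem:acceptingSink} to restrict attention to minimal safety DFAs $\mathcal{B}$ that have both an accepting sink $s_+$ and a rejecting sink $s_-$. It then handles the case $\sigma_1\dots\sigma_\symLinLenVar \notin \opLang{\mathcal{B}}$ separately, observes that $\mathcal{B}$ has at most $\symLinLenVar$ non-sinks, and finds a repeated non-sink on the run over $\sigma_1\dots\sigma_\symLinLenVar$.

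You instead work with an arbitrary $\mathcal{B} \in \opDecompSet{\mathcal{A}}$; in fact your argument applies to any DFA with fewer than $\symLinLenVar+3$ states whose language contains $L$. You replace the two sinks by two witness states:
\begin{itemize}
\item the state reached on a word that leads $\mathcal{A}$ into $q_+$, which accepts every continuation without having to be a sink;
\item the rejecting state $r_{\symLinLenVar+1}$.
\end{itemize}
You obtain acceptance of $r_0,\dots,r_\symLinLenVar$ from the prefix-closedness of $L$ rather than from safety of $\mathcal{B}$.

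What your route buys:
\begin{itemize}
\item It makes this direction self-contained, needing neither \cref{lem:safetyDecomp} (the result from the literature whose proof the paper had to repair) nor \cref{lem:acceptingSink}.
\item It treats $\symLinLenVar = 0$ uniformly, with no separate case.
\item It proves something slightly stronger: every DFA with fewer than $\symLinLenVar+3$ states and language containing $L$ accepts $w$, not just every minimal safety one.
\end{itemize}

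What the paper's route buys is mainly expository. It isolates the structural fact that every relevant decomposition DFA has both sinks and is therefore short of non-sinks, which is the ``one state short'' intuition used in \cref{subsubsec:compositionalityADFApMLS}. Your $s_+$ argument is essentially a local, inlined version of \cref{lem:acceptingSink}.
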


When we discussed this aspect of \cref{the:pmADFAMinLinSafeCompositionality} in \cref{subsubsec:compositionalityADFApMLS}, we only briefly mentioned an important detail regarding how the form of $\mathcal{A}$ restricts the DFAs we have to consider for decompositions of $\mathcal{A}$. With \cref{lem:safetyDecomp,lem:acceptingSink}, we fill this gap.

With \cref{lem:safetyDecomp}, we restrict the DFAs relevant for decompositions of $\mathcal{A}$ by exploiting that $\mathcal{A}$ is a safety DFA. \cref{lem:safetyDecomp} immediately implies that if $\mathcal{A}$ is composite, then it can be decomposed using only safety DFAs.
This result is originally from \cite{netser18decomposition}. But because \cite{netser18decomposition} is not widely available, we restate the result and offer a proof. In our restatement, we rephrase the result to fit our purposes, but it retains its original insight.

\begin{restatable}[{\cite[Theorem 13]{netser18decomposition}}]{lemma}{lemsafetyDecomp}
	\label{lem:safetyDecomp}\RestateRemark
	Let $\mathcal{B}$ be a minimal safety DFA. Let $\mathcal{B}'$ be a minimal DFA with $\opLang{\mathcal{B}} \subseteq \opLang{\mathcal{B}'}$. Then there is a minimal safety DFA $\mathcal{B}''$ with $\opSize{\mathcal{B}''} \leq \opSize{\mathcal{B}'}$ and $\opLang{\mathcal{B}} \subseteq \opLang{\mathcal{B}''} \subseteq \opLang{\mathcal{B}'}$. \lipicsEnd
\end{restatable}

\begin{proof}[Proof of \cref{lem:safetyDecomp}]
	Let $\mathcal{B} = (S, \Pi, s_I, \eta, G)$ be a minimal safety DFA, and let $\mathcal{B}' = (S', \Pi, s_I', \eta', G')$ be a minimal DFA with $\opLang{\mathcal{B}} \subseteq \opLang{\mathcal{B}'}$.
	
	If $\opSize{\mathcal{B}} = 1$ and $s_I \notin G$, then $\mathcal{B}$ consists only of the rejecting sink $s_I$. Thus, we can select $\mathcal{B}'' = \mathcal{B}$ and are done.
	
	If $\opSize{\mathcal{B}} = 1$ and $s_I \in G$, then $\mathcal{B}$ consists only of the accepting sink $s_I$, meaning $\opLang{\mathcal{B}} = \Pi^*$. Because $\mathcal{B}'$ is a minimal DFA for which $\Pi^* = \opLang{\mathcal{B}} \subseteq \opLang{\mathcal{B}'}$ holds, $\mathcal{B}'$ also consists only of an accepting sink. Thus, we can select $\mathcal{B}'' = \mathcal{B}$ (or alternatively $\mathcal{B}'' = \mathcal{B}'$) and are done.
	
	We have covered both cases with $\opSize{\mathcal{B}} = 1$. Therefore, from here on, we assume $\opSize{\mathcal{B}} > 1$. Together with the minimality of $\mathcal{B}$, this means that at least $s_I$ is an accepting non-sink. Together with $\opLang{\mathcal{B}} \subseteq \opLang{\mathcal{B}'}$, this implies that $s_I'$ is accepting as well.
	
	If $s_I'$ is a sink, then $\mathcal{B}'$ is a minimal safety DFA consisting only of the accepting sink $s_I'$. Thus, we can set $\mathcal{B}'' = \mathcal{B}'$ and are done. Therefore, from here on, we assume that $s_I'$ is not a sink. Together with the minimality of $\mathcal{B}'$, this implies that $\mathcal{B}'$ has, in addition to the accepting non-sink $s_I'$, at least one further state, which is rejecting and which may or may not be a sink.
	
	To briefly summarize, in the remainder, $\mathcal{B}'$ is a minimal safety DFA in which $s_I'$ is an accepting non-sink and which has at least one rejecting state which may or may not be a sink. Now we construct a DFA $\hat{\mathcal{B}}$ out of $\mathcal{B}'$ so that the minimal DFA accepting the same language as $\hat{\mathcal{B}}$ will be a suitable choice for $\mathcal{B}''$. We construct $\hat{\mathcal{B}}$ out of $\mathcal{B}'$ by removing all rejecting states of $\mathcal{B}'$, adding one rejecting sink, and redirecting every transition leading into a rejecting state into the rejecting sink. Formally, we define $\hat{\mathcal{B}} = (\hat{S}, \Pi, \hat{s}_I, \hat{\eta}, \hat{G})$ where:
	\begin{align*}
		\hat{S}		& = G' \uplus \{s_-\}, \\
		\hat{s}_I	& = s_I', \\
		\hat{G}		&= G', \\
		\hat{\eta}(s,\sigma) & =
		\begin{cases}
			\eta'(s, \sigma)	& \text{ if $s \in G'$ and $\eta'(s, \sigma) \in G'$} \\
			s_-					& \text{ else, thus if $s = s_-$ or if $s \in G'$ and $\eta'(s, \sigma) \notin G'$}
		\end{cases}.
	\end{align*}
	Note that, as argued above, $s_I' \in G'$ holds. Therefore, we do not remove $s_I'$ in our construction of $\hat{\mathcal{B}}$ and can retain it as the initial state.
	
	Clearly, $\hat{\mathcal{B}}$ is a safety DFA, with the newly introduced state $s_-$ as the rejecting sink. Further, we argued above that $\mathcal{B}'$ has at least one rejecting state. This implies $\opSize{\hat{\mathcal{B}}} \leq \opSize{\mathcal{B}'}$ because we remove at least one state and add exactly one state. We still have to prove $\opLang{\mathcal{B}} \subseteq \opLang{\hat{\mathcal{B}}} \subseteq \opLang{\mathcal{B}'}$.
	
	We begin by proving $\opLang{\hat{\mathcal{B}}} \subseteq \opLang{\mathcal{B}'}$. Let $w \in \Pi^*$ with $w \notin \opLang{\mathcal{B}'}$. We prove $w \notin \opLang{\hat{\mathcal{B}}}$, which immediately implies $\opLang{\hat{\mathcal{B}}} \subseteq \opLang{\mathcal{B}'}$. Let $w'$ be the shortest prefix of $w$ so that $\eta'(s_I',w') \notin G'$. With $w \notin \opLang{\mathcal{B}'}$, such a prefix exists. With the definition of $\hat{\mathcal{B}}$, this clearly means $\hat{\eta}(\hat{s}_I,w') = s_-$, with $\hat{\mathcal{B}}$ entering $s_-$ on the last letter of $w'$. Thus, because $s_-$ is a sink and $w'$ is a prefix of $w$, we have $\hat{\eta}(\hat{s}_I,w) = s_-$ and therefore $w \notin \opLang{\hat{\mathcal{B}}}$. As argued, this implies $\opLang{\hat{\mathcal{B}}} \subseteq \opLang{\mathcal{B}'}$.
	
	Next, we prove $\opLang{\mathcal{B}} \subseteq \opLang{\hat{\mathcal{B}}}$. Let $w \in \Pi^*$ with $w \notin \opLang{\hat{\mathcal{B}}}$. We prove $w \notin \opLang{\mathcal{B}}$, which immediately implies $\opLang{\mathcal{B}} \subseteq \opLang{\hat{\mathcal{B}}}$. Let $w'$ be the shortest prefix of $w$ so that $\hat{\eta}(\hat{s}_I,w') = s_-$. With $w \notin \opLang{\hat{\mathcal{B}}}$, such a prefix exists. With the definition of $\hat{\mathcal{B}}$, this clearly means $\eta'(s_I',w') \notin G'$ and therefore $w' \notin \opLang{\mathcal{B}'}$. Because $\opLang{\mathcal{B}} \subseteq \opLang{\mathcal{B}'}$ holds by selection of $\mathcal{B}'$, this implies $w' \notin \opLang{\mathcal{B}}$. And because $\mathcal{B}$ is a safety DFA and $w'$ is a prefix of $w$, we have that $w' \notin \opLang{\mathcal{B}}$ implies $w \notin \opLang{\mathcal{B}}$. Thus, we have proven $w \notin \opLang{\mathcal{B}}$. As argued, this implies $\opLang{\mathcal{B}} \subseteq \opLang{\hat{\mathcal{B}}}$.
	
	To summarize, $\hat{\mathcal{B}}$ is a safety DFA with $\opSize{\hat{\mathcal{B}}} \leq \opSize{\mathcal{B}'}$ and $\opLang{\mathcal{B}} \subseteq \opLang{\hat{\mathcal{B}}} \subseteq \opLang{\mathcal{B}'}$. Clearly, the minimal DFA $\hat{\mathcal{B}}_{\min}$ recognizing the same language as $\hat{\mathcal{B}}$ is also a safety DFA. Thus, we can select $\mathcal{B}'' = \hat{\mathcal{B}}_{\min}$. The proof is complete. 
\end{proof}

In the above proof of \cref{lem:safetyDecomp}, we followed the ideas from \cite{netser18decomposition}, but corrected what appears to be an inaccuracy in the central DFA construction.
Given DFAs $\mathcal{B}$ and $\mathcal{B}'$ as required, we constructed the DFA $\mathcal{B}''$ out of $\mathcal{B}'$ by essentially removing all rejecting states of $\mathcal{B}'$ and replacing them with one rejecting sink. 
In contrast, in \cite{netser18decomposition}, the DFA $\mathcal{B}''$ is constructed out of $\mathcal{B}'$ by removing all rejecting states and additionally all states reachable from them and replacing them with one rejecting sink.
This appears to be incorrect.
Indeed, we can construct some minimal safety DFA $\mathcal{B}$ with $\emptyset \subset \opLang{\mathcal{B}} \subset \Pi^*$ and a minimal DFA $\mathcal{B}'$ with $\opLang{\mathcal{B}} \subseteq \opLang{\mathcal{B}'} \subset \Pi^*$ so that the initial state of $\mathcal{B}'$ is reachable from one of its rejecting states.
To construct $\mathcal{B}''$ out of $\mathcal{B}'$ following \cite{netser18decomposition}, every state, including the initial state, would be removed.
What the new initial state would be in this case seems to be undefined in \cite{netser18decomposition}. Trying to repair this in the obvious way by setting the newly introduced rejecting sink as the new initial state would lead to $\opLang{\mathcal{B}''} = \emptyset$ and therefore $\opLang{\mathcal{B}} \not\subseteq \opLang{\mathcal{B}''}$.
Removing just the rejecting states of $\mathcal{B}'$, not the rejecting states and the states reachable from them, solves this problem.
Note that in this short discussion, we translated the DFA construction outlined in \cite{netser18decomposition} to our rephrasing of the result.

With \cref{lem:acceptingSink}, we further restrict the DFAs relevant for decompositions of $\mathcal{A}$ by exploiting that $\mathcal{A}$ is minimal and has an accepting sink. With \cref{lem:acceptingSink}, every DFA in $\opDecompSet{\mathcal{A}}$ has an accepting sink.

\begin{restatable}{lemma}{lemacceptingSink}
	\label{lem:acceptingSink}\RestateRemark
	Let $\mathcal{B}$ be a minimal DFA with an accepting sink. Let $\mathcal{B}'$ be a minimal DFA with $\opLang{\mathcal{B}} \subseteq \opLang{\mathcal{B}'}$. Then $\mathcal{B}'$ has an accepting sink as well. \lipicsEnd
\end{restatable}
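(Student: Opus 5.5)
Let $\mathcal{B} = (S, \Pi, s_I, \eta, G)$ have accepting sink $s_+$, and let $\mathcal{B}' = (S', \Pi, s_I', \eta', G')$. The approach uses the minimality of $\mathcal{B}$ to pick a word that drives $\mathcal{B}$ into its accepting sink. Since $\mathcal{B}$ is minimal, every state is reachable, so there is a word $v \in \Pi^*$ with $\eta(s_I, v) = s_+$. Because $s_+$ is an accepting sink, $vy \in \opLang{\mathcal{B}}$ holds for every $y \in \Pi^*$. With $\opLang{\mathcal{B}} \subseteq \opLang{\mathcal{B}'}$ we then get $vy \in \opLang{\mathcal{B}'}$ for every $y \in \Pi^*$.

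Now consider the state $s' = \eta'(s_I', v)$ of $\mathcal{B}'$. Every word $y$ satisfies $\eta'(s', y) \in G'$, so every state reachable from $s'$ is accepting and accepts all of $\Pi^*$. In other words, all states reachable from $s'$ (including $s'$ itself) have the residual language $\Pi^*$, and hence they are Myhill--Nerode equivalent.

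Since $\mathcal{B}'$ is minimal, no two distinct states are equivalent. So $s'$ is the unique state with residual $\Pi^*$, and every state reachable from $s'$ coincides with $s'$. In particular $\eta'(s', \sigma) = s'$ for every $\sigma \in \Pi$, which means $s'$ is a sink. As $s' \in G'$, it is an accepting sink of $\mathcal{B}'$.

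The only step needing care is the appeal to minimality, namely that a minimal DFA has all states reachable and pairwise inequivalent. This is standard for the canonical minimal DFA, and any minimal DFA is isomorphic to it, so I expect no real obstacle.
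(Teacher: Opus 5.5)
Your proof is correct and follows the paper's argument: both take a word driving $\mathcal{B}$ into its accepting sink and use the minimality of $\mathcal{B}'$ to conclude that the state $\mathcal{B}'$ reaches on that word must be an accepting sink. The paper argues by contraposition and simply asserts the minimality step. You argue directly and justify that step explicitly, via pairwise inequivalence of states in a minimal DFA.
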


\begin{proof}[Proof of \cref{lem:acceptingSink}]
	Let $\mathcal{B} = (S, \Pi, s_I, \eta, G)$ be a minimal DFA with an accepting sink $s_+ \in S$. We prove that every minimal DFA $\mathcal{B}'$ with $\opLang{\mathcal{B}} \subseteq \opLang{\mathcal{B}'}$ has an accepting sink as well.
	
	Towards a proof by contraposition, let $\mathcal{B}' = (S', \Pi, s_I', \eta', G')$ be a minimal DFA without an accepting sink. We prove $\opLang{\mathcal{B}} \not\subseteq \opLang{\mathcal{B}'}$ holds.
	
	Let $w \in \Pi^*$ with $\eta(s_I, w) = s_+$. Note that $s_+$ is reachable because $\mathcal{B}$ is minimal. Therefore, such a word exists. Additionally, note that every extension of $w$ is accepted by $\mathcal{B}$. Now consider the state $\eta'(s_I', w)$. Because $\mathcal{B}'$ is minimal and $\eta'(s_I', w)$ is not an accepting sink, there is a word $v \in \Pi^*$ with $\eta'(s_I', wv) \notin G'$. Thus, we have $wv \notin \opLang{\mathcal{B}'}$ but $wv \in \opLang{\mathcal{B}}$. This implies $\opLang{\mathcal{B}} \not\subseteq \opLang{\mathcal{B}'}$.
	
	We have proven that, for every minimal DFA $\mathcal{B}'$ without an accepting sink, $\opLang{\mathcal{B}} \not\subseteq \opLang{\mathcal{B}'}$ holds. By contraposition, every minimal DFA $\mathcal{B}'$ with $\opLang{\mathcal{B}} \subseteq \opLang{\mathcal{B}'}$ has an accepting sink. The proof is complete.
\end{proof}

\cref{lem:safetyDecomp,lem:acceptingSink} together imply that if $\mathcal{A}$ is composite, then it can be decomposed into safety DFAs with an accepting sink. This is an important point, because, as we outlined in \cref{subsubsec:compositionalityADFApMLS}, the proof of \cref{lem:pmADFAMinLinSafeWithoutPPPrime} hinges on the fact that the DFAs in $\opDecompSet{\mathcal{A}}$ are "at least one state short" to keep track of the words in $\opDiffWords{\mathcal{A}}$.
As becomes clear in the now following proof of \cref{lem:pmADFAMinLinSafeWithoutPPPrime}, this is the case because the DFAs relevant for a potential decomposition of $\mathcal{A}$ have to possess, like $\mathcal{A}$ itself, both an accepting and a rejecting sink, as is implied by \cref{lem:safetyDecomp,lem:acceptingSink}.
Thus, they are smaller not because they have no accepting or no rejecting sink, but because they have fewer non-sinks.
And because they have fewer non-sinks, they are "at least one state short" and thus unable to accurately keep track of the words in $\opDiffWords{\mathcal{A}}$.

\begin{proof}[Proof of \cref{lem:pmADFAMinLinSafeWithoutPPPrime}]
	We assume that $\mathcal{A}$ does not have the \symPPShort. We prove that this implies primality.
	
	Note that if $\mathcal{A}$ is trivial, that is, if $\symLinLenVar=0$, then it does not have the \symPPShort{} and is also trivially prime. Thus, for $\symLinLenVar=0$, we are already done. From here on, we therefore assume $\symLinLenVar>0$.
	
	Because $\mathcal{A}$ does not have the \symPPShort, there is a word $w = \sigma_1 \dots \sigma_{\symLinLenVar+1} \in \opDiffWords{\mathcal{A}}$ that breaks the \symPPShort, meaning:
	\begin{align*}
		\forall i,j \in \{1, \dots, \symLinLenVar+1\}.
			i < j \rightarrow \exists l \in \symNatNum.
				\delta(q_0, \opPump{w}{i,j;l}) \neq q_-.
	\end{align*}
	We show that this word $w$ is a primality witness of $\mathcal{A}$.
	
	Note that, with \cref{lem:acceptingSink}, because $\mathcal{A}$ has an accepting sink, every DFA in $\opDecompSet{\mathcal{A}}$ has an accepting sink as well.
	Further, with \cref{lem:safetyDecomp}, if $\mathcal{A}$ is composite, then it can be decomposed into sufficiently small safety DFAs.
	Together, we have that if $\mathcal{A}$ is composite, then there is a decomposition of $\mathcal{A}$ into sufficiently small minimal safety DFAs which possess an accepting sink.
	
	We prove that $\mathcal{A}$ is prime by proving that every sufficiently small minimal safety DFA which possesses an accepting sink and which rejects $w$ decides a language that does not contain $\opLang{\mathcal{A}}$. In other words, every such DFA rejects a word that $\mathcal{A}$ accepts. Namely, it rejects a pumping of $w$ that $\mathcal{A}$ accepts. Therefore, every such DFA is not in $\opDecompSet{\mathcal{A}}$.
	
	Formalizing this, let $\mathcal{B} = (S, \Sigma, s_I, \eta, G)$ be a minimal safety DFA with $\opSize{\mathcal{B}} < \opSize{\mathcal{A}}$ and $w \notin \opLang{\mathcal{B}}$ which possesses an accepting sink. Let $s_+$ denote the accepting sink. Let $s_-$ denote the rejecting sink. To summarize, the following holds:
	\begin{itemize}
		\item $\opSize{\mathcal{B}} < \opSize{\mathcal{A}} = \symLinLenVar+3$,
		\item $\mathcal{B}$ is minimal,
		\item $\mathcal{B}$ is a safety DFA with rejecting sink $s_-$,
		\item $\mathcal{B}$ has an accepting sink $s_+$,
		\item $w \notin \opLang{\mathcal{B}}$.
	\end{itemize}
	We prove $\opLang{\mathcal{A}} \not\subseteq \opLang{\mathcal{B}}$. Thereby, we prove that every DFA in $\opDecompSet{\mathcal{A}}$ accepts $w$.
	
	If $\sigma_1 \dots \sigma_\symLinLenVar \notin \opLang{\mathcal{B}}$, then we immediately have $\opLang{\mathcal{A}} \not\subseteq \opLang{\mathcal{B}}$ because $\sigma_1 \dots \sigma_\symLinLenVar \in \opLang{\mathcal{A}}$. Therefore, we assume $\sigma_1 \dots \sigma_\symLinLenVar \in \opLang{\mathcal{B}}$.
	
	We have $\sigma_1 \dots \sigma_\symLinLenVar \in \opLang{\mathcal{B}}$ and $\sigma_1 \dots \sigma_{\symLinLenVar+1} = w \notin \opLang{\mathcal{B}}$. In particular, this means $\eta(s_0, \sigma_1 \dots \sigma_\symLinLenVar) \notin \{s_+, s_-\}$. Now note that $\opSize{S \setminus \{s_+, s_-\}} = \opSize{\mathcal{B}} - 2 < \opSize{\mathcal{A}} - 2 = (\symLinLenVar + 3) - 2 = \symLinLenVar + 1$, meaning that $\mathcal{B}$ has at most $\symLinLenVar$ non-sinks. Clearly, this means that at least one non-sink appears more than once in the initial run of $\mathcal{B}$ on $\sigma_1 \dots \sigma_\symLinLenVar$. Thus, there exist $i, j \in \{1, \dots, \symLinLenVar+1\}, i < j$ with $\eta(s_I, \sigma_1 \dots \sigma_{i-1}) = \eta(s_I, \sigma_1 \dots \sigma_{j-1})$. This implies, for every $l \in \symNatNum$:
	\begin{align*}
		& \eta(s_I, \opPump{w}{i,j;l}) \\
		=	& \eta(s_I, \sigma_1 \dots \sigma_{i-1} (\sigma_i \dots \sigma_{j-1})^l \sigma_j \dots \sigma_{\symLinLenVar+1}) \\
		=	& \eta(s_I, \sigma_1 \dots \sigma_{i-1} (\sigma_i \dots \sigma_{j-1})^1 \sigma_j \dots \sigma_{\symLinLenVar+1}) \\
		=	& \eta(s_I, w) \\
		=	& s_-.
	\end{align*}
	But by selection of $w$, there is at least one $l \in \symNatNum$ with $\delta(q_0, \opPump{w}{i,j;l}) \neq q_-$. Therefore, for this $l$, we have $\opPump{w}{i,j;l} \in \opLang{\mathcal{A}} \setminus \opLang{\mathcal{B}}$. Thus, we have $\opLang{\mathcal{A}} \not\subseteq \opLang{\mathcal{B}}$.
	
	As argued above, proving $\opLang{\mathcal{A}} \not\subseteq \opLang{\mathcal{B}}$ is sufficient to prove that $w$ is a primality witness of $\mathcal{A}$. This implies the primality of $\mathcal{A}$. The proof is complete.
\end{proof}

\cref{lem:pmADFAMinLinSafeWithPPComposite}, proven in Appendix \ref{par:pmADFAMinLinSafeWithPPComposite}, and \cref{lem:pmADFAMinLinSafeWithoutPPPrime}, proven here in Appendix \ref{par:pmADFAMinLinSafeWithoutPPPrime}, together imply \cref{the:pmADFAMinLinSafeCompositionality}. Thus, we have characterized the compositionality of minimal linear safety $\symADFAp$s by proving that every minimal linear safety $\symADFAp$ is prime if and only if it does not have the \symPPShort.

\subsubsection{From the Characterization to \fontComplexityClass{NP}-Hardness – Proofs}
\label{subsubsec:characterizationToNPHardnessProofs}
We have characterized the compositionality of minimal linear safety $\symADFAp$s and thereby also of \symCNFDFA{}s. Now we can return to our reduction and prove its correctness.

For \cref{subsubsec:characterizationToNPHardnessProofs}, let $\symLinLenVar = \opLinLen{\mathcal{A}_\Phi}$ and $\symCVar = (\symLinLenVar+1) - (r+1)$.
Clearly, we have $\symLinLenVar > r \geq 1$ and $\symCVar > 0$.

Together, \cref{lem:aPhiIsMLSADFAp,the:pmADFAMinLinSafeCompositionality} imply that $\mathcal{A}_\Phi$ is prime if and only if it does not have the \symPPShort. Following \cref{def:pP}, this means that exactly the max-visiting words of $\mathcal{A}_\Phi$ are relevant for the compositionality of $\mathcal{A}_\Phi$. With \cref{lem:aPhi}, we make observations about the max-visiting words of $\mathcal{A}_\Phi$. We restate this result:

\lemaPhi*

We mentioned above that, in the proofs of \cref{lem:aPhiIsMLSADFAp,lem:aPhi}, we make observations about the structure of $\mathcal{A}_\Phi$ and that we therefore prove these two results together. Thus, the following proof covers both results.

\begin{proof}[Proof of \cref{lem:aPhiIsMLSADFAp,lem:aPhi}]
	We start by introducing some notation.
	Let $U = Q_\Phi \setminus \{p_+, p_-\}$ be the set of non-sinks of $\mathcal{A}_\Phi$.
	Let $\symLinLenVar' = \opSize{\mathcal{A}_\Phi} - 3$. Let $\symCVar' = (\symLinLenVar'+1) - (r+1)$.
	Let $J = \{u d c^{\symCVar' - 1} \mid u \in \{0,1\}^r\}$. For $q, q' \in Q_\Phi$ we write $q \preceq q'$ if $q'$ is reachable from $q$.
	
	Next, we make four simple observations about $\mathcal{A}_\Phi$. They can be easily verified by inspecting \cref{def:aPhi,fig:aPhiaPhiXTransitions}.
	\begin{enumerate}[(1)]
		\item Set $J$ contains exactly the words on which the initial run of $\mathcal{A}_\Phi$ contains exactly the non-sinks of $\mathcal{A}_\Phi$. For each $w \in J$, the non-sinks appear exactly once in the initial run of $\mathcal{A}_\Phi$ on $w$, and they appear in the following order, which we will denote by $\symRun_J$:
		\begin{align*}
			& p_0, \dots, p_r, p_c^0, \\
			& p_1^1, \hat{p}_1^1, \dots, p_r^1, \hat{p}_r^1, p_c^1, \\
			& \dots, \\
			& p_1^{s-1}, \hat{p}_1^{s-1}, \dots, p_r^{s-1}, \hat{p}_r^{s-1}, p_c^{s-1}, \\
			& p_1^s, \hat{p}_1^s, \dots, p_r^s, \hat{p}_r^s, p_c^s.
		\end{align*}
		Thus, for $q, q' \in U, q \neq q'$ we have $q \preceq q'$ if $q$ appears prior to $q'$ in $\symRun_J$.
		\item Consider $q, q' \in U, q \neq q'$ where $q$ appears prior to $q'$ in $\symRun_J$. Then there is no transition from $q'$ to $q$. Intuitively, the order $\symRun_J$ describes the "forward direction" in $\mathcal{A}_\Phi$, and there are no "backward transitions" in $\mathcal{A}_\Phi$. Clearly, this implies $q' \npreceq q$ for all $q, q' \in U, q \neq q'$ where $q$ appears prior to $q'$ in $\symRun_J$. Together with Observation (1), this implies, for all $q,q' \in U, q \neq q'$, that $q \preceq q'$ if and only if $q$ appears prior to $q'$ in $\symRun_J$.
		\item The DFA $\mathcal{A}_\Phi$ has two sinks, namely the accepting sink $p_+$ and the rejecting sink $p_-$. These two sinks are the only two states of $\mathcal{A}_\Phi$ which have self-loops.
		\item Both sinks $p_+, p_-$ are reachable from $p_c^s$, which is the last state in $\symRun_J$. This immediately implies $q \preceq p_+$ and $q \preceq p_-$ for every $q \in U$.
	\end{enumerate}
	
	With the notation and these observations in hand, we now consider the two lemmas.
	
	We begin with \cref{lem:aPhiIsMLSADFAp}.
	
	We start by arguing that $\mathcal{A}_\Phi$ is an $\symADFAp$. This follows from Observations (1)–(3). There is only one "direction" in which the non-sinks of $\mathcal{A}_\Phi$ can be traversed, which is specified by $\symRun_J$. No "backward transitions" exist. And only the two sinks of $\mathcal{A}_\Phi$ have self-loops. Therefore, only the two sinks can appear more than once in a run of $\mathcal{A}_\Phi$. Additionally, $\mathcal{A}_\Phi$ has both accepting and rejecting sinks. In total, we have that $\mathcal{A}_\Phi$ is an $\symADFAp$.
	
	Next, we argue that $\mathcal{A}_\Phi$ is minimal. With Observations (1) and (4), every state in $\mathcal{A}_\Phi$ is reachable. Let $q, q' \in Q_\Phi, q \neq q'$. We argue that there is a word $w \in \Sigma^*$ with $\delta_\Phi(q, w) \in F_\Phi$ if and only if $\delta_\Phi(q', w) \notin F_\Phi$. Together with the reachability of every state, this implies minimality. We make a case distinction, specifying an appropriate $w$ for each case.
	\begin{description}
		\item[Case 1: \normalfont{Both states $q, q'$ are sinks.}] Then one of the states is the accepting sink and the other the rejecting sink. Thus, we have $q \in F_\Phi$ if and only if $q' \notin F_\Phi$. We can select $w = \varepsilon$. We are done with Case 1.
		
		\item[Case 2: \normalfont{Exactly one of the states $q, q'$ is a sink.}] W.l.o.g.\ let $q'$ be the sink.
		\begin{description}
			\item[Case 2.1: \normalfont{$q' = p_-$.}] Then we have $q \in F_\Phi$ and $q' \notin F_\Phi$. We can select $w = \varepsilon$. We are done with Case 2.1.
			
			\item[Case 2.2: \normalfont{$q' = p_+$.}] Then, with Observation (4), there is a word $w' \in \Sigma^+$ with $\delta_\Phi(q, w') = p_- \notin F_\Phi$ and $\delta_\Phi(q', w') = p_+ \in F$. We can select $w = w'$. We are done with Case 2.2.
		\end{description}
		With Cases 2.1 and 2.2, we are done with Case 2.
		
		\item[Case 3: \normalfont{Both states $q, q'$ are non-sinks.}] With Observations (1) and (2), we have either $q \preceq q'$ or $q' \preceq q$ but not both. W.l.o.g.\ let $q \preceq q'$. 
		
		Note that the states reachable from $q$ are the two sinks and the states that appear after $q$ in $\symRun_J$.
		Similarly, the states reachable from $q'$ are the two sinks and the states that appear after $q'$ in $\symRun_J$.
		Thus, with $q \preceq q'$, there are strictly more states, and in particular strictly more non-sinks, reachable from $q$ than from $q'$.
		
		Let $w' = 0^r d c^{\symCVar - 1}$. Let $u',v' \in \Sigma^*$ with $w' = u' v'$ and $\delta_\Phi(p_0, u') = q$. Then we have $\delta_\Phi(q, v') = p_c^s$. Note that the run of $\mathcal{A}_\Phi$ on $v'$ starting in $q$ contains exactly the non-sinks reachable from $q$. Now consider the run of $\mathcal{A}_\Phi$ on $v'$ starting in $q'$. Because strictly more non-sinks are reachable from $q$ than from $q'$, because the run of $\mathcal{A}_\Phi$ on $v'$ starting in $q$ contains all non-sinks reachable from $q$, and because $\mathcal{A}_\Phi$ is acyclic, we necessarily have $\delta_\Phi(q', v') \in \{p_+, p_-\}$.
		\begin{description}
			\item[Case 3.1: \normalfont{$\delta_\Phi(q', v') = p_-$.}] Then we have $\delta_\Phi(q, v') \in F_\Phi$ and $\delta_\Phi(q', v') \notin F_\Phi$. We can select $w = v'$. We are done with Case 3.1.
			
			\item[Case 3.2: \normalfont{$\delta_\Phi(q', v') = p_+$.}] Then we have $\delta_\Phi(q, v'c) \notin F_\Phi$ and $\delta_\Phi(q', v'c) \in F_\Phi$. We can select $w = v' c$. We are done with Case 3.2.
		\end{description}
		With Cases 3.1 and 3.2, we are done with Case 3.
	\end{description}
	With Cases 1–3, we have argued that there is a word $w \in \Sigma^*$ so that $\delta_\Phi(q, w) \in F_\Phi$ if and only if $\delta_\Phi(q', w) \notin F_\Phi$. This implies the minimality of $\mathcal{A}_\Phi$.
	
	So far, we have argued that $\mathcal{A}_\Phi$ is a minimal $\symADFAp$. Now we use \cref{lem:aDFASizeAndLinearity} to argue for the linearity of $\mathcal{A}_\Phi$. With \cref{lem:aDFASizeAndLinearity}~(i), we have $\symLinLenVar \leq \symLinLenVar'$. The words in $J$ witness $\symLinLenVar \geq \symLinLenVar'$. Thus, we have  $\symLinLenVar = \symLinLenVar'$. With \cref{lem:aDFASizeAndLinearity}~(ii), this implies the linearity of $\mathcal{A}_\Phi$. 
	
	Finally, note that $\mathcal{A}_\Phi$ is minimal and has exactly one rejecting state, which is a rejecting sink. Thus, $\mathcal{A}_\Phi$ is a safety DFA.
	
	In total, we have argued that $\mathcal{A}_\Phi$ is an $\symADFAp$, is minimal, is linear and is a safety DFA. In other words, we have argued that $\mathcal{A}_\Phi$ is a minimal linear safety $\symADFAp$. The proof of \cref{lem:aPhiIsMLSADFAp} is complete.
	
	Next, we consider \cref{lem:aPhi}.
	
	We begin with Assertion (i). This assertions follows immediately from Observation (1), which states that $J$ contains exactly the words on which the initial run of $\mathcal{A}_\Phi$ contains exactly the non-sinks of $\mathcal{A}_\Phi$. The last state of any such run is $p_c^s$. Because $\delta_\Phi(p_c^s, c) = p_-$ and $\delta_\Phi(p_c^s, \sigma) = p_+$ for all $\sigma \in \Sigma \setminus \{c\}$, we have $\opDiffWords{\mathcal{A}_\Phi} = \{w c \mid w \in J\} = \{u d c^{\symCVar'} \mid u \in \{0,1\}^r\} = \{u d c^\symCVar \mid u \in \{0,1\}^r\}$.
	For the last equality, we used that, with the linearity of $\mathcal{A}_\Phi$, we have $\symLinLenVar' = \symLinLenVar$ and therefore also $\symCVar' = \symCVar$.
	The proof of Assertion (i) is complete.
	
	To finish, we consider Assertion (ii).
	Let $w \in \opDiffWords{\mathcal{A}_\Phi}$. Towards a proof by contraposition, let $i,j \in \{1, \dots, \symLinLenVar+1\}, i < j$ with $i \neq 1$ or $j \neq r+1$.
	We prove that the \symPPShort-condition does not hold for $w,i,j$.
	That is, we prove that there is an $l \in \symNatNum$ so that $\delta_\Phi(p_0, \opPump{w}{i,j;l}) \neq p_-$. Indeed, we prove that this holds for $l = 0$, so $\delta_\Phi(p_0, \opPump{w}{i,j,0}) \neq p_-$.
	We employ a case distinction over the values $i,j$. For every combination $i,j$, we argue that $\delta_\Phi(p_0, \opPump{w}{i,j;0}) \neq p_-$. 
	Using Assertion (i), let $u \in \{0,1\}^r$ so that $w = u d c^\symCVar$.
	\begin{description}
		\item[Case 1: \normalfont{$i > 1$.}]$ $
		\begin{description}
			\item[Case 1.1: \normalfont{$i > r + 1$.}] Then we remove only $c$'s, meaning that $\opPump{w}{i,j;0}= u d c^x$ with $x \in \symNatNumGeq{1}, x < \symCVar$. Thus, the situation is equivalent to removing a suffix of $w$. Clearly, this means that the initial run of $\mathcal{A}_\Phi$ on $\opPump{w}{i,j;0}$ does not reach $p_-$ but ends in one of the non-sinks. Thus, we have $\delta_\Phi(p_0, \opPump{w}{i,j;0}) \neq p_-$. We are done with Case 1.1.
			
			\item[Case 1.2: \normalfont{$i = r + 1$.}] Then we definitely remove the $d$ and potentially also $c$'s, meaning that $\opPump{w}{i,j;0}= u c^x$ with $x \in \symNatNumGeq{1}, x \leq \symCVar$. With $\delta_\Phi(p_0, u) = p_r$ and $\delta_\Phi(p_r, c) = p_+$, we have $\delta_\Phi(p_0, \opPump{w}{i,j;0}) = p_+ \neq p_-$. We are done with Case 1.2.
			
			\item[Case 1.3: \normalfont{$1 < i < r + 1$.}] Then we remove some but not all $0$'s and $1$'s and potentially also the $d$ and $c$'s, meaning that $\opPump{w}{i,j;0}= v e c^x$ with $v \in \{0,1\}^{r'}, r' \in \symNatNumGeq{1}, r' < r$ and $e \in \{d, \varepsilon\}$ and $x \in \symNatNumGeq{1}, x \leq \symCVar$. With $\delta_\Phi(p_0, v) = p_{r'}$ and $\delta_\Phi(p_{r'}, c) = p_+$ and $\delta_\Phi(p_{r'}, d) = p_+$, we have $\delta_\Phi(p_0, \opPump{w}{i,j;0}) = p_+ \neq p_-$. We are done with Case 1.3.
		\end{description}
		With Cases 1.1–1.3, we have $\delta_\Phi(p_0, \opPump{w}{i,j;0}) \neq p_-$. We are done with Case 1.
		
		\item[Case 2: \normalfont{$i = 1$.}] This implies $j \neq r+1$.
		\begin{description}
			\item[Case 2.1: \normalfont{$j > r + 1$.}] Then we remove all $0$'s and $1$'s as well as the $d$ and potentially also $c$'s, meaning that $w = c^x$ with $x \in \symNatNumGeq{1}, x \leq \symCVar$. With $\delta_\Phi(p_0, c) = p_+ \neq p_-$, we have $\delta_\Phi(p_0, \opPump{w}{i,j;0}) = p_+ \neq p_-$. We are done with Case 2.1.
			
			\item[Case 2.2: \normalfont{$j < r + 1$.}] Then we remove some but not all $0$'s and $1$'s, meaning that $w = v d c^\symCVar$ with $v \in \{0,1\}^{r'}, r' \in \symNatNumGeq{1}, r' < r$. With $\delta_\Phi(p_0, v) = p_{r'}$ and $\delta_\Phi(p_{r'}, d) = p_+$, we have $\delta_\Phi(p_0, \opPump{w}{i,j;0}) = p_+ \neq p_-$. We are done with Case 2.2.
		\end{description}
		With Cases 2.1 and 2.2, we have $\delta_\Phi(p_0, \opPump{w}{i,j;0}) \neq p_-$. We are done with Case 2.
	\end{description}
	With Cases 1 and 2, we have $\delta_\Phi(p_0, \opPump{w}{i,j;0}) \neq p_-$.
	
	We have proven that, for every $w \in \opDiffWords{\mathcal{A}_\Phi}$ and every $i,j \in \{1, \dots, n+1\}, i < j$ with $i \neq 1$ or $j \neq r+1$, we have $\delta_\Phi(p_0, \opPump{w}{i,j;0}) \neq p_-$. The proof of Assertion (ii) is complete.
	
	We have proven \cref{lem:aPhiIsMLSADFAp} as well as \cref{lem:aPhi} (i) and (ii). The proof is complete.
\end{proof}

Next, we restate and then prove \cref{lem:aPhiOnerpOnePumpings}, which establishes the connection between the satisfiability of $\Phi$ and the compositionality of $\mathcal{A}_\Phi$.

\lemaPhiOnerpOnePumpings*

\begin{proof}[Proof of \cref{lem:aPhiOnerpOnePumpings}]
	Let $u \in \{0,1\}^r$ be an assignment string and let $w = udc^\symCVar \in \opDiffWords{\mathcal{A}_\Phi}$ be the corresponding max-visiting word. 
	
	We prove that the \symPPShort-condition does not hold for $w,i=1,j=r+1$ if and only if the assignment $\gamma_u$ induced by $u$ satisfies $\Phi$.
	We prove the two directions of the equivalence separately.
	
	We consider the first direction of the equivalence. Towards a proof by contraposition, we assume that $\gamma_u$ does not satisfy $\Phi$. We prove that the \symPPShort-condition holds for $w,i=1,j=r+1$. Let $l \in \symNatNum$. We prove $\delta_\Phi(p_0, \opPump{w}{1,r+1;l}) = q_-$ by employing a case distinction over $l$.
	
	Before we begin the case distinction, we consider the assignment $\gamma_u$ induced by assignment string $u$. Because $\Phi$ is not satisfiable, there is at least one clause in $\Phi$ that is not satisfied by $\gamma_u$. Let $k$ be the minimum value in $\{1, \dots, s\}$ so that $\gamma_u$ does not satisfy the $k$-th clause $(\symElem_1^k \vee \dots \vee \symElem_r^k)$ of $\Phi$.
	
	Now we begin the case distinction over $l$.
	\begin{description}
		\item[Case 1: \normalfont{$l = 0$.}] Then we have $\opPump{w}{1,r+1;l} = d c^\symCVar$. With $\delta_\Phi(p_0, d) = p_-$, we have $\delta_\Phi(p_0, \opPump{w}{1,r+1;l}) = p_-$. We are done with Case 1.
		
		\item[Case 2: \normalfont{$l = 1$.}] Then we have $\opPump{w}{1,r+1;l} = w$ and therefore trivially $\delta_\Phi(p_0, \opPump{w}{1,r+1;l}) = p_-$. We are done with Case 2.
		
		\item[Case 3: \normalfont{$1 < l < k + 1$.}] Then we have $\opPump{w}{1,r+1;l} = u^l d c^\symCVar$ for $1 < l < k + 1$. It is easy to see that, with \cref{lem:aPhiClauseRows}, we have $\delta_\Phi(p_0,u^l) = \hat{p}_r^{l-1}$. After the first reading of assignment string $u$, the $\symADFAp$ $\mathcal{A}_\Phi$ is in $p_r$. Then $\mathcal{A}_\Phi$ reads $u$ a further $l-1$ times. For each reading, it traverses a clause row of a clause satisfied by $\gamma_u$. With \cref{lem:aPhiClauseRows}, it ends each clause row in the final $\hat{p}$-state of that row. Thus, $\mathcal{A}_\Phi$ is in $\hat{p}_r^{l-1}$ after reading $u^l$, meaning $\delta_\Phi(p_0,u^l) = \hat{p}_r^{l-1}$.
		
		With $\delta_\Phi(p_0,u^l) = \hat{p}_r^{l-1}$ and $\delta_\Phi(\hat{p}_r^{l-1},d) = p_-$, we have $\delta_\Phi(p_0, \opPump{w}{1,r+1;l}) = p_-$. We are done with Case 3.
		
		\item[Case 4: \normalfont{$l \geq k + 1$.}] Then we have $\opPump{w}{1,r+1;l} = u^l d c^\symCVar$ for $l \geq k + 1$. As we argued in Case 3, we have $\delta_\Phi(p_0,u^k) = \hat{p}_r^{k-1}$. Because $\gamma_u$ does not satisfy the $k$-th clause of $\Phi$, we then have $\delta_\Phi(p_0,u^{k+1}) = p_r^k$ with \cref{lem:aPhiClauseRows}. 
		
		Now note that $\delta_\Phi(p_r^k,\sigma) = p_-$ for $\sigma \in \{0,1,d\}$. So both for $l = k+1$ (where the next letter is $d$) and $l > k+1$ (where the next letter is $0$ or $1$), the $\symADFAp$ $\mathcal{A}_\Phi$ enters $p_-$ for the first letter after $u^{k+1}$. Formally, this means that, with $\delta_\Phi(p_0,u^{k+1}) = p_r^k$ and $\delta_\Phi(p_r^k,\sigma) = p_-$ for $\sigma \in \{0,1,d\}$, we have $\delta_\Phi(p_0, \opPump{w}{1,r+1;l}) = p_-$. We are done with Case 4.
	\end{description}
	With Cases 1–4, we have $\delta_\Phi(p_0, \opPump{w}{1,r+1;l}) = q_-$.
	
	We have assumed that $\gamma_u$ does not satisfy $\Phi$. Under this assumption, we have proven that the \symPPShort-condition holds for $w, i=1, j=r+1$. We have proven the first direction of the equivalence.
	
	We consider the second direction of the equivalence. We assume that $\gamma_u$ satisfies $\Phi$. We prove that the \symPPShort-condition does not hold for $w,i=1,j=r+1$. To be precise, we prove that $\delta_\Phi(p_0, \opPump{w}{1,r+1;s+2}) = q_+ \neq q_-$. Thus, we prove that $l = s+2$ witnesses that the \symPPShort-condition does not hold for $w,i=1,j=r+1$. Indeed, every $l \geq s+2$ is a suitable witness.
	
	Because $\gamma_u$ satisfies $\Phi$, every clause of $\Phi$ is satisfied by $\gamma_u$. Thus, with the same argumentation that we employed for Case 3 in the proof of the first direction of the equivalence, we have $\delta_\Phi(p_0, u) = p_r$ and $\delta_\Phi(p_0, u^{l'}) = \hat{p}_r^{l'-1}$ for all $l' \in \symNatNum, 1 < l' < s+2$. In particular, we have $\delta_\Phi(p_0, u^{s+1}) = \hat{p}_r^s$. Now note that $\delta_\Phi(\hat{p}_r^s, \sigma) = p_+$ for all $\sigma \in \{0,1\}$. Together with $\delta_\Phi(p_0, u^{s+1}) = \hat{p}_r^s$, this implies $\delta_\Phi(p_0, u^{s+2}) = p_+$ and therefore $\delta_\Phi(p_0, \opPump{w}{1,r+1;s+2}) = p_+ \neq p_-$. Thus, we have that $l = s+2$, and indeed every $l \geq s+2$, witnesses that the \symPPShort-condition does not hold for $w,i=1,j=r+1$.
	
	We have assumed that $\gamma_u$ satisfies $\Phi$. Under this assumption, we have proven that the \symPPShort-condition does not hold for $w, i=1, j=r+1$. We have proven the second direction of the equivalence.
	
	We have proven both directions of the equivalence. The proof is complete.
\end{proof}

With \cref{the:pmADFAMinLinSafeCompositionality,lem:aPhiIsMLSADFAp,lem:aPhi,lem:aPhiOnerpOnePumpings}, we have everything we need to prove \cref{lem:aPhiCharacterization}.
For the formal proof of \cref{lem:aPhiCharacterization}, we refer to \cref{subsubsec:characterizationToNPHardness}.

We have proven the correctness of our reduction.
For completeness sake, we now state the very short formal proof of \cref{the:primeDFANPHard}:

\begin{proof}[Proof of \cref{the:primeDFANPHard}]
	We polynomially reduce \probCNFSAT{} to \probPrimeDFA{}. We use our \symCNFDFA-construction.
	
	For any CNF-formula $\Phi$, our construction, outlined in \cref{fig:aPhiaPhiXTransitions} and formally defined in \cref{def:aPhi}, yields a \symCNFDFA{} $\mathcal{A}_\Phi$. Clearly, the construction is possible in polytime. And with \cref{lem:aPhiCharacterization}, $\Phi$ is satisfiable if and only if $\mathcal{A}_\Phi$ is prime. Thus, the construction can indeed serve as a polynomial reduction from \probCNFSAT{} to \probPrimeDFA{}. The proof is complete.
\end{proof}

Here in \cref{sec:npHardness}, we have offered proofs for every result presented but not already proven in \cref{sec:npHardnessShortV2}. With this, we have formally proven the \fontComplexityClass{NP}-hardness of \probPrimeDFA{}.

\subsection{\fontComplexityClass{NP}-Completeness of $\probPrimeADFApMLS{}$ – Proofs}
\label{subsec:aDFApMLSNPCompleteProofs}
In \cref{subsec:constructionAPhiProofs,subsec:correctnessAPhiProofs}, we gave the formal proofs for the improvement of the lower complexity bound of the general problem \probPrimeDFA{}. Here, we formally prove that this improved lower bound is tight for the restriction of \probPrimeDFA{} to minimal linear safety $\symADFAp$s.

As mentioned in \cref{subsec:aDFApMLSNPComplete}, we denote the restriction of \probPrimeDFA{} to DFAs whose respective minimal DFA is a minimal linear safety $\symADFAp$ by $\probPrimeADFApMLS{}$.
We prove:

\theprimeADFApmLinSafeNPComplete*

As argued in \cref{subsec:aDFApMLSNPComplete}, from the proof of \cref{the:primeDFANPHard}, we immediately get:

\lemprimeADFApmLinSafeNPHard*

To arrive at \cref{the:primeADFApmLinSafeNPComplete}, we now prove:

\lemprimeADFApmLinSafeInNP*

\begin{proof}[Proof of \cref{lem:primeADFApmLinSafeInNP}]
	To prove that $\probPrimeADFApMLS{}$ is in \fontComplexityClass{NP}, we describe a guess-and-verify \fontComplexityClass{NP}-algorithm which exploits \cref{the:pmADFAMinLinSafeCompositionality}.
	The basic idea of the algorithm is to guess a word and then to verify that this words breaks the \symPPShort, meaning that the word witnesses that the respective DFA does not have the \symPPShort.
	
	We begin by describing the guess-and-verify algorithm. Then we argue that it is indeed an \fontComplexityClass{NP}-algorithm for $\probPrimeADFApMLS{}$ by arguing that it works in nondeterministic polytime and that it works correctly for $\probPrimeADFApMLS{}$.
	
	It is well known that DFA minimization is possible in polytime. Therefore, for every given DFA whose respective minimal DFA is a minimal linear safety $\symADFAp$, it is possible to calculate its minimal DFA in polytime in a precomputation step before the actual guess-and-verify algorithm begins. Thus, it is sufficient to describe a guess-and-verify algorithm that decides primality of minimal linear safety $\symADFAp$s in nondeterministic polytime.
	
	Let $\mathcal{A} = (Q, \Sigma, q_I, \delta, F)$ be a minimal linear safety $\symADFAp$.
	Let $q_+, q_-$ denote the accepting and rejecting sink. The guess-and-verify algorithm works in the following manner:
	\begin{enumerate}
		\item
		The algorithm calculates $\opLinLen{\mathcal{A}}$.
		
		Because $\mathcal{A}$ is a minimal linear $\symADFAp$, we have $\opLinLen{\mathcal{A}} = \opSize{\mathcal{A}} - 3$ with \cref{lem:aDFASizeAndLinearity}. Thus, the algorithm can calculate $\opLinLen{\mathcal{A}}$ by inspecting the size of $\mathcal{A}$. From here on, let $\symLinLenVar = \opLinLen{\mathcal{A}}$.
		\item
		If $\mathcal{A}$ is trivial, that is, if $\symLinLenVar = 0$, then the algorithm accepts.
		\item
		The algorithm tries to guess a word from $\opDiffWords{\mathcal{A}}$.
		
		To do so, it guesses a word $w = \sigma_1 \dots \sigma_{\symLinLenVar+1} \in \Sigma^{\symLinLenVar+1}$. It then simulates the initial run of $\mathcal{A}$ on $w$. It checks whether $\delta(q_I, \sigma_1 \dots \sigma_\symLinLenVar) \notin \{q_+,q_-\}$ and $\delta(q_I, w) = q_-$ hold. If both hold, then we have $w \in \opDiffWords{\mathcal{A}}$ and the algorithm continues. Else, we have $w \notin \opDiffWords{\mathcal{A}}$ and the algorithm rejects.
		\item
		The algorithm verifies that $w$ breaks the \symPPShort.
		
		To do so, the algorithm considers every pair $i,j \in \{1, \dots, \symLinLenVar+1\}, i < j$. For every such pair, the algorithm first calculates the minimum value $l_{i,j}' \in \symNatNum$ so that $|\sigma_1 \dots \sigma_{i-1} (\sigma_i \dots \sigma_{j-1})^{l_{i,j}'}| = (i - 1) + l_{i,j}' (j - i) \geq \symLinLenVar + 1$ holds.
		Then it checks whether $\delta(q_I, \opPump{w}{i,j;l}) \neq q_-$ holds for some $l \in \{0, \dots, l_{i,j}'\}$ by simulating the runs of $\mathcal{A}$ on the respective words.
		
		If the algorithm finds a pair $i,j$ for which $\delta(q_I, \opPump{w}{i,j;l}) = q_-$ holds for all $l \in \{0, \dots, l_{i,j}'\}$, it rejects.
		Otherwise, that is, if, for every pair $i,j$, there is an $l \in \{0, \dots, l_{i,j}'\}$ so that $\delta(q_I, \opPump{w}{i,j;l}) \neq q_-$ holds, it accepts.
	\end{enumerate}
	
	We have described the guess-and-verify algorithm. Now we argue that this algorithm is indeed an \fontComplexityClass{NP}-algorithm for $\probPrimeADFApMLS{}$.
	
	We start by arguing that the algorithm works in nondeterministic polytime on a given minimal linear safety $\symADFAp$ $\mathcal{A}$.
	
	In Step 1, the algorithm counts the states of $\mathcal{A}$ and subtracts the value three. Clearly, this can be done in polytime.
	In Step 2, the algorithm checks whether the value calculated in Step 1 is equal to zero. This, too, can be done in polytime.
	In Step 3, the algorithm guesses a word of length $\symLinLenVar + 1 = \opBigO{\opSize{\mathcal{A}}}$ and simulates the initial run of $\mathcal{A}$ on this word. Obviously, this can be done in nondeterministic polytime.
	In Step 4, the algorithm inspects all pairs $i,j \in \{1, \dots, \symLinLenVar+1\}, i < j$. So it inspects $\opBigO{\opSize{\mathcal{A}}^2}$ pairs. For every pair, the algorithm calculates $l_{i,j}'$. Obviously, even calculating $l_{i,j}'$ naively by counting $l$ up from $0$ until an $l$ with $(i-1) + l (j-i) \geq \symLinLenVar+1$ is reached can be done in polytime. Thus, calculating $l_{i,j}'$ can be done in polytime. With $l_{i,j}'$ in hand, the algorithm simulates the run of $\mathcal{A}$ on $\opPump{w}{i,j;l}$ for every $l \in \{0, \dots, l_{i,j}'\}$. Note that $l_{i,j}' = \opBigO{\opSize{\mathcal{A}}}$ and $|\opPump{w}{i,j;l}| = \opBigO{\opSize{\mathcal{A}}}$ for every $l \in \{0, \dots, l_{i,j}'\}$, meaning that the algorithm simulates $\opBigO{\opSize{\mathcal{A}}}$ runs on words of length $\opBigO{\opSize{\mathcal{A}}}$ for every pair $i,j$. Obviously, this can be done in polytime. Thus, the inspection of each of the polynomially many pairs $i,j$ can be done in polytime.
	
	In total, Steps 1, 2 and 4 can be done in polytime, and Step 3 can be done in nondeterministic polytime. Thus, the algorithm works in nondeterministic polytime.
	
	To conclude the proof, we argue that the algorithm works correctly for $\probPrimeADFApMLS{}$. To do so, let $\mathcal{A}$ be a minimal linear safety $\symADFAp$ with sinks $q_+, q_-$ which serves as an input for the algorithm. Let $\symLinLenVar = \opLinLen{\mathcal{A}}$.
	
	First, we argue that the algorithm accepts if $\mathcal{A}$ is prime. To do so, we assume that $\mathcal{A}$ is prime.
	
	If $\mathcal{A}$ is trivial, that is, if $\symLinLenVar = 0$, then the algorithm calculates $\symLinLenVar$ correctly in Step 1 and accepts in Step 2. Thus, the algorithm accepts in Step 2 if $\mathcal{A}$ is prime and trivial.
	
	We assume that $\mathcal{A}$ is non-trivial. With \cref{the:pmADFAMinLinSafeCompositionality}, $\mathcal{A}$ does not have the \symPPShort. Let $w = \sigma_1 \dots \sigma_{\symLinLenVar+1} \in \opDiffWords{\mathcal{A}}$ be a word which breaks the \symPPShort.
	
	In Step 1, the algorithm calculates $\symLinLenVar$ correctly.
	In Step 2, the algorithm checks whether $\mathcal{A}$ is trivial by inspecting the calculated value. Because $\mathcal{A}$ is by assumption non-trivial, we have $\symLinLenVar > 0$, so the algorithm does not accept.
	In Step 3, the algorithm can guess $w$. Doing so, it does not reject in Step 3 and moves on to Step 4.
	In Step 4, the algorithm inspects every pair $i,j$ and checks whether $\delta(q_I, \opPump{w}{i,j;l}) \neq q_-$ holds for some $l \in \{0,\dots,l_{i,j}'\}$.
	
	Consider some pair $i,j$.
	Note that, with $|\sigma_1 \dots \sigma_{i-1} (\sigma_i \dots \sigma_{j-1})^{l_{i,j}'}| \geq \symLinLenVar + 1$, we have $\delta(q_I, \sigma_1 \dots \sigma_{i-1} (\sigma_i \dots \sigma_{j-1})^{l_{i,j}'}) \in \{q_+, q_-\}$ and therefore $\delta(q_I, \opPump{w}{i,j;l}) = \delta(q_I, \opPump{w}{i,j;l_{i,j}'})$ for every $l \in \symNatNum, l \geq l_{i,j}'$.
	This, together with $w$ breaking the \symPPShort{}, implies that there is an $l \in \{0,\dots,l_{i,j}'\}$ with $\delta(q_I, \opPump{w}{i,j;l}) \neq q_-$.
	Finding such an $l \in \{0,\dots,l_{i,j}'\}$ with $\delta(q_I, \opPump{w}{i,j;l}) \neq q_-$, the algorithm does not reject for this pair $i,j$. In total, the algorithm does not reject for any pair $i,j$ and accepts after inspecting every pair. Therefore, the algorithm accepts in Step 4 if $\mathcal{A}$ is prime and non-trivial.
	
	Second, we argue that the algorithm rejects if $\mathcal{A}$ is composite. To do so, we assume that $\mathcal{A}$ is composite. 
	This implies that $\mathcal{A}$ is non-trivial, meaning $\symLinLenVar > 0$. Further, it implies, with \cref{the:pmADFAMinLinSafeCompositionality}, that $\mathcal{A}$ has the \symPPShort.
	This means that, for every word $w \in \opDiffWords{\mathcal{A}}$, there exist $i,j \in \{1, \dots, \symLinLenVar+1\}, i < j$ so that the \symPPShort-condition holds for $w,i,j$.
	
	In Step 1, the algorithm calculates $\symLinLenVar$ correctly.
	In Step 2, the algorithm checks whether $\mathcal{A}$ is trivial by inspecting the calculated value. 
	As mentioned, $\mathcal{A}$ is non-trivial, so we have $\symLinLenVar > 0$ and the algorithm does not accept in Step 2 and moves on to Step 3.
	In Step 3, the algorithm guesses some word $w$ and checks if $w \in \opDiffWords{\mathcal{A}}$ holds. If $w \notin \opDiffWords{\mathcal{A}}$, the algorithm rejects. Otherwise, that is, if $w \in \opDiffWords{\mathcal{A}}$, the algorithm moves on to Step 4.
	In Step 4, the algorithm checks whether $w$ breaks the \symPPShort. However, because $\mathcal{A}$ has the \symPPShort{} and $w \in \opDiffWords{\mathcal{A}}$, there exist $i,j \in \{1, \dots, \symLinLenVar+1\}, i < j$ such that the \symPPShort-condition holds for $w,i,j$. The algorithm rejects for the first such pair it considers while inspecting all pairs $i,j$. Thus, the algorithm rejects in Step 3 or in Step 4 if $\mathcal{A}$ is composite.
	
	In total, the algorithm accepts if $\mathcal{A}$ is prime and rejects if $\mathcal{A}$ is composite. Thus, the algorithm works correctly for $\probPrimeADFApMLS{}$.
	
	We have described a guess-and-verify algorithm.
	We have proven that it works in nondeterministic polytime.
	Further, we have proven that it decides primality for minimal linear safety $\symADFAp$s.
	As argued, because DFA minimization can happen in a precomputation step before the actual guess-and-verify algorithm, we have thereby proven that $\probPrimeADFApMLS{}$ is in \fontComplexityClass{NP}. The proof is complete.
\end{proof}

Together, \cref{lem:primeADFApmLinSafeNPHard,lem:primeADFApmLinSafeInNP} immediately imply \cref{the:primeADFApmLinSafeNPComplete}, the \fontComplexityClass{NP}-completeness of $\probPrimeADFApMLS{}$.